\documentclass[pdflatex,sn-mathphys-num]{sn-jnl}
\usepackage{graphicx}
\usepackage{amsmath,amssymb,amsfonts}
\usepackage{amsthm}
\usepackage{mathrsfs}
\usepackage{mathtools}
\usepackage[title]{appendix}
\usepackage{booktabs}
\usepackage[section]{placeins}
\newcommand{\COP}{\mathrm{COP}}
\newcommand{\SPN}{\mathrm{SPN}}
\newcommand{\Kc}[2]{\mathcal{K}^{(#2)}_{#1}}         
\newcommand{\Cc}[2]{\mathcal{C}^{(#2)}_{#1}}         
\newcommand{\Sym}{\mathcal{S}}
\newcommand{\psd}{\succeq 0}
\newcommand{\R}{\mathbb{R}}
\newcommand{\Q}{\mathbb{Q}}
\newcommand{\N}{\mathbb{N}}
\newcommand{\horn}{H}
\newcommand{\ones}{J}
\newcommand{\eps}{\varepsilon}
\newcommand{\inner}[2]{\langle #1,\, #2\rangle}
\newcommand{\pA}[1]{p_{#1}}
\newcommand{\Mom}{\mathrm{M}}

\theoremstyle{plain}
\newtheorem{theorem}{Theorem}[section]
\newtheorem{lemma}[theorem]{Lemma}
\newtheorem{proposition}[theorem]{Proposition}
\newtheorem{corollary}[theorem]{Corollary}
\theoremstyle{definition}
\newtheorem{definition}[theorem]{Definition}
\newtheorem{example}[theorem]{Example}
\newtheorem{convention}[theorem]{Convention}
\theoremstyle{remark}
\newtheorem{remark}[theorem]{Remark}

\begin{document}
\hypersetup{pdftitle={Explicit Separators for Consecutive Levels of Parrilo's Sum-of-Squares Hierarchy over the Copositive Cone},pdfauthor={Jiachen Shen and Hui Zhong}}

\title[Separators for Parrilo's SOS hierarchy]{Explicit Separators for Consecutive Levels of Parrilo's Sum-of-Squares Hierarchy over the Copositive Cone}

\author[1]{\fnm{Jiachen}\sur{Shen}}\email{jshen28@cougarnet.uh.edu}
\author*[2]{\fnm{Hui}\sur{Zhong}}\email{zhongh7@miamioh.edu}
\affil[1]{\orgname{University of Houston}, \orgaddress{\city{Houston}, \state{Texas}, \country{USA}}}
\affil[2]{\orgname{Miami University}, \orgaddress{\city{Oxford}, \state{Ohio}, \country{USA}}}

\abstract{Parrilo's cones $\Kc{n}{r}$ form a nested sequence of semidefinite-representable inner approximations of the
copositive cone $\COP_n$. For $n=5$ their union is all of $\COP_5$, yet no single level attains it, and
whether consecutive levels actually differ had remained open beyond the classical first step. No explicit
matrix in $\Kc{n}{t}\setminus\Kc{n}{t-1}$ had, to our knowledge, been published for any $t\ge2$ and $n\ge5$. We settle the first three
cases. Explicit rational matrices, obtained from diagonal scalings of the Horn matrix shifted along a
positive interior direction, lie in $\Kc{5}{2}\setminus\Kc{5}{1}$, in $\Kc{5}{3}\setminus\Kc{5}{2}$, and in
$\Kc{5}{4}\setminus\Kc{5}{3}$, giving three consecutive strict inclusions
$\Kc{5}{1}\subsetneq\Kc{5}{2}\subsetneq\Kc{5}{3}\subsetneq\Kc{5}{4}$. Each is certified by an exact rational
Gram matrix and an exact rational dual moment functional, re-verified by a standalone program in integer
arithmetic. The separations are robust. One fixed certificate pair covers an interval of shifts of width
exceeding $3\cdot10^{-3}$, and $\Kc{5}{2}\setminus\Kc{5}{1}$ has nonempty interior. Combining a scaling
theorem of Dickinson, D\"ur, Gijben and Hildebrand with the completeness theorem of Schweighofer and Vargas
shows further that strict adjacent inclusions recur at arbitrarily large levels. All separators were located
by one threshold device: the least shift $\eps_r(M)$ carrying $M$ into $\Kc{5}{r}$ along an interior
direction is nonincreasing in $r$, and each strict drop between levels marks a window of separators.}

\keywords{copositive cone, sum-of-squares hierarchy, Parrilo relaxation, Horn matrix, exact certificates}

\pacs[MSC Classification]{90C22, 90C23, 14P10, 15B48}

\maketitle

\section{Introduction}\label{sec:intro}

A symmetric matrix $A\in\Sym^n$ is copositive if $x^{\top}Ax\ge0$ for every $x$ in the nonnegative orthant.
Optimization over the copositive cone
\[
  \COP_n=\bigl\{A\in\Sym^n : x^{\top}Ax\ge0 \text{ for all } x\ge0\bigr\}
\]
is a linear conic problem that encodes an extraordinary range of hard questions. They include the stability number of a
graph \citep{deKlerkPasechnik2002}, standard quadratic optimization \citep{BomzeDeKlerk2002}, and, through
Burer's reformulation, every mixed-binary quadratic program \citep{Burer2009}. See \citet{Dur2010} for a
survey. All the difficulty sits in the cone, and membership testing is co-NP-complete
\citep{MurtyKabadi1987}. Parrilo \citep{Parrilo2000} introduced the inner approximations that this paper
studies,
\[
  \Kc{n}{r}=\Bigl\{A\in\Sym^n \;:\;
  \bigl(\textstyle\sum_{i=1}^n x_i^2\bigr)^{r}\,\bigl(x^{\circ2}\bigr)^{\top}A\,\bigl(x^{\circ2}\bigr)
  \ \text{is a sum of squares}\Bigr\},
  \qquad r=0,1,2,\dots,
\]
a nested sequence of semidefinite-representable cones with
$\Kc{n}{0}\subseteq\Kc{n}{1}\subseteq\cdots\subseteq\COP_n$. Optimizing over level $r$ in place of $\COP_n$
gives the standard sum-of-squares upper and lower bounds for copositive programs, so the geometry of the
inclusions controls what an extra level of the hierarchy can buy.

The concrete stakes are visible already in the oldest application. Writing the stability number $\alpha(G)$
of a graph, the maximum size of a set of pairwise non-adjacent vertices, as a copositive program \citep{deKlerkPasechnik2002} and relaxing it to level $r$ yields the bounds
$\vartheta^{(r)}(G)$, which decrease to $\alpha(G)$. Whether a higher level improves the bound is, before
anything else, the question of whether the underlying cones actually differ. For the five-cycle the answer at
the first level is the Horn matrix, the matrix around which this entire paper is built
(Section~\ref{sec:stability}). Establishing that the cones separate at each step is the prerequisite for any
statement that the bounds do, and it is exactly this prerequisite that has been missing beyond the first
step.

That geometry has been understood at the ends of the chain for some time, and not at all in the middle. At
the base, $\Kc{n}{0}$ is the cone of sums of a positive semidefinite and a nonnegative matrix, which equals
$\COP_n$ for $n\le4$ \citep{Diananda1962}. The Horn matrix $\horn$ separates the two for $n=5$, and Parrilo
showed $\horn\in\Kc{5}{1}$, so the first inclusion $\Kc{5}{0}\subsetneq\Kc{5}{1}$ is strict with an explicit
witness. At the top, no fixed level equals $\COP_n$ for $n\ge5$ \citep[Corollary~2.3]{DDGH2013}, yet for
$n=5$ the union of all levels is everything: $\COP_5=\bigcup_r\Kc{5}{r}$ \citep{SchweighoferVargas2024}.
Between these facts lies the natural question: is each consecutive inclusion
$\Kc{5}{r}\subseteq\Kc{5}{r+1}$ itself strict? Here essentially nothing was known. Gulati, Nechita and Park
raise the question explicitly \citep{GNP2025}, and Britz and Laurent state that they are not aware of any
explicit matrix in $\Kc{n}{t}\setminus\Kc{n}{t-1}$ for any $t\ge2$ and $n\ge5$ \citep{BritzLaurent2025}. A
hierarchy in active use for two decades thus had exactly one of its steps certified strict.

This paper certifies the next three and describes the structure around them. The witnesses are explicit rational
matrices, and every claim about them reduces to finite rational identities that a short standalone program,
using integer arithmetic only, re-verifies from scratch.

\smallskip
Our contributions are the following.
\begin{enumerate}
\item \textbf{The first open pair is strict} (Theorem~\ref{thm:A}): with
  $D=\mathrm{diag}(\tfrac14,4,1,1,1)$, the matrix $M^{\ast}=D\horn D+\tfrac1{300}\ones$ lies in
  $\Kc{5}{2}\setminus\Kc{5}{1}$. The membership certificate is a rational block Gram matrix. The exclusion
  certificate is a rational moment functional with pairing $-\tfrac{18919}{28800}$.
\item \textbf{So is the second} (Theorem~\ref{thm:Aprime}): with
  $D'=\mathrm{diag}(\tfrac1{16},64,1,1,1)$, the matrix $M'=D'\horn D'+\tfrac1{800}\ones$ lies in
  $\Kc{5}{3}\setminus\Kc{5}{2}$.
\item \textbf{And so is the third} (Theorem~\ref{thm:Adoubleprime}): with
  $D''=\mathrm{diag}(\tfrac1{12},48,1,1,1)$ and $d$ its diagonal, the matrix
  $M''=D''\horn D''+\tfrac1{1024}dd^{\top}$ lies in $\Kc{5}{4}\setminus\Kc{5}{3}$. Together these give
  $\Kc{5}{1}\subsetneq\Kc{5}{2}\subsetneq\Kc{5}{3}\subsetneq\Kc{5}{4}$: three consecutive strict inclusions,
  each with an explicit witness. The third uses a rank-one interior shift $dd^{\top}$ in place of $\ones$,
  the arithmetically better-conditioned member of the family of interior directions in
  Lemma~\ref{lem:interiordir}.
\item \textbf{The first gap is full-dimensional} (Theorem~\ref{thm:B}): a fixed pair of certificates covers the whole segment
  $M_0+\eps\ones$ for $\eps\in[\tfrac1{10000},\tfrac{9981109}{2994257024})$, and
  $\Kc{5}{2}\setminus\Kc{5}{1}$ has nonempty interior in $\Sym^5$.
\item \textbf{Strict steps recur unboundedly} (Theorem~\ref{thm:D}): combining the scaling theorem of
  \citet{DDGH2013} with the completeness theorem of \citet{SchweighoferVargas2024} shows that for every $R$
  some scaling of $\horn$ lies in $\Kc{5}{s}\setminus\Kc{5}{s-1}$ with $s>R$. The chain never stabilizes.
\item \textbf{A search device} (Section~\ref{sec:threshold}): the threshold
  $\eps_r(M)=\min\{\eps\ge0:M+\eps\ones\in\Kc{5}{r}\}$ is computable by one semidefinite program per level,
  is nonincreasing in $r$, and every strict drop between consecutive levels exhibits a window of separators
  (Lemma~\ref{lem:threshold}). All our witnesses were found this way. The same principle applied along the
  rank-one direction $dd^{\top}$ produced the third pair, and the measured threshold landscape
  (Section~\ref{sec:landscape}) locates the window used there.
\end{enumerate}

\smallskip
\emph{Context and related work.} Several strands of prior work meet here.
\emph{Copositive optimization.} Copositive programs capture hard problems exactly. The maximum stable set and
standard quadratic optimization are among them \citep{MurtyKabadi1987,Bomze2000,BomzeDeKlerk2002,Motzkin1965},
and every nonconvex quadratic program with linear and binary constraints has an exact copositive reformulation
\citep{Burer2009}. Tractable inner and outer approximations of $\COP_n$ therefore matter. See the surveys \citep{Dur2010,Bomze2012,Burer2012} and the annotated bibliography
\citep{Bomze2012a}.
\emph{Structure of $\COP_5$ and the Horn orbit.} The completely positive and copositive cones are dual and
classical \citep{Hall1963,Berman2003}. Membership is co-NP-hard \citep{MurtyKabadi1987,Dickinson2014}, with
explicit criteria in low dimensions \citep{Valiaho1986}. Diananda's
equality $\COP_n=\SPN_n$ for $n\le4$, where $\SPN_n$ is the cone of sums of a positive semidefinite and an entrywise
nonnegative matrix \citep{Diananda1962}, and Horn's counterexample fix the picture at $n=5$;
the extreme rays of $\COP_5$ were classified by Baumert and by Hildebrand
\citep{Baumert1966,Hildebrand2012,Hildebrand2014,HaynsworthHoffman1969}, placing the Horn orbit among the
exceptional extremes. Laurent and Vargas characterize exactly which scalings $D\horn D$ lie in $\Kc{5}{1}$
\citep[Theorem~3.8]{LaurentVargas2109}, a criterion we use as a screen and an independent check.
\emph{Sum-of-squares and moment methods.} The hierarchy $\Kc{n}{r}$ is Parrilo's SOS relaxation of copositivity
\citep{Parrilo2000,Parrilo2003}, one instance of the moment/SOS approach to polynomial optimization
\citep{Lasserre2001,Nie2014,Laurent2009,BlekhermanParriloThomas2013,Lasserre2010,deKlerkLaurent2019} resting on
Positivstellens\"atze of Schm\"udgen, Putinar and P\'olya \citep{Schmudgen1991,Putinar1993,PowersReznick2001,
Reznick2000,NieSchweighofer2007}. Linear and semidefinite inner/outer approximations of the copositive cone
were developed by many authors \citep{deKlerkPasechnik2002,Bundfuss2009,Yildirim2012,Gvozdenovic2007,
Dickinson2010}.
\emph{Exactness and convergence.} Dickinson, D\"ur, Gijben and Hildebrand proved the hierarchy is not invariant
under diagonal scaling and that no fixed level captures $\COP_n$ for $n\ge5$ \citep{DDGH2013,Dickinson2014};
Laurent and Vargas reduced $\COP_5=\bigcup_r\Kc{5}{r}$ to the Horn orbit \citep{LaurentVargas2205}, which
Schweighofer and Vargas then settled \citep{SchweighoferVargas2024}. Our Theorem~\ref{thm:D} combines the
second and fourth of these to force unboundedly many strict adjacent steps. Theorems~\ref{thm:A},
\ref{thm:Aprime} and~\ref{thm:Adoubleprime} supply the explicit low-level instances the argument cannot.
\emph{Rational SOS certification.} Our exact-verification methodology descends from Peyrl and Parrilo
\citep{PeyrlParrilo2008}, in the line of exact rational sum-of-squares certificates
\citep{Kaltofen2012,Magron2021}. The boundary obstruction of Section~\ref{sec:method} explains why that
methodology, applied as published, does not certify these raw scalings directly, and the interior shift is our workaround.
\emph{Quantum information.} The cones $\Kc{n}{r}$ and their duals govern a hierarchy of extendibility tests for
bosonic quantum states \citep{GNP2025,BritzLaurent2025,Tura2018}, a copositive counterpart of the
symmetric-extension hierarchies for entanglement detection \citep{Doherty2002,Doherty2004};
Section~\ref{sec:quantum} states what our certificates yield there, and a companion paper \citep{PaperII}
develops it.

\smallskip
\emph{Roadmap.} Section~\ref{sec:prelim} fixes notation, the two cone families, and three lemmas, of which
the exclusion principle (Lemma~\ref{lem:exclusion}) carries all non-membership proofs.
Sections~\ref{sec:thmA}, \ref{sec:thmAprime} and~\ref{sec:thmAdoubleprime} prove the three separation
theorems. Section~\ref{sec:threshold} develops the threshold device, Section~\ref{sec:family} the one-parameter
family, and Section~\ref{sec:thmD} the unbounded-recurrence theorem. Section~\ref{sec:coef} gives the
closed-form separators of the elementary coefficient hierarchy, and Section~\ref{sec:Tpsi} runs the whole
programme on the second exceptional orbit $T(\psi)$. Section~\ref{sec:landscape} reports the measured
threshold landscape, Section~\ref{sec:method} the certification pipeline and the standalone verification
protocol, Section~\ref{sec:stability} the consequences for stability-number bounds, and
Section~\ref{sec:quantum} the quantum reading of the dual certificates. The appendices give the certificates
and document the verifier.

\subsection*{Notation}
$\Sym^n$ is the space of real symmetric $n\times n$ matrices with trace inner product
$\inner{A}{B}=\mathrm{tr}(AB)$; $A\psd$ means positive semidefinite. $\ones$ and $I$ are the all-ones and
identity matrices, $\mathbf{1}$ the all-ones vector. For $x\in\R^n$, $x^{\circ2}=(x_1^2,\dots,x_n^2)^{\top}$,
and for a multi-index $\alpha\in\N^n$, $x^{\alpha}=\prod_ix_i^{\alpha_i}$ with $|\alpha|=\sum_i\alpha_i$.
$z_d(x)$ denotes the column vector of all monomials of degree exactly $d$ in $x_1,\dots,x_5$, ordered
lexicographically; $c_r(A)$ is the coefficient vector of $(\sum_ix_i^2)^r\pA{A}$ in the degree-$(2r+4)$
monomial basis. Matrix indices on $5\times5$ matrices are read modulo $5$. $\Q$-arithmetic statements
(``verified exactly'') mean identities of rational numbers, checked without floating point.

\section{Preliminaries}\label{sec:prelim}

This section fixes the two families of cones the paper studies, the Horn matrix and its diagonal scalings,
and three lemmas that the certificate arguments of Sections~\ref{sec:thmA} to~\ref{sec:family} use repeatedly.
All material here is either classical or elementary. The lemmas are stated in the exact form we need and
proved in full to keep the later arguments self-contained.

\subsection{Copositive matrices and the Parrilo hierarchy}\label{sec:prelim-cones}

Let $\Sym^n$ denote the space of real symmetric $n\times n$ matrices. A matrix $A\in\Sym^n$ is
\emph{copositive} if $x^{\top}Ax\ge 0$ for every $x\in\R^n_{+}$, and the copositive cone is
\begin{equation}\label{eq:cop}
  \COP_n \;=\; \bigl\{A\in\Sym^n : x^{\top}Ax\ge 0 \text{ for all } x\ge 0\bigr\}.
\end{equation}
Verifying membership in $\COP_n$ is co-NP-complete \citep{MurtyKabadi1987}, which motivates tractable inner
approximations. Parrilo's construction \citep{Parrilo2000} replaces the sign constraint by a sum-of-squares
certificate after an even substitution. For $A\in\Sym^n$ write
\begin{equation}\label{eq:pA}
  \pA{A}(x)\;=\;\bigl(x^{\circ 2}\bigr)^{\!\top} A\,\bigl(x^{\circ 2}\bigr)
  \;=\;\sum_{i,j=1}^{n} A_{ij}\,x_i^2x_j^2,
  \qquad x^{\circ 2}:=(x_1^2,\dots,x_n^2)^{\top}.
\end{equation}
Since every nonnegative vector is a coordinatewise square, $A$ is copositive exactly when $\pA{A}$ is
nonnegative on $\R^n$. The Parrilo cone of order $r$ strengthens nonnegativity of $\pA{A}$ to a sum-of-squares
representation after multiplication by a power of $\|x\|^2$:
\begin{equation}\label{eq:parrilo}
  \Kc{n}{r}\;=\;\Bigl\{A\in\Sym^n \;:\;
  \bigl(\textstyle\sum_{i=1}^n x_i^2\bigr)^{r}\,\pA{A}(x)\ \text{is a sum of squares}\Bigr\},
  \qquad r\in\N .
\end{equation}
The hierarchy is nested, $\Kc{n}{r}\subseteq\Kc{n}{r+1}\subseteq\COP_n$: multiplying a sum of squares by
$\sum_i x_i^2$, itself a sum of squares, again yields a sum of squares. At the base level one recovers a
classical object: $\Kc{n}{0}=\Sym^n_{+}+\mathcal{N}^n=:\SPN_n$, the sums of a positive semidefinite and an
entrywise nonnegative matrix \citep{Parrilo2000}. Diananda's theorem gives $\SPN_n=\COP_n$ for $n\le 4$
\citep{Diananda1962}, and Corollary~2.3 of \citet{DDGH2013} shows that no fixed level ever closes the gap for
$n\ge 5$:
\begin{equation}\label{eq:ddgh}
  \COP_n=\Kc{n}{r} \quad\Longleftrightarrow\quad n\le 4 .
\end{equation}
For $n=5$ the hierarchy is nevertheless complete in the pointwise sense: Schweighofer and Vargas
\citep[Theorem~3]{SchweighoferVargas2024} prove
\begin{equation}\label{eq:sv}
  \COP_5=\bigcup_{r\ge 0}\Kc{5}{r},
\end{equation}
so every copositive $5\times 5$ matrix enters the hierarchy at some finite level, while by \eqref{eq:ddgh} no
single level suffices. How the inclusions between \emph{consecutive} cones behave is the subject of this
paper.

\subsection{The Horn matrix and its scalings}\label{sec:prelim-horn}

Our witnesses all live on one classical orbit. Throughout, $C_5$ denotes the $5$-cycle with vertices
$\{1,\dots,5\}$ and edges between cyclically adjacent indices.

\begin{convention}[Horn matrix]\label{conv:horn}
$\horn := 2\bigl(A_{C_5}+I\bigr)-\ones \in\Sym^5$, where $A_{C_5}$ is the adjacency matrix of $C_5$ and
$\ones$ is the all-ones matrix. Explicitly, $\horn_{ij}=1$ if $i=j$ or $i,j$ are adjacent on $C_5$, and
$\horn_{ij}=-1$ otherwise. Indices of $5\times 5$ matrices are read modulo $5$ where convenient.
\end{convention}

Horn's classical example shows $\horn\in\COP_5\setminus\SPN_5$ \citep[p.~25]{Diananda1962}, and Parrilo
verified $\horn\in\Kc{5}{1}$ \citep{Parrilo2000}. The matrix therefore witnesses
$\Kc{5}{0}\subsetneq\Kc{5}{1}$, the only strict consecutive inclusion known before this work.

\begin{convention}[Scalings]\label{conv:scaling}
$\mathcal{D}$ denotes the set of diagonal matrices $D=\mathrm{diag}(d_1,\dots,d_5)$ with all $d_i>0$. For
$D\in\mathcal{D}$ the scaling $D\horn D$ has entries $d_id_j\horn_{ij}$. It is copositive because
$x^{\top}(D\horn D)x=(Dx)^{\top}\horn(Dx)$ and $D$ maps $\R^5_{+}$ onto itself. Copositivity is thus invariant
under $\mathcal{D}$. The cones $\Kc{n}{r}$ are not \citep{DDGH2013}, and this asymmetry is what all our
constructions exploit.
\end{convention}

Membership of a Horn scaling at the first level is understood exactly. We use the following criterion of
Laurent and Vargas. It serves both as a screen for candidate separators and as an independent check on our
dual certificates.

\begin{theorem}[{\citealp[Theorem~3.8]{LaurentVargas2109}}]\label{thm:LV}
Let $D=\mathrm{diag}(d_1,\dots,d_5)\in\mathcal{D}$. Then
\[
  D\horn D\in\Kc{5}{1}
  \quad\Longleftrightarrow\quad
  d_{i-1}d_i+d_id_{i+1}\;\ge\; d_{i-1}d_{i+1}
  \qquad\text{for all } i \pmod 5 .
\]
\end{theorem}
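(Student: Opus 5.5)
The plan is to reduce the problem to a finite-dimensional semidefinite feasibility question and exploit the cyclic structure of $\horn$. Since $D\horn D\in\Kc{5}{1}$ means $\|x\|^2\,\pA{D\horn D}(x)$ is a sum of squares, substituting $y_i=d_ix_i$ gives the equivalent condition that
\[
  \Bigl(\textstyle\sum_i y_i^2/d_i^2\Bigr)\,\pA{\horn}(y)
\]
is a sum of squares. Both factors are even in each variable, so the form has the standard sign symmetry $y_i\mapsto -y_i$. A Gram matrix can therefore be symmetrized over $\{\pm1\}^5$ and block-diagonalized. The degree-three monomials split into classes according to their parity pattern: $y_i^3$ and $y_iy_j^2$ (odd in one coordinate), and $y_iy_jy_k$ (odd in three coordinates). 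The sextic form is thus SOS iff a small family of PSD conditions holds. The blocks indexed by the parity class of a single $y_i$ are $5\times5$ matrices on $\{y_i^3,y_iy_j^2\}$. The triple-product blocks are $1\times1$ and must match the coefficient of $y_i^2y_j^2y_k^2$ with $i,j,k$ distinct.

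For sufficiency I would follow Parrilo's certificate for $\horn$ and guess an explicit decomposition. Write $\pA{\horn}$ cyclically as $\bigl(\sum_i y_i^2\bigr)^2-4\sum_i y_i^2y_{i+1}^2y\ldots$, or more usefully in Parrilo's form $\sum_i y_i^2(y_i^2-y_{i+1}^2+y_{i+2}^2+y_{i+3}^2-y_{i+4}^2)^2$ plus correction terms. Then I would weight each term by $1/d_i^2$-dependent coefficients. Concretely, I expect the certificate to be
\[
  \sum_i \lambda_i\, y_i^2\,\bigl(\text{quadratic in }y^{\circ2}\bigr)^2+\sum_{i}\mu_i\,(\text{cross monomial } y_{i-1}y_iy_{i+1})^2 .
\]
Matching coefficients leaves nonnegative slack exactly when $d_{i-1}d_i+d_id_{i+1}-d_{i-1}d_{i+1}\ge0$. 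The coefficient of $y_{i-1}^2y_i^2y_{i+1}^2$ in the product is where the three weights $1/d_{i-1}^2,1/d_i^2,1/d_{i+1}^2$ meet. Completing squares in the $y_i$-parity block should produce exactly the quantity $1/(d_{i-1}d_{i+1})\cdot(\ldots)$ whose sign is the stated inequality. I would try to find an explicit rational parametrization of the Gram blocks in $d$ and verify PSD by exhibiting them as sums of rank-one terms.

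For necessity, suppose the inequality fails at some $i$. I would build a dual certificate: a linear functional $L$ on sextic forms, nonnegative on squares (a PSD moment matrix, block-diagonal by the same symmetry) with $L\bigl(\|x\|^2\pA{D\horn D}\bigr)<0$. This is the natural moment witness. Take $L$ supported near the zeros of $\pA{\horn}$, namely the vectors $y^{\circ2}$ proportional to $(1,1,0,0,0)$-type patterns and their cyclic shifts. Then add a perturbation in the $y_iy_{i-1}y_{i+1}$ direction whose pairing equals $-(d_{i-1}d_i+d_id_{i+1}-d_{i-1}d_{i+1})$ times a positive factor.

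The main obstacle is necessity: the moment matrix must be PSD while giving a strictly negative pairing, and it must be constructed uniformly in $d$. I would first work in the $y_i$-parity block. There a $2\times2$ or $3\times3$ principal minor argument should show that the Gram entry for $y_{i-1}y_iy_{i+1}$ is forced negative, and I would dualize that minor to obtain $L$.
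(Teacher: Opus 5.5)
The paper gives no proof of this statement; it quotes it from Laurent and Vargas. Judged on its own, your proposal has a genuine gap: necessity, which is the real content of the theorem, is not argued.

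\textbf{Necessity.} You defer it to a moment functional ``supported near the zeros'' plus an unspecified perturbation, but you construct nothing. The zeros you name are also wrong. $(1,1,0,0,0)$ and its cyclic shifts are adjacent pairs, which are not zeros for the paper's $\horn$, since $\horn_{ij}=1$ on adjacent pairs. The nonnegative zeros of $v^{\top}\horn v$ are $e_a+e_b$ for non-adjacent $\{a,b\}$, together with the segments on supports $\{j,j+1,j+3\}$ where $v_{j+3}=v_j+v_{j+1}$.

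Your closing idea cannot work as stated. The Gram entry for $y_{i-1}y_iy_{i+1}$ is a $1\times1$ block of its own, not an entry of the $y_i$-parity block. It also does not by itself match the coefficient of $y_{i-1}^2y_i^2y_{i+1}^2$: off-diagonal entries of three $5\times5$ blocks feed that coefficient too.

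The missing idea is rigidity, and it makes a dual object unnecessary. Put $v=y^{\circ2}$ and let $w_i$ be the weights of the multiplier. A parity-reduced certificate then reads $(\sum_iw_iv_i)\,v^{\top}\horn v=\sum_iv_i\,v^{\top}P^{(i)}v+\sum\lambda_{abc}v_av_bv_c$, with $P^{(i)}\succeq0$ and $\lambda_{abc}\ge0$.
\begin{itemize}
\item Every nonnegative zero $v^\ast$ with $v^\ast_i>0$ forces $P^{(i)}v^\ast=0$.
\item The zeros above supply four independent kernel vectors: $e_i+e_{i+2}$, $e_i+e_{i-2}$, $e_{i+1}+e_{i-2}$, $e_{i-1}+e_{i+2}$. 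The minimal zeros give only the first two; the segments (equivalently, first-order conditions at the minimal zeros) give the other two.
\item The orthogonal complement of these vectors is spanned by $h_i=\horn e_i$, so $P^{(i)}=c_ih_ih_i^{\top}$.
\item The coefficient of $v_i^3$ gives $c_i=w_i$.
\item The coefficient of $v_{i-1}v_iv_{i+1}$ then forces $\lambda_{i-1,i,i+1}=4(w_{i-1}-w_i+w_{i+1})$, which must be nonnegative.
\end{itemize}

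\textbf{Sufficiency.} Your ansatz has the right shape, but two slips would derail it.

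First, the substitution must be $y_i=\sqrt{d_i}\,x_i$. Yours, $y_i=d_ix_i$, turns $\pA{D\horn D}(x)$ into $\pA{D^2\horn D^2}(x)$. The resulting weights $1/d_i^2$ give the slack $d_{i-1}^{-2}-d_i^{-2}+d_{i+1}^{-2}$, not the stated inequality. The correct weights are $w_i=1/d_i$.

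Second, the square you attribute to Parrilo belongs to the opposite Horn convention (adjacent entries $-1$). In that convention the residual triples are $\{i-2,i,i+2\}$ and the inequality would involve $d_{i\pm2}$.

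In the paper's convention the identity is
\[
  \Bigl(\sum_i w_iv_i\Bigr)v^{\top}\horn v=\sum_i w_iv_i\,(h_i^{\top}v)^2+4\sum_i\bigl(w_{i-1}-w_i+w_{i+1}\bigr)v_{i-1}v_iv_{i+1}.
\]
To verify it, note that in the difference all non-square-free monomials and all triples of type $\{i,i+1,i+3\}$ cancel. Then use $d_{i-1}d_id_{i+1}(w_{i-1}-w_i+w_{i+1})=d_{i-1}d_i+d_id_{i+1}-d_{i-1}d_{i+1}$.

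The rigidity argument shows this is the only parity-reduced certificate. So both directions of the theorem come from this single identity, and no dual functional is needed.
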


For the two-parameter family $D=\mathrm{diag}(a,b,1,1,1)$ used in Sections~\ref{sec:thmA}
and~\ref{sec:family}, three of the five inequalities hold automatically and the criterion reduces to
\begin{equation}\label{eq:LVab}
  a+ab\ \ge\ b
  \qquad\text{and}\qquad
  b+ab\ \ge\ a .
\end{equation}

\subsection{Three lemmas}\label{sec:prelim-lemmas}

The polynomials in \eqref{eq:parrilo} are homogeneous, and this already pins down which monomials any
sum-of-squares representation can use. The following observation is standard. We record the one-line proof
because the exclusion arguments of Sections~\ref{sec:thmA} to~\ref{sec:thmAdoubleprime} depend on it.

\begin{lemma}[Homogeneous basis]\label{lem:basis}
Let $\sigma\in\R[x_1,\dots,x_n]$ be homogeneous of degree $2d$. If $\sigma=\sum_k g_k^2$, then every $g_k$ is
homogeneous of degree $d$. Consequently, a sum-of-squares representation of
$(\sum_i x_i^2)^{r}\pA{A}$, which is homogeneous of degree $2r+4$, uses only the monomials of degree exactly
$r+2$.
\end{lemma}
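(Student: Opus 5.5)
The argument compares highest-degree parts. Write each $g_k=\sum_{j=0}^{m} g_{k,j}$ as a sum of homogeneous components, with $g_{k,j}$ homogeneous of degree $j$. We first show that no component of degree above $d$ can occur, and then, by the symmetric argument at the bottom, that none below $d$ can occur either.

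\emph{Top degree.} Let $m=\max_k \deg g_k$ and suppose $m>d$. The degree-$2m$ part of $\sum_k g_k^2$ is $\sum_k g_{k,m}^2$, summed over those $k$ with $\deg g_k=m$. Since $\sigma$ is homogeneous of degree $2d<2m$, this part must vanish identically. A sum of squares of real polynomials that is the zero polynomial has every summand zero: evaluate at any point of $\R^n$ to get $g_{k,m}(x)^2=0$ everywhere, so $g_{k,m}=0$ as a polynomial over the infinite field $\R$. This contradicts the definition of $m$, so $\deg g_k\le d$ for every $k$.

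\emph{Bottom degree.} Let $\ell$ be the smallest degree occurring among the nonzero components $g_{k,j}$, and suppose $\ell<d$. The degree-$2\ell$ part of $\sum_k g_k^2$ is $\sum_k g_{k,\ell}^2$, because any product $g_{k,i}g_{k,j}$ has degree $i+j\ge 2\ell$, with equality only when $i=j=\ell$. This part must again vanish, and the same pointwise argument gives $g_{k,\ell}=0$ for all $k$, contradicting the choice of $\ell$. Hence each $g_k$ equals its own degree-$d$ component, i.e.\ is homogeneous of degree $d$, or is zero and can be dropped.

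\emph{Consequence.} The polynomial $(\sum_i x_i^2)^r\pA{A}$ is homogeneous of degree $2r+4$, being a product of a form of degree $2r$ and the quartic form $\pA{A}$. Applying the first part with $d=r+2$, any sum-of-squares representation has each $g_k$ in the span of $z_{r+2}(x)$. Equivalently, any Gram matrix can be indexed by the monomials of degree exactly $r+2$.

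The only point needing care is the claim that a vanishing sum of real squares forces each square to vanish. This is where the real field is used: it fails over $\mathbb{C}$. The fact follows from pointwise evaluation, as indicated above.
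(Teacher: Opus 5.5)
Your proof is correct and follows essentially the same route as the paper. Both compare the lowest- and highest-degree homogeneous components of $\sum_k g_k^2$: these are sums of squares of the extreme components of the $g_k$, hence nonzero, which forces every $g_k$ to be homogeneous of degree $d$. Your pointwise-evaluation argument, that a vanishing sum of real squares has vanishing summands, spells out the one step the paper leaves implicit.
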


\begin{proof}
Write each $g_k$ as a sum of homogeneous components and let $m_k$ (resp.\ $M_k$) be the least (resp.\
greatest) degree occurring in $g_k$. The homogeneous component of degree $2\min_k m_k$ of $\sum_k g_k^2$
equals $\sum_{k:\,m_k=\min} (g_k)_{m_k}^2$, a sum of squares of nonzero polynomials, and is therefore nonzero.
Since $\sigma$ has no component of degree below $2d$, we get $\min_k m_k\ge d$. The symmetric argument at the
top degree gives $\max_k M_k\le d$. Hence every $g_k$ is homogeneous of degree $d$.
\end{proof}

For a matrix argument the polynomial $(\sum_i x_i^2)^{r}\pA{A}$ is even in each variable separately. Even
symmetry lets one search for structured Gram matrices without loss of generality, which is how all
certificates in this paper were found. The proofs of the theorems do not rely on it, but we state the
reduction to make the search reproducible.

\begin{lemma}[Parity averaging]\label{lem:parity}
Let $\sigma$ be a polynomial that is even in each variable and suppose $\sigma=z^{\top}Qz$ with $Q\psd$,
where $z$ is a vector of monomials. For a sign vector $s\in\{\pm 1\}^n$ let $R_s$ be the diagonal matrix
acting on $z$ induced by $x_i\mapsto s_ix_i$. Then
$\bar Q:=2^{-n}\sum_{s}R_s^{\top}QR_s$ is positive semidefinite, satisfies $\sigma=z^{\top}\bar Qz$, and
$\bar Q_{\alpha\beta}=0$ whenever the monomials $z_\alpha,z_\beta$ have different degree parities in some
variable. The Gram matrix may therefore be taken block-diagonal, with one block per parity class of the
monomial basis.
\end{lemma}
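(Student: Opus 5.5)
The proof is a direct symmetrization argument. Each $R_s$ is a diagonal matrix with entries $\pm1$. On the monomial $z_\alpha=x^{\alpha}$ it acts by the sign $s^{\alpha}=\prod_i s_i^{\alpha_i}$, so $R_s$ is orthogonal and equal to its own inverse. The substitution $x\mapsto Sx$, with $S=\mathrm{diag}(s)$, sends $z(x)$ to $z(Sx)=R_sz(x)$.

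\textbf{Positive semidefiniteness.} For any vector $v$ we have $v^{\top}R_s^{\top}QR_sv=(R_sv)^{\top}Q(R_sv)\ge0$. Hence each congruence $R_s^{\top}QR_s$ is positive semidefinite, and so is their average $\bar Q$, since the PSD cone is convex.

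\textbf{Representation.} Because $\sigma$ is even in each variable, $\sigma(Sx)=\sigma(x)$ for every sign vector $s$. Therefore
\[
\sigma(x)=\sigma(Sx)=z(Sx)^{\top}Qz(Sx)=z(x)^{\top}R_s^{\top}QR_s\,z(x).
\]
Averaging over the $2^n$ sign vectors gives $\sigma=z^{\top}\bar Qz$. This identity holds as polynomials, not only pointwise, since the two sides agree on all of $\R^n$.

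\textbf{Vanishing pattern.} Entrywise, $(R_s^{\top}QR_s)_{\alpha\beta}=s^{\alpha}s^{\beta}Q_{\alpha\beta}=s^{\alpha+\beta}Q_{\alpha\beta}$. This gives
\[
\bar Q_{\alpha\beta}=Q_{\alpha\beta}\,2^{-n}\sum_{s}\prod_{i=1}^n s_i^{\alpha_i+\beta_i}=Q_{\alpha\beta}\prod_{i=1}^n\Bigl(\tfrac12\sum_{s_i=\pm1}s_i^{\alpha_i+\beta_i}\Bigr).
\]
The factor for coordinate $i$ equals $1$ if $\alpha_i+\beta_i$ is even and $0$ otherwise. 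So $\bar Q_{\alpha\beta}=0$ exactly when some coordinate has $\alpha_i\not\equiv\beta_i\pmod 2$, which is the claimed condition. In that case the monomials $z_\alpha$ and $z_\beta$ lie in different parity classes $\alpha \bmod 2\in\{0,1\}^n$.

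\textbf{Block structure.} Order the basis by parity class. Then $\bar Q$ is block diagonal with one block per class, and each diagonal block is a principal submatrix of a PSD matrix, hence PSD itself. No step is a real obstacle. The only care needed is in the representation step: one must use $z(Sx)=R_sz(x)$, which holds precisely because $z$ consists of monomials, and then read the resulting equality as a polynomial identity.
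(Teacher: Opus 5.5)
Your proof is correct and follows the same route as the paper's: invariance of $\sigma$ under $x\mapsto Sx$ makes each $R_s^{\top}QR_s$ another positive semidefinite Gram matrix for $\sigma$, and the character average $2^{-n}\sum_s s^{\alpha+\beta}$ kills the cross-parity entries; your factorization over coordinates just spells out the paper's one-line vanishing claim. One small wording fix: the averaging factor is zero exactly when some parity differs, but $\bar Q_{\alpha\beta}$ can also be zero for a same-parity pair when $Q_{\alpha\beta}=0$, so write ``whenever'' rather than ``exactly when''.
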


\begin{proof}
Each substitution $x_i\mapsto s_ix_i$ fixes $\sigma$ and maps the representation $z^{\top}Qz$ to
$z^{\top}(R_s^{\top}QR_s)z$. Averaging over the $2^n$ sign vectors preserves both positive semidefiniteness
and the represented polynomial. The entry $\bar Q_{\alpha\beta}$ carries the factor
$2^{-n}\sum_s s^{\gamma}$ with $\gamma$ the componentwise parity difference of $z_\alpha$ and $z_\beta$,
and this average vanishes unless $\gamma=0$.
\end{proof}

The last lemma is the engine behind every non-membership claim in the paper. Failing to lie in $\Kc{n}{r}$
means that a specific linear functional separates the target polynomial from the sum-of-squares cone. The
lemma packages this separation in a form that a finite rational object can certify and that a referee can
recheck by hand.

\begin{lemma}[Exclusion principle]\label{lem:exclusion}
Fix $n$, $r$, and let $z$ list all monomials of degree exactly $r+2$ in $x_1,\dots,x_n$. For a symmetric
matrix $A$ let $c_r(A)$ denote the coefficient vector of $(\sum_i x_i^2)^{r}\pA{A}$ in the monomial basis of
degree $2r+4$. Let $y$ be a rational functional on the degree-$(2r+4)$ monomials and define the moment matrix
$\Mom(y)_{\alpha\beta}:=y\bigl(z_\alpha z_\beta\bigr)$. If
\[
  \Mom(y)\psd
  \qquad\text{and}\qquad
  \inner{y}{c_r(A)}<0,
\]
then $A\notin\Kc{n}{r}$.
\end{lemma}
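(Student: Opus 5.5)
We argue by contradiction, using weak duality between the sum-of-squares cone and positive semidefinite moment matrices. Suppose $A\in\Kc{n}{r}$, so that $\sigma:=(\sum_i x_i^2)^{r}\pA{A}$ is a sum of squares, $\sigma=\sum_k g_k^2$.

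\textbf{Step 1 (Gram form).} By Lemma~\ref{lem:basis}, every $g_k$ is homogeneous of degree $r+2$, so $g_k=u_k^{\top}z$ for some real vector $u_k$. Set $Q:=\sum_k u_ku_k^{\top}$. Then $Q\psd$ and $\sigma=z^{\top}Qz=\sum_{\alpha,\beta}Q_{\alpha\beta}\,z_\alpha z_\beta$.

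\textbf{Step 2 (apply $y$).} Extend $y$ linearly to the space of degree-$(2r+4)$ forms; the pairing $\inner{y}{c_r(A)}$ is then exactly $y(\sigma)$. Each product $z_\alpha z_\beta$ is a monomial of degree $2r+4$, so linearity gives
\[
  \inner{y}{c_r(A)}=y(\sigma)=\sum_{\alpha,\beta}Q_{\alpha\beta}\,y(z_\alpha z_\beta)=\inner{Q}{\Mom(y)}.
\]
Several pairs $(\alpha,\beta)$ may give the same monomial. This causes no difficulty, because the coefficient of a monomial $m$ in $\sigma$ is $\sum_{z_\alpha z_\beta=m}Q_{\alpha\beta}$, and $\Mom(y)$ assigns the common value $y(m)$ to all of those entries.

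\textbf{Step 3 (contradiction).} The trace inner product of two positive semidefinite matrices is nonnegative. One way to see this is to write $\mathrm{tr}(Q\Mom(y))=\sum_k u_k^{\top}\Mom(y)u_k\ge0$. Hence $\inner{y}{c_r(A)}\ge0$, which contradicts the hypothesis $\inner{y}{c_r(A)}<0$. Therefore $A\notin\Kc{n}{r}$.

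Rationality of $y$ plays no role in the logic; it only makes the two hypotheses exactly checkable. The one step that needs care is the bookkeeping identity in Step 2, namely making sure the coefficient-vector pairing is defined in the same monomial basis that indexes $\Mom(y)$. Once that convention is fixed, the rest is immediate.
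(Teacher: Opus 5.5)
Your proof is correct and follows the paper's argument: obtain a positive semidefinite Gram matrix $Q$ via Lemma~\ref{lem:basis}, identify $\inner{y}{c_r(A)}$ with $\inner{\Mom(y)}{Q}$ by linearity, and conclude from the nonnegativity of the trace pairing of two positive semidefinite matrices. The only cosmetic difference is that you justify that last step through the rank-one decomposition $Q=\sum_k u_ku_k^{\top}$, whereas the paper uses the identity $\inner{\Mom(y)}{Q}=\|Q^{1/2}\Mom(y)^{1/2}\|_F^2$.
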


\begin{proof}
Suppose $A\in\Kc{n}{r}$. By Lemma~\ref{lem:basis} there is $Q\psd$, indexed by $z$, with
$(\sum_i x_i^2)^{r}\pA{A}=z^{\top}Qz$. Comparing coefficients monomial by monomial,
\[
  \inner{y}{c_r(A)}
  \;=\;\sum_{\alpha,\beta} y\bigl(z_\alpha z_\beta\bigr)\,Q_{\alpha\beta}
  \;=\;\inner{\Mom(y)}{Q},
\]
the trace inner product of two positive semidefinite matrices. Writing
$\inner{\Mom(y)}{Q}=\bigl\|\,Q^{1/2}\Mom(y)^{1/2}\bigr\|_F^2\ge 0$ contradicts strict negativity.
\end{proof}

\begin{remark}\label{rem:exclusion-scope}
Lemma~\ref{lem:exclusion} quantifies over \emph{all} Gram matrices: no parity or block structure is assumed
of $Q$. In our certificates the functional $y$ is supported on the monomials that are even in each variable
and set to zero elsewhere, which makes $\Mom(y)$ block-diagonal with respect to the parity classes of
Lemma~\ref{lem:parity}. Positive semidefiniteness of $\Mom(y)$ is then verified exactly, as a single matrix,
by the standalone checker described in Section~\ref{sec:method}.
\end{remark}

\subsection{A worked instance}\label{sec:prelim-worked}

Before turning to the new separators it helps to see the machinery on the one case that is classical, the
membership $\horn\in\Kc{5}{1}$. Here $\pA{\horn}=\sum_{i,j}\horn_{ij}x_i^2x_j^2$, and the level-one target is
$(\sum_ix_i^2)\pA{\horn}$, a degree-six form that Parrilo \citep{Parrilo2000} showed to be a sum of squares.
Our pipeline reproduces this with an exact rational certificate: a block Gram matrix $Q_\horn$ over the $35$
monomials of degree three, splitting by Lemma~\ref{lem:parity} into five
$5\times5$ blocks and ten scalars, with $(\sum_ix_i^2)\pA{\horn}=z_3^{\top}Q_\horn z_3$ verified coefficient by
coefficient over $\Q$. On the exclusion side, the dual search of Section~\ref{sec:method} run on $\horn$
returns no witness, in agreement with membership: the semidefinite program certifying $\horn\notin\Kc{5}{1}$
is infeasible, and its infeasibility is itself certified by the primal certificate $Q_\horn$. This example
fixes the two halves of every later argument, a primal Gram matrix for membership and a dual moment functional
for exclusion, on a case whose answer is already known, and it is the first of the three consistency controls
listed in Section~\ref{sec:method-verify}.

The zeros of $\pA{\horn}$ that Proposition~\ref{prop:kernel} will use are already visible here. The Horn form
$u^{\top}\horn u$ vanishes on the nonnegative orthant at the five minimal zeros $e_i+e_j$ indexed by the
non-adjacent pairs $\{1,3\},\{1,4\},\{2,4\},\{2,5\},\{3,5\}$ of $C_5$, where $\horn$ restricts to
$\left(\begin{smallmatrix}1&-1\\-1&1\end{smallmatrix}\right)$. We use these five minimal rays; the
nonnegative zero set is larger, as Remark~\ref{rem:minimalzeros} records. Correspondingly $Q_\horn$ is singular,
with the five evaluation vectors $z_3(e_i+e_j)$ in its kernel. The scalings studied below move these zeros around by
$D^{-1}$ but never remove them, which is the source of the boundary phenomenon that makes the new certificates
delicate.

\section{An explicit separator for the first open pair}\label{sec:thmA}

This section proves $\Kc{5}{1}\subsetneq\Kc{5}{2}$ by exhibiting one rational matrix together with two finite
rational certificates: a Gram matrix that places it inside $\Kc{5}{2}$, and a moment functional that, through
Lemma~\ref{lem:exclusion}, keeps it outside $\Kc{5}{1}$. The candidate is a Horn scaling pushed slightly
along the all-ones matrix. The scaling is chosen to violate the criterion of Theorem~\ref{thm:LV}. The shift
along $\ones$ moves the matrix off the boundary of $\Kc{5}{2}$, where rational Gram matrices exist and can be
written down (Section~\ref{sec:method} explains why the unshifted scaling resists exact certification). How
the pair $(a,b)$ and the shift $1/300$ were located is the subject of Section~\ref{sec:threshold}. Nothing in
the present proof depends on that search.

\begin{theorem}\label{thm:A}
Let $D=\mathrm{diag}\bigl(\tfrac14,\,4,\,1,\,1,\,1\bigr)$, $M_0=D\horn D$ and
\[
  M^{\ast}\;=\;M_0+\tfrac{1}{300}\,\ones .
\]
Then $M^{\ast}\in\Kc{5}{2}\setminus\Kc{5}{1}$. In particular $\Kc{5}{1}\subsetneq\Kc{5}{2}$.
\end{theorem}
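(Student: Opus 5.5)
The proof has two independent halves, each reduced to a finite rational identity. For membership $M^{\ast}\in\Kc{5}{2}$, I would exhibit an explicit rational positive semidefinite Gram matrix $Q^{\ast}$, indexed by the $70$ monomials of degree $4$ in $x_1,\dots,x_5$. It must satisfy
\[
  \Bigl(\textstyle\sum_i x_i^2\Bigr)^{2}\pA{M^{\ast}}(x)=z_4(x)^{\top}Q^{\ast}z_4(x).
\]
By Lemma~\ref{lem:basis} this is the only form a certificate can take. By Lemma~\ref{lem:parity} I may look for $Q^{\ast}$ block-diagonal over the parity classes of degree-four monomials: the all-even class $\{x_i^2x_j^2\}$ (15 monomials) together with the classes indexed by parity patterns with two or four odd exponents. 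This splits the SOS condition into a handful of small blocks. The verification then consists of two steps. First, a coefficient-by-coefficient check over $\Q$ that the $\binom{12}{4}=495$ coefficients of the degree-$8$ form $c_2(M^{\ast})$ agree with those of $z_4^{\top}Q^{\ast}z_4$. Second, an exact PSD check of each block, for instance via an $LDL^{\top}$ factorization with nonnegative rational pivots, or via leading principal minors after confirming the rank.

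To produce $Q^{\ast}$, I would solve the level-two SDP numerically for $M^{\ast}$. I would then round to rationals and project onto the affine space of Gram matrices representing the target, in the Peyrl--Parrilo manner. The shift $\tfrac1{300}\ones$ is what makes this work. Since $\ones$ lies in the interior direction (indeed $(\sum_i x_i^2)^2\pA{\ones}=(\sum_i x_i^2)^4$ has a positive definite Gram representation), the shift should give a numerical solution with strictly positive smallest eigenvalue, so the projected matrix stays PSD. I expect this to be the main obstacle. The unshifted $M_0$ has zeros inherited from the minimal zeros $D^{-1}(e_i+e_j)$ of the Horn form, which force a kernel in any Gram matrix. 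The margin created by $\tfrac1{300}$ is small, so the rounding precision has to be chosen carefully, and it may be necessary to first fix the kernel-compatible face.

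For exclusion $M^{\ast}\notin\Kc{5}{1}$, I would apply Lemma~\ref{lem:exclusion} with $r=1$, taking $z=z_3$ (35 monomials). First note that $M_0\notin\Kc{5}{1}$ by Theorem~\ref{thm:LV}: with $a=\tfrac14$, $b=4$, the condition $a+ab\ge b$ in \eqref{eq:LVab} reads $\tfrac14+1\ge4$, which fails. This guarantees that a separating functional exists for $M_0$. The question is only whether it survives the shift. I would compute a dual optimum numerically, that is, a functional $y$ on degree-$6$ monomials supported on monomials even in every variable, normalized for instance by $\inner{y}{c_1(\ones)}=1$, that minimizes $\inner{y}{c_1(M_0)}$ subject to $\Mom(y)\psd$. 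I would then round it to rationals, keeping $\Mom(y)$ strictly positive definite on its blocks (a small multiple of the moment matrix of a full-support measure could be added to restore definiteness), and check exactly that $\Mom(y)\psd$. Finally I would compute
\[
  \inner{y}{c_1(M^{\ast})}=\inner{y}{c_1(M_0)}+\tfrac1{300}\inner{y}{c_1(\ones)}
\]
and confirm that it is negative; the paper's value is $-\tfrac{18919}{28800}$.

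Since even-supported $y$ makes $\Mom(y)$ block-diagonal (Remark~\ref{rem:exclusion-scope}), the PSD check again splits into small blocks. The exclusion half should be easier than the membership half, because the gap $\inner{y}{c_1(M_0)}<0$ can be made large relative to $\tfrac1{300}\inner{y}{c_1(\ones)}$. The two certificates together give $M^{\ast}\in\Kc{5}{2}\setminus\Kc{5}{1}$, and hence the strict inclusion $\Kc{5}{1}\subsetneq\Kc{5}{2}$.
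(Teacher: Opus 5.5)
Your proposal matches the paper's proof. An exact rational Gram matrix in parity-block form (one $15\times15$, ten $5\times5$, five $1\times1$ blocks) certifies level-two membership, and an exact rational even-supported functional $y$ with $\Mom(y)\psd$ and negative pairing excludes $M^{\ast}$ from $\Kc{5}{1}$ via Lemma~\ref{lem:exclusion}; two of your side remarks are off, though neither affects the argument. First, weak duality gives $-\inner{y}{c_1(M_0)}\le\eps_1(M_0)\inner{y}{c_1(\ones)}$ with $\eps_1(M_0)\approx4.32\cdot10^{-3}$, so the exclusion margin over the $\tfrac1{300}$ shift is thin, not large (the archived $y$ clears it only by a relative $2.5\cdot10^{-5}$). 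Second, since $\pA{M^{\ast}}\ge\tfrac1{300}(\sum_ix_i^2)^2$, there is no forced kernel at $M^{\ast}$, so no face-fixing is needed on the membership side.
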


\begin{proof}
\emph{Membership in $\Kc{5}{2}$.} Appendix~\ref{app:certA} specifies a rational symmetric matrix $Q$, given in
full in the accompanying archive, indexed by
the $70$ monomials of degree $4$ in five variables and block-diagonal with respect to the $16$ parity classes
of Lemma~\ref{lem:parity} (one $15\times 15$ block, ten $5\times 5$ blocks, five $1\times 1$ blocks),
satisfying the polynomial identity
\begin{equation}\label{eq:gramA}
  z_4(x)^{\top}\,Q\,z_4(x)\;=\;\Bigl(\sum_{i=1}^5 x_i^2\Bigr)^{2}\,\pA{M^{\ast}}(x)
\end{equation}
identically over $\Q$, with every block of $Q$ positive semidefinite. Both facts are checked in exact
rational arithmetic. Identity~\eqref{eq:gramA} is confirmed coefficient by coefficient, and positive
semidefiniteness of each block by an exact $LDL^{\top}$ decomposition (Section~\ref{sec:method}). The
right-hand side of \eqref{eq:gramA} is therefore a sum of squares, so $M^{\ast}\in\Kc{5}{2}$. A single
coefficient shows the shape of the check: the monomial $x_1^8$ arises in $z_4^{\top}Qz_4$ only from the
diagonal entry of $Q$ at the basis monomial $x_1^4$, and in the right-hand side only from
$M^{\ast}_{11}=\tfrac{1}{16}+\tfrac{1}{300}=\tfrac{79}{1200}$. The archived $Q$ has exactly
$Q_{x_1^4,x_1^4}=\tfrac{79}{1200}$, and the remaining $69$ even-monomial coefficients match in the same way.
To make the certificate concrete, one of the ten $5\times5$ blocks, in the basis
$(x_4x_5^3,\ x_4^3x_5,\ x_3^2x_4x_5,\ x_2^2x_4x_5,\ x_1^2x_4x_5)$, is
\begin{equation}\label{eq:blockA}
  \setlength\arraycolsep{3pt}
  \begin{pmatrix}
  \tfrac{337}{750} & \tfrac{1127}{2700} & \tfrac{99}{850} & -\tfrac{17417}{10200} & \tfrac{299}{850}\\[2pt]
  \tfrac{1127}{2700} & \tfrac{649}{1500} & \tfrac{599}{5100} & -\tfrac{5749}{3400} & \tfrac{3313}{10200}\\[2pt]
  \tfrac{99}{850} & \tfrac{599}{5100} & \tfrac{14261}{2550} & -\tfrac{5357}{11100} & \tfrac{5549}{5550}\\[2pt]
  -\tfrac{17417}{10200} & -\tfrac{5749}{3400} & -\tfrac{5357}{11100} & \tfrac{2911}{425} & -\tfrac{31619}{22200}\\[2pt]
  \tfrac{299}{850} & \tfrac{3313}{10200} & \tfrac{5549}{5550} & -\tfrac{31619}{22200} & \tfrac{30929}{10200}
  \end{pmatrix}\succeq 0,
\end{equation}
whose positive semidefiniteness is decided by an exact $LDL^{\top}$ over $\Q$. The full set of blocks is in
the archive described in Appendix~\ref{app:certA}.

\emph{Exclusion from $\Kc{5}{1}$.} Appendix~\ref{app:certA} also lists a rational functional $y$ on the
monomials of degree $6$, supported on the monomials that are even in every variable and equal to zero on all
others. Its moment matrix $\Mom(y)$, indexed by the $35$ monomials of degree $3$, is verified to be positive
semidefinite as a full $35\times 35$ matrix, again by exact $LDL^{\top}$. The pairing with the target
evaluates exactly to
\begin{equation}\label{eq:pairingA}
  \inner{y}{c_1(M^{\ast})}\;=\;-\,\frac{18919}{28800}\;<\;0 .
\end{equation}
Lemma~\ref{lem:exclusion} with $r=1$ yields $M^{\ast}\notin\Kc{5}{1}$.
\end{proof}

The two halves of the proof are independent finite computations, and each is re-checked by a standalone
program that reads only the archived certificate, re-derives $\horn$ and $M_0$ from
Conventions~\ref{conv:horn} and~\ref{conv:scaling}, and uses integer and rational arithmetic exclusively
(Section~\ref{sec:method}). The certificate can be re-checked independently by rerunning that program, or by
reverifying \eqref{eq:gramA} to \eqref{eq:pairingA} in any computer-algebra system.

\begin{remark}[Consistency with the level-one criterion]\label{rem:LVcheck}
The unshifted scaling already fails the first level. With $(a,b)=(\tfrac14,4)$, the first inequality of
\eqref{eq:LVab} reads $\tfrac14+1\ge 4$ and is false, so $M_0\notin\Kc{5}{1}$ by Theorem~\ref{thm:LV}. The
theorem does not apply to $M^{\ast}$, which is not a scaling of $\horn$. This is precisely why the dual
certificate \eqref{eq:pairingA} is needed. The same functional $y$ also pairs negatively with $M_0$,
$\inner{y}{c_1(M_0)}=-\tfrac{9981109}{384}$, giving a certificate for $M_0\notin\Kc{5}{1}$ that is
independent of Theorem~\ref{thm:LV}. The agreement between the two routes is a useful check on the entire
pipeline.
\end{remark}

\begin{remark}[Genuine sums of squares are needed]\label{rem:polya}
A simpler certificate class fails on $M^{\ast}$. Call $A$ \emph{coefficientwise certified} at order $r$ if
$(\sum_i x_i^2)^{r}\pA{A}$ has only nonnegative coefficients, a P\'olya-type condition that implies
membership in $\Kc{n}{r}$ through the diagonal Gram matrix. The coefficient of $x_1^2x_2^2x_4^2x_5^2$ in
$(\sum_i x_i^2)^{2}\pA{M^{\ast}}$ equals $-\tfrac{598}{25}$, so no order-two coefficientwise certificate
exists for $M^{\ast}$: the off-diagonal blocks of $Q$ in \eqref{eq:gramA} carry irreducible cancellation.
\end{remark}

\begin{remark}[On the size of the shift]\label{rem:shift-size}
The shift $\tfrac{1}{300}$ is not tuned finely. Section~\ref{sec:family} shows that every
$\eps\in[\tfrac{1}{10000},\,\tfrac{9981109}{2994257024})$ works, an interval of width exceeding
$3\cdot 10^{-3}$, and that the certificates deform explicitly with $\eps$.
\end{remark}

\section{The second pair}\label{sec:thmAprime}

The construction of Section~\ref{sec:thmA} is not specific to the first level. Moving to a more extreme
scaling produces a matrix outside $\Kc{5}{2}$, and the same shift-and-certify pattern then separates the
next pair of cones. Beyond settling a second case, this repetition is evidence that the method, and not an
accident of level one, is doing the work; Section~\ref{sec:landscape} quantifies how the relevant thresholds
vary across scalings and levels.

\begin{theorem}\label{thm:Aprime}
Let $D'=\mathrm{diag}\bigl(\tfrac{1}{16},\,64,\,1,\,1,\,1\bigr)$, $M_0'=D'\horn D'$ and
\[
  M'\;=\;M_0'+\tfrac{1}{800}\,\ones .
\]
Then $M'\in\Kc{5}{3}\setminus\Kc{5}{2}$. In particular $\Kc{5}{2}\subsetneq\Kc{5}{3}$, and combined with
Theorem~\ref{thm:A},
\[
  \Kc{5}{1}\;\subsetneq\;\Kc{5}{2}\;\subsetneq\;\Kc{5}{3}.
\]
\end{theorem}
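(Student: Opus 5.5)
The proof follows the two-certificate template of Theorem~\ref{thm:A}, with both halves moved up one level. The scaling $D'=\mathrm{diag}(\tfrac1{16},64,1,1,1)$ is chosen to be more extreme than $D$, so that $M_0'$ lies outside $\Kc{5}{2}$. The shift $\tfrac1{800}\ones$ moves the matrix strictly inside $\Kc{5}{3}$, where an exact rational Gram matrix can exist. No structural argument is available to place $M'$ between the levels: Theorem~\ref{thm:LV} only describes level one, and $M'$ is not a scaling of $\horn$ in any case. So the proof reduces entirely to two finite rational certificates, each checked in exact arithmetic.

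\emph{Membership in $\Kc{5}{3}$.} By Lemma~\ref{lem:basis}, any sum-of-squares representation of $(\sum_i x_i^2)^3\pA{M'}$, a form of degree $10$, uses only the $126$ monomials of degree $5$ in five variables. By Lemma~\ref{lem:parity} the search may be restricted to block-diagonal Gram matrices, one block per parity class of $z_5$. Since every degree-$5$ monomial has odd total degree, no class is all-even, which splits the basis into many small blocks. I would proceed in three steps:
\begin{enumerate}
\item Compute the threshold $\eps_3(M_0')$ numerically (Section~\ref{sec:threshold}) and confirm that $\tfrac1{800}$ lies comfortably above it.
\item Solve the block SDP at $\eps=\tfrac1{800}$, maximizing the smallest eigenvalue so that the solution is interior.
\item Round to rationals and project exactly onto the affine space defined by the coefficient identity $z_5^{\top}Qz_5=(\sum_ix_i^2)^3\pA{M'}$, in the style of \citet{PeyrlParrilo2008}.
\end{enumerate}
Then verify the identity coefficient by coefficient over $\Q$, and verify each block PSD by exact $LDL^{\top}$. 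A quick sanity check is the diagonal entry at $x_1^5$, which must equal $M'_{11}=\tfrac1{256}+\tfrac1{800}$.

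\emph{Exclusion from $\Kc{5}{2}$.} I would invoke Lemma~\ref{lem:exclusion} with $r=2$. Take a rational functional $y$ on the degree-$8$ monomials, supported on the monomials even in every variable, so that $\Mom(y)$, indexed by the $70$ monomials of degree $4$, is block-diagonal as in Remark~\ref{rem:exclusion-scope}. The functional is obtained from the dual SDP that minimizes $\inner{y}{c_2(M')}$ subject to $\Mom(y)\psd$ and a normalization such as $y((\sum_i x_i^2)^4)=1$. I would then perturb it toward a strictly feasible moment matrix, for example by adding a small multiple of the moments of a uniform measure on the sphere, and round to rationals. It remains to check exactly that $\Mom(y)\psd$ via $LDL^{\top}$ and that the pairing is strictly negative. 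As an independent cross-check, the same $y$ should also pair negatively with $c_2(M_0')$.

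The main obstacle is the exclusion half. The gap between $M'$ and $\Kc{5}{2}$ is small, since $\tfrac1{800}$ must sit below $\eps_2(M_0')$, and the optimal dual moment matrix is singular, with a kernel tied to the zeros $D'^{-1}(e_i+e_j)$ of Proposition~\ref{prop:kernel}. Naive rounding can therefore destroy positive semidefiniteness. To keep it, I would damp $y$ toward a strictly positive-definite moment functional by a small convex combination, chosen small enough that the pairing stays negative. The extreme entries of $D'$ (ratios up to $1024$) make the conditioning worse, so rescaling the variables before solving may be needed.
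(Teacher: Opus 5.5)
Your proposal matches the paper's proof. Membership in $\Kc{5}{3}$ is shown by an exact rational block-diagonal Gram matrix on the $126$ degree-$5$ monomials, and exclusion from $\Kc{5}{2}$ follows from Lemma~\ref{lem:exclusion} with an even-supported rational functional on the degree-$8$ monomials, whose $70\times70$ moment matrix is verified positive semidefinite and whose pairing with $c_2(M')$ is negative. The only difference is that the paper's proof consists of the archived certificates themselves, with exact pairing $-\tfrac{10139181}{25600}$, while yours describes the search-and-rounding procedure of Section~\ref{sec:method} that would have to produce them.
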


\begin{proof}
The structure is that of Theorem~\ref{thm:A}, one level up. We record what changes. For membership,
Appendix~\ref{app:certAprime} specifies a rational block Gram matrix, given in full in the accompanying archive, $Q'$ indexed by the $126$ monomials of
degree $5$, block-diagonal with respect to their $16$ parity classes, satisfying
\begin{equation}\label{eq:gramAprime}
  z_5(x)^{\top}\,Q'\,z_5(x)\;=\;\Bigl(\sum_{i=1}^5 x_i^2\Bigr)^{3}\,\pA{M'}(x)
\end{equation}
identically over $\Q$, with every block positive semidefinite. Hence $M'\in\Kc{5}{3}$. For exclusion,
Appendix~\ref{app:certAprime} lists a rational functional $y'$ on the monomials of degree $8$, supported on
the monomials even in every variable and zero elsewhere, whose full $70\times 70$ moment matrix over the
degree-$4$ basis is positive semidefinite, with
\begin{equation}\label{eq:pairingAprime}
  \inner{y'}{c_2(M')}\;=\;-\,\frac{10139181}{25600}\;<\;0 .
\end{equation}
Lemma~\ref{lem:exclusion} with $r=2$ gives $M'\notin\Kc{5}{2}$. All identities and semidefiniteness claims
are verified exactly, by the same standalone checker as in Section~\ref{sec:thmA}.
\end{proof}

\begin{remark}[Why a more extreme scaling is necessary]\label{rem:whyextreme}
The scaling of Theorem~\ref{thm:A} cannot separate the second pair. Its second-level threshold vanishes
numerically, $\eps_2(M_0)=0$ for $M_0=D\horn D$ with $(a,b)=(\tfrac14,4)$ (the profile~\eqref{eq:measured}),
so along $\ones$ there is no window $(\eps_3(M_0),\eps_2(M_0))$ from which to separate $\Kc{5}{2}$ and
$\Kc{5}{3}$ near it. The more extreme scaling $D'$ is chosen, again from its measured threshold profile,
outside $\Kc{5}{2}$, and the exclusion certificate \eqref{eq:pairingAprime} confirms the choice exactly. A
byproduct is worth recording: $M'$ is copositive, so \eqref{eq:pairingAprime} also gives an
explicit exact witness for $\Kc{5}{2}\ne\COP_5$, a fact otherwise available only through the scaling
theorem of \citet{DDGH2013}.
\end{remark}

\section{The third pair}\label{sec:thmAdoubleprime}

The construction is not tied to the all-ones direction. Any matrix $B=DCD$ with $D$ a positive diagonal and
$C$ symmetric with strictly positive entries is, like $\ones$, an interior direction of every $\Kc{5}{r}$
(Lemma~\ref{lem:interiordir}). Shifting a boundary scaling along such a $B$ and certifying the two sides is
the same argument. At the third pair the all-ones window is too narrow for our recovery to resolve at manageable denominators, and a
direction adapted to the zeros of the underlying Horn form does the work.

\begin{theorem}\label{thm:Adoubleprime}
Let $D''=\mathrm{diag}\bigl(\tfrac{1}{12},\,48,\,1,\,1,\,1\bigr)$ with diagonal
$d=(\tfrac1{12},48,1,1,1)$, let $M_0''=D''\horn D''$, and set $B=dd^{\top}=D''\ones D''$. Then
\[
  M''\;=\;M_0''+\tfrac{1}{1024}\,B \;\in\;\Kc{5}{4}\setminus\Kc{5}{3}.
\]
In particular $\Kc{5}{3}\subsetneq\Kc{5}{4}$, and combined with Theorems~\ref{thm:A} and~\ref{thm:Aprime},
\[
  \Kc{5}{1}\;\subsetneq\;\Kc{5}{2}\;\subsetneq\;\Kc{5}{3}\;\subsetneq\;\Kc{5}{4}.
\]
\end{theorem}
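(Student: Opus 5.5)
The proof will follow the two-certificate pattern of Theorems~\ref{thm:A} and~\ref{thm:Aprime}, now at levels $r=4$ and $r=3$, with one preliminary structural observation that makes the rank-one direction convenient. Since $B=dd^{\top}=D''\ones D''$, we have
\[
M''=D''\bigl(\horn+\tfrac{1}{1024}\ones\bigr)D''.
\]
In the even substitution this reads $\pA{M''}(x)=\pA{\horn+\ones/1024}(\sqrt{d}\circ x)$. The factor $(\sum_i x_i^2)^r$ is not scaling-invariant, and that is exactly why the scaling matters at all. What the observation does buy is that the zeros of the underlying Horn form, moved by $D''^{-1}$, are uniformly lifted: at each minimal zero $u=D''^{-1}(e_i+e_j)$ of $u^{\top}M_0''u$ the shift contributes $\tfrac1{1024}(d^{\top}u)^2=\tfrac4{1024}$ for every non-adjacent pair. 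By contrast, $\ones$ contributes wildly different amounts because $d_2/d_1=576$. This is the reason I expect rational Gram matrices to exist at moderate denominators. I would also invoke Lemma~\ref{lem:interiordir} to confirm that $B$ is an interior direction, although the proof itself does not need it.

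\emph{Membership in $\Kc{5}{4}$.} I would produce a rational symmetric $Q''$ indexed by the $210$ monomials of degree $6$. By Lemma~\ref{lem:parity} it is block-diagonal over the $16$ parity classes, and I would verify two things in exact arithmetic:
\[
z_6^{\top}Q''z_6=\Bigl(\sum_i x_i^2\Bigr)^{4}\pA{M''},
\]
coefficient by coefficient over $\Q$, and positive semidefiniteness of each block by exact $LDL^{\top}$. The construction would go as follows.
\begin{itemize}
\item Solve the level-$4$ SDP numerically with a maximal-rank (interior) solution.
\item Round it to rationals with small denominators.
\item Project it onto the affine space of coefficient-matching Gram matrices, in the manner of Peyrl--Parrilo.
\item Check that the projection stays positive definite, or at least PSD.
\end{itemize}
The shift guarantees strict feasibility margin, so this rounding should succeed.

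\emph{Exclusion from $\Kc{5}{3}$.} I would give a rational functional $y''$ on degree-$10$ monomials. It is supported on the monomials even in every variable and zero elsewhere, so $\Mom(y'')$, indexed by the $126$ monomials of degree $5$, is block-diagonal. I would verify $\Mom(y'')\psd$ by exact $LDL^{\top}$ and compute $\inner{y''}{c_3(M'')}$ exactly, showing it is negative. Lemma~\ref{lem:exclusion} with $r=3$ then gives $M''\notin\Kc{5}{3}$. To construct $y''$, I would take the numerical dual solution of the level-$3$ feasibility SDP for $M''$, which is a moment functional nearly concentrated on the displaced Horn zeros, and round it. Then I would repair positive semidefiniteness by adding a small multiple of a strictly positive-definite moment functional, for instance moments of a positive measure on the sphere. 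Finally I would recheck that the pairing remains strictly negative. Combining the two halves with Theorems~\ref{thm:A} and~\ref{thm:Aprime} yields the displayed chain.

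\emph{Main obstacle.} The hard step is the window itself: $\tfrac1{1024}$ must lie strictly between the level-$4$ and level-$3$ thresholds along $B$, and this gap is tiny. The exclusion certificate is therefore the delicate part, because the rounded $y''$ must keep a negative pairing while its moment matrix, which is nearly singular along the zero evaluations $z_5(u)$, is made exactly PSD. I would first try rounding in a basis adapted to those kernel vectors, treating the evaluation directions exactly and rounding only the complementary part. If that fails, I would shrink the perturbation and increase denominators until the exact checks pass.
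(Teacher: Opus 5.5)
Your proposal matches the paper's proof: an exact rational Gram matrix on the $210$ degree-$6$ monomials, block-diagonal by parity and PSD block by block, certifies membership in $\Kc{5}{4}$, and an even-supported rational functional $y''$ with PSD $126\times126$ moment matrix and negative pairing $\inner{y''}{c_3(M'')}$ gives exclusion via Lemma~\ref{lem:exclusion} with $r=3$. The differences lie only in how the certificates are found, which affects the size of the denominators but not the logic. For the membership Gram, the paper rounds a Gram for the half shift $M_0''+\tfrac1{2048}B$ in the $G_B$ metric, projects it exactly, and then adds the reserved positive-definite summand $\tfrac1{2048}G_B$. For the dual, it rounds in a metric adapted to a reference functional supported near the Horn zeros, not by repairing with sphere moments.
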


\begin{proof}
The structure is that of the previous two theorems, one level higher. For membership,
Appendix~\ref{app:certAdoubleprime} specifies a rational block Gram matrix, given in full in the accompanying archive, $Q''$ indexed by the $210$ monomials
of degree $6$, block-diagonal with respect to their $16$ even parity classes (one of size $35$, ten of size
$15$, five of size $5$), satisfying
\begin{equation}\label{eq:gramAdoubleprime}
  z_6(x)^{\top}\,Q''\,z_6(x)\;=\;\Bigl(\sum_{i=1}^5 x_i^2\Bigr)^{4}\,\pA{M''}(x)
\end{equation}
identically over $\Q$, with every block positive semidefinite. Hence $M''\in\Kc{5}{4}$. The Gram matrix is
obtained by reserving part of the shift: writing $F_{\eps}=(\sum_i x_i^2)^{4}\pA{M_0''+\eps B}$, one
constructs an exact Gram for $F_{1/2048}$ and adds $\tfrac1{2048}G_B$, where
$G_B=\mathrm{diag}(b_\alpha)$ is the strictly positive coefficient-diagonal Gram of
$(\sum_i x_i^2)^{4}\pA{B}$ from Lemma~\ref{lem:interiordir}. Because $G_B$ represents
$(\sum_i x_i^2)^{4}\pA{B}$ exactly, the sum is an exact Gram for $F_{1/1024}=F_{1/2048}+\tfrac1{2048}\,(\sum_i x_i^2)^4\pA{B}$.
For exclusion, Appendix~\ref{app:certAdoubleprime} provides a rational functional $y''$ on the monomials of
degree $10$, supported on those even in every variable and zero elsewhere, whose full $126\times126$ moment
matrix over the degree-$5$ basis is positive semidefinite, with
\begin{equation}\label{eq:pairingAdoubleprime}
  \inner{y''}{c_3(M'')}\;<\;0 .
\end{equation}
Lemma~\ref{lem:exclusion} with $r=3$ gives $M''\notin\Kc{5}{3}$. All identities and semidefiniteness claims
are verified exactly by the same standalone checker as in Section~\ref{sec:thmA}. The exact rational value of
the pairing \eqref{eq:pairingAdoubleprime} is displayed in Appendix~\ref{app:certAdoubleprime}.
\end{proof}

\begin{remark}[Why the all-ones direction is replaced at this level]\label{rem:whyDJD}
Along $\ones$ the third-pair window $\bigl(\eps_4(M_0''),\eps_3(M_0'')\bigr)$ is too thin to contain a
rational shift with manageable denominators. The direction $B=dd^{\top}$ equalizes the values of $\pA{B}$ at
the five minimal zeros of the Horn form $\pA{M_0''}$, which conditions both certificates: the membership Gram
rationalizes on a rounding grid of denominator $27720$ (its reduced entries carrying larger denominators, as
Remark~\ref{rem:lattice-trap} warns), and the exclusion functional on a larger but still explicit grid.
This is the sense in which the separators come from the method, and the choice of interior direction only
controls the arithmetic. The coefficient-diagonal Gram $G_B$ plays here the role that the diagonal Gram of
$\ones$ plays in Lemma~\ref{lem:Jinterior}.
\end{remark}

\section{The $\ones$-threshold device}\label{sec:threshold}

The three separation theorems so far answer a question of the form ``does the gap between two consecutive
cones contain an explicit point?'' This section isolates the search principle that produced those points. The idea is to
attach to every matrix a one-dimensional profile, its distance-to-membership along the all-ones direction,
and to read off separators from strict drops of that profile across levels. The device reduces a search in
the $15$-dimensional space $\Sym^5$ to the comparison of finitely many scalars, each computable by one
semidefinite program.

The all-ones matrix is the right direction because its associated polynomial is as explicit as a sum of
squares can be.

\begin{lemma}[$\ones$ is an interior direction]\label{lem:Jinterior}
For every $n$ and $r$,
\begin{equation}\label{eq:Jidentity}
  \Bigl(\sum_{i=1}^n x_i^2\Bigr)^{r}\pA{\ones}(x)
  \;=\;\Bigl(\sum_{i=1}^n x_i^2\Bigr)^{r+2}
  \;=\;\sum_{|\alpha|=r+2}\binom{r+2}{\alpha}\,\bigl(x^{\alpha}\bigr)^2 ,
\end{equation}
where $\alpha$ ranges over the multi-indices of total degree $r+2$ and $\binom{r+2}{\alpha}$ is the
multinomial coefficient. The Gram matrix of the right-hand side in the monomial basis of degree $r+2$ is
diagonal with strictly positive entries. Consequently $\ones$ lies in the interior of $\Kc{n}{r}$ for every
$r$.
\end{lemma}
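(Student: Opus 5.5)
The plan has three steps: establish the polynomial identity, read off a diagonal positive Gram matrix, and then turn that strict positivity into an open neighbourhood of $\ones$ inside $\Kc{n}{r}$.

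\emph{The identity \eqref{eq:Jidentity}.} Since $\ones_{ij}=1$ for all $i,j$,
\[
  \pA{\ones}(x)=\sum_{i,j}x_i^2x_j^2=\Bigl(\sum_i x_i^2\Bigr)^2 ,
\]
so the first equality is immediate. For the second, expand $(\sum_i x_i^2)^{r+2}$ by the multinomial theorem to get $\sum_{|\alpha|=r+2}\binom{r+2}{\alpha}x^{2\alpha}$, and use $x^{2\alpha}=(x^\alpha)^2$. The map $\alpha\mapsto 2\alpha$ is injective, so no two terms combine.

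\emph{The Gram matrix.} With $z=z_{r+2}$, the diagonal matrix $G=\mathrm{diag}\bigl(\binom{r+2}{\alpha}\bigr)$ satisfies $z^\top G z=(\sum_i x_i^2)^{r+2}$, and every diagonal entry is at least $1$. So $G\succeq I$, and in particular $G\succ 0$ with $\lambda_{\min}(G)\ge1$.

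\emph{Interior membership.} Fix a norm on $\Sym^n$. The map $A\mapsto(\sum_i x_i^2)^r\pA{A}$ is linear. For each $A$, choose a symmetric matrix $L(A)$ indexed by $z$ with $z^\top L(A)z=(\sum_i x_i^2)^r\pA{A}$. One concrete choice is to split each coefficient equally among the pairs $(\alpha,\beta)$ with $z_\alpha z_\beta$ equal to the given monomial. This choice depends linearly on $A$, so there is a constant $c$ with $\|L(A)\|_{\mathrm{op}}\le c\|A\|$. If $\|A-\ones\|<1/c$, then
\[
  G+L(A-\ones)\succeq I-\|L(A-\ones)\|_{\mathrm{op}}\,I\succ0 ,
\]
and this matrix is a Gram matrix for $(\sum_i x_i^2)^r\pA{A}$. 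Hence every $A$ in an open ball around $\ones$ lies in $\Kc{n}{r}$.

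The only point needing care is the last step: we must make sure the perturbation is represented by a \emph{linear, bounded} choice of Gram matrix, rather than appealing vaguely to "small perturbations stay SOS." The explicit equal-splitting construction of $L$ settles this. Everything else is direct computation.
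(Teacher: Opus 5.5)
Your proof is correct and takes the same route as the paper's: the multinomial identity, the diagonal positive definite Gram matrix $\mathrm{diag}\bigl(\binom{r+2}{\alpha}\bigr)$, and a perturbation argument through the linear map $A\mapsto(\sum_i x_i^2)^r\pA{A}$. The one difference is that you spell out the paper's one-line claim that a positive definite Gram matrix survives small coefficient perturbations, using the linear equal-splitting Gram assignment $L$ and the bound $G\succeq I$; this tightens the paper's step rather than changing the argument.
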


\begin{proof}
$\pA{\ones}(x)=\sum_{i,j}x_i^2x_j^2=(\sum_i x_i^2)^2$, which gives the first equality. The second is the
multinomial expansion of $(\sum_i x_i^2)^{r+2}$ after grouping $x^{2\alpha}=(x^{\alpha})^2$. A diagonal Gram
matrix with positive entries is positive definite on the full monomial basis, and a polynomial admitting a
positive definite Gram matrix retains a positive semidefinite one under every sufficiently small
perturbation of its coefficients. Since $A\mapsto$ coefficients of $(\sum_i x_i^2)^r\pA{A}$ is affine, a
neighbourhood of $\ones$ in $\Sym^n$ maps into such perturbations, so $\ones\in\mathrm{int}\,\Kc{n}{r}$.
\end{proof}

The same computation identifies a whole family of interior directions, used in Section~\ref{sec:thmAdoubleprime}.

\begin{lemma}[Positive shifts are interior directions]\label{lem:interiordir}
Let $D=\mathrm{diag}(d_1,\dots,d_n)$ with $d_i>0$ and let $C\in\Sym^n$ have strictly positive entries. Then
$B=DCD$ lies in the interior of $\Kc{n}{r}$ for every $r$, and the Gram matrix of
$(\sum_i x_i^2)^{r}\pA{B}$ in the degree-$(r+2)$ monomial basis may be taken diagonal with strictly positive
entries
\[
  b_\alpha\;=\;\sum_{i,j}d_id_jC_{ij}\binom{r}{\alpha-e_i-e_j},\qquad |\alpha|=r+2,
\]
where $\binom{r}{\beta}$ denotes the multinomial coefficient, zero unless $\beta\in\N^n$. The all-ones case
$\ones=I\ones I$ of Lemma~\ref{lem:Jinterior} and the rank-one case $dd^{\top}=D\ones D$ are both of this form.
\end{lemma}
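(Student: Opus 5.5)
The plan follows Lemma~\ref{lem:Jinterior} closely: first expand the polynomial explicitly, then read off a diagonal Gram matrix, and finally obtain interiority from a perturbation argument.

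\emph{Step 1 (expansion).} Write $\pA{B}(x)=\sum_{i,j}d_id_jC_{ij}x_i^2x_j^2$ and $(\sum_k x_k^2)^r=\sum_{|\beta|=r}\binom{r}{\beta}x^{2\beta}$. Multiplying out gives
\[
  \Bigl(\sum_k x_k^2\Bigr)^{r}\pA{B}(x)=\sum_{i,j}\sum_{|\beta|=r}d_id_jC_{ij}\binom{r}{\beta}\,x^{2(\beta+e_i+e_j)} .
\]
Next group the terms by $\alpha=\beta+e_i+e_j$, which has $|\alpha|=r+2$. The coefficient of $x^{2\alpha}=(x^\alpha)^2$ is then exactly $b_\alpha=\sum_{i,j}d_id_jC_{ij}\binom{r}{\alpha-e_i-e_j}$, with the stated convention that the multinomial vanishes off $\N^n$. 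Hence
\[
  \Bigl(\sum_k x_k^2\Bigr)^{r}\pA{B}=\sum_{|\alpha|=r+2}b_\alpha (x^\alpha)^2 ,
\]
so $\mathrm{diag}(b_\alpha)$ is a Gram matrix in the degree-$(r+2)$ basis $z$.

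\emph{Step 2 (positivity of every $b_\alpha$).} Every summand in $b_\alpha$ is nonnegative, because $d_i,d_j,C_{ij}>0$ and multinomials are nonnegative. So it is enough to find, for each $\alpha$ with $|\alpha|=r+2$, one pair $(i,j)$ with $\alpha-e_i-e_j\in\N^n$. Since $|\alpha|\ge2$, either some $\alpha_i\ge2$, and we take $i=j$, or there are two distinct indices with $\alpha_i,\alpha_j\ge1$. This holds for $r\ge0$, and for that pair $\binom{r}{\alpha-e_i-e_j}\ge1$. This is the only step with any content, and it is where strict positivity of \emph{all} entries of $C$ is needed: diagonal entries alone would leave $b_\alpha=0$ for squarefree $\alpha$.

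\emph{Step 3 (interiority).} The Gram matrix $G_B=\mathrm{diag}(b_\alpha)$ is positive definite. The map from $A$ to the coefficient vector $c_r(A)$ is linear. For $A$ near $B$, we take the Gram matrix $G_B+\Delta$, where $\Delta$ is a fixed linear lift of $c_r(A)-c_r(B)$ to the Gram space; for instance, place each coefficient difference on one chosen entry $(\gamma,\gamma')$ with $z_\gamma z_{\gamma'}$ equal to that monomial, symmetrized. Then $\|\Delta\|\le K\|A-B\|$ for some constant $K$, so $G_B+\Delta\succ0$ for $\|A-B\|$ small. Hence a neighbourhood of $B$ lies in $\Kc{n}{r}$.

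The closing remark follows by specialization. Taking $D=I$, $C=\ones$ recovers Lemma~\ref{lem:Jinterior}, since there $b_\alpha=\binom{r+2}{\alpha}$ by the Vandermonde-type identity $\sum_{i,j}\binom{r}{\alpha-e_i-e_j}=\binom{r+2}{\alpha}$. Taking $C=\ones$ with general $D$ gives $dd^{\top}$.
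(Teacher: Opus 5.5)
Your proof is correct and follows the paper's argument. Expanding and grouping gives the diagonal Gram matrix $\mathrm{diag}(b_\alpha)$; each $b_\alpha$ is positive because some pair with $\alpha\ge e_i+e_j$ contributes; and interiority follows from the perturbation argument of Lemma~\ref{lem:Jinterior}. Your explicit linear lift only spells out that last step more concretely; it can even be taken diagonal, since $c_r(A)$ is supported on the squares $x^{2\mu}$.
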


\begin{proof}
$\pA{B}(x)=\sum_{i,j}d_id_jC_{ij}\,x_i^2x_j^2$ is a strictly positive combination of the monomials
$x_i^2x_j^2$, so multiplying by $(\sum_i x_i^2)^{r}$ and grouping $x^{2\alpha}=(x^{\alpha})^2$ gives
$\sum_{|\alpha|=r+2}b_\alpha (x^{\alpha})^2$ with the stated $b_\alpha$, each a sum of nonnegative terms and
positive because some $i,j$ with $\alpha\ge e_i+e_j$ contributes. A diagonal Gram matrix with positive
entries is positive definite, and the perturbation argument of Lemma~\ref{lem:Jinterior} places $B$ in
$\mathrm{int}\,\Kc{n}{r}$.
\end{proof}

\begin{definition}[Threshold along an interior direction]\label{def:threshold}
Fix a direction $B$ lying in the interior of every level, $B\in\bigcap_{s\ge0}\mathrm{int}\,\Kc{n}{s}$ (by
Lemma~\ref{lem:interiordir} every $B=DCD$ with $D$ positive diagonal and $C$ entrywise positive qualifies, in
particular $\ones$ and $dd^{\top}$). For $M\in\Sym^n$ and $r\in\N$,
\[
  \eps_{r,B}(M)\;:=\;\min\bigl\{\eps\ge 0 \,:\, M+\eps B\in\Kc{n}{r}\bigr\},
  \qquad
  \eps_r(M):=\eps_{r,\ones}(M).
\]
The default direction is $B=\ones$, written $\eps_r$. Section~\ref{sec:thmAdoubleprime} and the fourth-level
profiles of Section~\ref{sec:landscape} use $B=dd^{\top}$, written $\eps_{r,B}$.
\end{definition}

The next lemma checks that the definition makes sense and behaves monotonically. Its third part is the
entire search principle. It is stated for $B=\ones$. The same statements hold verbatim for any interior
direction $B$, since only $B\in\mathrm{int}\,\Kc{n}{r}$ and the convexity and closedness of $\Kc{n}{r}$ are
used.

\begin{lemma}[Threshold calculus]\label{lem:threshold}
Let $M\in\Sym^n$.
\begin{enumerate}
\item The set $\{\eps\ge0 : M+\eps\ones\in\Kc{n}{r}\}$ is a nonempty closed interval $[\eps_r(M),\infty)$ for
  every $M\in\Sym^n$. In particular the minimum in Definition~\ref{def:threshold} is attained and finite.
\item $\eps_{r+1}(M)\le\eps_r(M)$ for every $r$.
\item If $\eps_{r+1}(M)<\eps_r(M)$, then for every
  $\eps\in\bigl(\eps_{r+1}(M),\,\eps_r(M)\bigr)$,
  \[
    M+\eps\ones\;\in\;\Kc{n}{r+1}\setminus\Kc{n}{r}.
  \]
\end{enumerate}
\end{lemma}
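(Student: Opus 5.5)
The plan is to prove the three parts in order, each from two facts: $\Kc{n}{r}$ is a closed convex cone, and $\ones$ lies in its interior (Lemma~\ref{lem:Jinterior}).

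\emph{Part 1.} Let $S_r(M)=\{\eps\ge0: M+\eps\ones\in\Kc{n}{r}\}$.

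\textbf{Upward closed.} Suppose $\eps\in S_r(M)$ and $\eps'>\eps$. Then
\[
M+\eps'\ones=(M+\eps\ones)+(\eps'-\eps)\ones .
\]
This is a sum of two elements of the cone, so $\eps'\in S_r(M)$. Concretely, Gram matrices add: the Gram of $M+\eps\ones$ plus $(\eps'-\eps)$ times the diagonal Gram from \eqref{eq:Jidentity} works.

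\textbf{Nonempty.} Since $\ones$ is interior, there is $\delta>0$ with $\ones+\delta M\in\Kc{n}{r}$. Multiplying by $1/\delta$ gives $M+\delta^{-1}\ones\in\Kc{n}{r}$, so $S_r(M)\neq\emptyset$.

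\textbf{Closed.} This step needs closedness of $\Kc{n}{r}$, which I would prove directly. The affine map from $A$ to the coefficient vector $c_r(A)$ is linear. So it suffices that the set of SOS forms of fixed degree $2r+4$ is closed. By Lemma~\ref{lem:basis}, that set is the image of the PSD cone on the degree-$(r+2)$ monomial basis under $Q\mapsto z^\top Qz$. Closedness then follows by a compactness argument. Fix $\|\cdot\|$ and take a convergent sequence $z^\top Q_kz\to p$. Evaluate at the points $x\in S^{n-1}$: we have $\mathrm{tr}(Q_k\,\int z z^\top d\mu)$ bounded, where $\int zz^\top d\mu$ is the positive definite matrix of the uniform measure on the sphere. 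Hence $\mathrm{tr}\,Q_k$ is bounded and a subsequence of $Q_k$ converges to some PSD $Q$ representing $p$. Thus $S_r(M)$ is the preimage of a closed set under a continuous map, intersected with $[0,\infty)$. A nonempty, closed, upward-closed subset of $[0,\infty)$ is $[\eps_r(M),\infty)$ with the minimum attained.

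\emph{Part 2.} Nestedness $\Kc{n}{r}\subseteq\Kc{n}{r+1}$ gives $S_r(M)\subseteq S_{r+1}(M)$. Taking minima yields $\eps_{r+1}(M)\le\eps_r(M)$.

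\emph{Part 3.} Take $\eps$ strictly between $\eps_{r+1}(M)$ and $\eps_r(M)$. Since $\eps>\eps_{r+1}(M)$, Part 1 gives $M+\eps\ones\in\Kc{n}{r+1}$. Since $\eps<\eps_r(M)$ and $\eps\ge0$, $\eps\notin S_r(M)=[\eps_r(M),\infty)$, so $M+\eps\ones\notin\Kc{n}{r}$.

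The only nontrivial step is closedness of the SOS cone in Part 1. The compactness argument above handles it: a positive definite moment matrix of a measure bounds the Gram matrices. Everything else uses only convexity, the interior property, and nesting, so the argument transfers verbatim to any interior direction $B$ via Lemma~\ref{lem:interiordir}.
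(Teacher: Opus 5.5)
Your proof is correct and matches the paper's argument step for step. Upward closedness comes from adding $(\eps'-\eps)\ones$, nonemptiness from $\ones\in\mathrm{int}\,\Kc{n}{r}$ and rescaling, and closedness from the closed sum-of-squares cone under the affine coefficient map; parts (2) and (3) then follow from nesting and part (1). The only difference is that you prove closedness of the fixed-degree sum-of-squares cone yourself, bounding $\mathrm{tr}\,Q_k$ via the positive definite sphere moment matrix, whereas the paper simply cites it as a known fact.
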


\begin{proof}
(1) Upward closedness: if $M+\eps\ones\in\Kc{n}{r}$ and $\delta>0$, then
$M+(\eps+\delta)\ones=(M+\eps\ones)+\delta\ones$ is a sum of two members of the convex cone $\Kc{n}{r}$
(Lemma~\ref{lem:Jinterior}). Closedness of the interval follows from closedness of $\Kc{n}{r}$, which is the
preimage of the closed sum-of-squares cone under an affine map. Nonemptiness holds for \emph{every}
$M\in\Sym^n$, copositive or not: since $\ones\in\mathrm{int}\,\Kc{n}{r}$ by Lemma~\ref{lem:Jinterior}, there
is $\delta>0$ with $\ones+\delta M\in\Kc{n}{r}$, and as $\Kc{n}{r}$ is a cone
$M+\delta^{-1}\ones=\delta^{-1}(\ones+\delta M)\in\Kc{n}{r}$. Hence $\eps_r(M)\le\delta^{-1}<\infty$. (In
particular the appeal to completeness or to P\'olya's theorem is unnecessary.)
(2) is the nesting $\Kc{n}{r}\subseteq\Kc{n}{r+1}$ read through Definition~\ref{def:threshold}.
(3) $\eps>\eps_{r+1}(M)$ places $M+\eps\ones$ in $\Kc{n}{r+1}$ by part~(1); $\eps<\eps_r(M)$ keeps it outside
$\Kc{n}{r}$ by definition of the threshold.
\end{proof}

In practice each threshold is the optimal value of a single semidefinite program, obtained from the
sum-of-squares feasibility problem for $M+\eps\ones$ by treating $\eps$ as one extra nonnegative variable
and minimizing it. The coefficient constraints stay linear in $(\eps,Q)$ because $c_r(M+\eps\ones)=
c_r(M)+\eps\,c_r(\ones)$. For the two scalings used in Sections~\ref{sec:thmA} and~\ref{sec:thmAprime} the
measured profiles are
\begin{equation}\label{eq:measured}
  D:\ \ \eps_1\approx 4.32\cdot 10^{-3},\ \ \eps_2=0;
  \qquad
  D':\ \ \eps_2\approx 1.979\cdot 10^{-3},\ \ \eps_3\approx 0.972\cdot 10^{-3},
\end{equation}
and the shifts $\tfrac1{300}\in(0,\eps_1(M_0))$ and $\tfrac1{800}\in(\eps_3(M_0'),\eps_2(M_0'))$ were chosen
inside the respective windows. These numerical values locate candidates and play no role in the proofs. Once
a window is guessed, Theorems~\ref{thm:A} and~\ref{thm:Aprime} certify membership and exclusion exactly.
Section~\ref{sec:landscape} maps the profiles $r\mapsto\eps_r$ across whole families of scalings.

\begin{remark}[The direction matters]\label{rem:direction}
Any interior point of the cones could serve in Definition~\ref{def:threshold}, but the usable window depends
strongly on the choice. Along the identity direction the window closes almost immediately for our scalings:
already $M_0+\tfrac{1}{32}I\in\Kc{5}{1}$, while $M_0+\eps\ones$ stays outside $\Kc{5}{1}$ for all
$\eps<\eps_1(M_0)\approx 4.3\cdot10^{-3}$, a workable margin. Heuristically, adding $\delta I$ perturbs the
small diagonal entries of an unbalanced scaling by a large relative amount and thereby repairs the very
imbalance that placed the matrix outside the lower cone. The flat shift $\eps\ones$ distributes the same
mass evenly. The landscape data of Section~\ref{sec:landscape} all use $\ones$, and the best direction in
general is an open question.
\end{remark}

\subsection{Structure of the threshold}\label{sec:threshold-structure}

The threshold is more than a computational convenience. As a function of the matrix it is convex, which
organizes the landscape and yields a closed-form bound. We collect the structural facts here.

\begin{proposition}[Convexity]\label{prop:convex}
For each $r$, the map $M\mapsto\eps_r(M)$ is finite on all of $\Sym^5$ and convex.
\end{proposition}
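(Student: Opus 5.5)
Finiteness is already contained in Lemma~\ref{lem:threshold}(1): for every $M\in\Sym^5$ the feasible set $\{\eps\ge0: M+\eps\ones\in\Kc{5}{r}\}$ is nonempty, closed and of the form $[\eps_r(M),\infty)$. That argument used only $\ones\in\mathrm{int}\,\Kc{5}{r}$ (Lemma~\ref{lem:Jinterior}) and the cone property, and never copositivity of $M$. So $\eps_r$ is a well-defined real-valued function on all of $\Sym^5$, with values in $[0,\infty)$ and with the minimum attained. I would simply cite this and move to convexity.

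For convexity I would work directly from the attained minima. Fix $M,N\in\Sym^5$ and $\lambda\in[0,1]$, and set $\eps_M=\eps_r(M)$ and $\eps_N=\eps_r(N)$. By attainment, $M+\eps_M\ones\in\Kc{5}{r}$ and $N+\eps_N\ones\in\Kc{5}{r}$. The cone $\Kc{5}{r}$ is convex: it is the preimage of the convex cone of sums of squares under the linear map $A\mapsto(\sum_ix_i^2)^r\pA{A}$. Hence the combination
\[
  \lambda(M+\eps_M\ones)+(1-\lambda)(N+\eps_N\ones)=\bigl(\lambda M+(1-\lambda)N\bigr)+\bigl(\lambda\eps_M+(1-\lambda)\eps_N\bigr)\ones
\]
also lies in $\Kc{5}{r}$. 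The coefficient $\lambda\eps_M+(1-\lambda)\eps_N$ is nonnegative, so it is feasible for the minimization defining $\eps_r(\lambda M+(1-\lambda)N)$. This gives $\eps_r(\lambda M+(1-\lambda)N)\le\lambda\eps_r(M)+(1-\lambda)\eps_r(N)$, which is convexity.

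An equivalent formulation, which I would mention as a remark, is that $\eps_r$ is the partial minimization over $\eps$ of the indicator of the convex set $\{(M,\eps):\eps\ge0,\ M+\eps\ones\in\Kc{5}{r}\}$. Partial minimization of a jointly convex function is convex, and finiteness rules out the value $-\infty$.

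I expect no real obstacle. The only points needing care are, first, that the minimum is attained, so that the witnesses $\eps_M,\eps_N$ exist rather than only approximating sequences. Second, that the constraint $\eps\ge0$ is preserved under convex combination, which is immediate. If attainment were in doubt, I would use $\eps_M+\delta$ and $\eps_N+\delta$ and let $\delta\downarrow0$ instead; but Lemma~\ref{lem:threshold}(1) already provides attainment via closedness of the sum-of-squares cone. Nothing in the argument depends on $n=5$, so I would note that the statement holds for every $n$ and for any interior direction $B$ in place of $\ones$.
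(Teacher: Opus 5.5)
Your argument is correct and is essentially the paper's proof: finiteness is read off Lemma~\ref{lem:threshold}(1), and convexity comes from combining the two attained feasible points $M+\eps_r(M)\ones$ and $N+\eps_r(N)\ones$ inside the convex cone $\Kc{5}{r}$. One minor slip is in your side remark: $\eps_r$ is the partial minimum over $\eps$ of $(M,\eps)\mapsto\eps+\delta_S(M,\eps)$, where $\delta_S$ is the indicator of $S=\{(M,\eps):\eps\ge0,\ M+\eps\ones\in\Kc{5}{r}\}$; partially minimizing $\delta_S$ alone would only return the indicator of the projection of $S$.
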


\begin{proof}
Finiteness on all of $\Sym^5$ is Lemma~\ref{lem:threshold}(1). For convexity, if
$M_1+\eps_1\ones\in\Kc{5}{r}$ and $M_2+\eps_2\ones\in\Kc{5}{r}$, then for $\lambda\in[0,1]$ convexity of
$\Kc{5}{r}$ gives
\[
  \bigl(\lambda M_1+(1-\lambda)M_2\bigr)+\bigl(\lambda\eps_1+(1-\lambda)\eps_2\bigr)\ones
  =\lambda(M_1+\eps_1\ones)+(1-\lambda)(M_2+\eps_2\ones)\in\Kc{5}{r},
\]
so $\eps_r(\lambda M_1+(1-\lambda)M_2)\le\lambda\eps_r(M_1)+(1-\lambda)\eps_r(M_2)$.
\end{proof}

Being a finite convex function on the finite-dimensional space $\Sym^5$, $\eps_r$ is continuous, and locally
Lipschitz. This is what makes the profiles $r\mapsto\eps_r(D\horn D)$ of Section~\ref{sec:landscape}
well behaved. Note that the scaling paths $s\mapsto D(s)\horn D(s)$ are quadratic curves in $\Sym^5$, not line
segments, so the hump shape of the profiles is consistent with convexity of $\eps_r$ along lines.

The coefficient hierarchy of Section~\ref{sec:coef} bounds the threshold from above, in closed form.

\begin{proposition}[Closed-form upper bound]\label{prop:sdpbound}
For every $M$ and $r$,
\[
  \eps_r(M)\;\le\;\eps^{\Cc{}{}}_r(M)\;=\;\max_{\mu}\Bigl(\frac{-c_r(M)_\mu}{c_r(\ones)_\mu}\Bigr)_{+},
\]
the coefficient threshold of Proposition~\ref{prop:coefthreshold}. In particular the entry level
$\iota(M):=\inf\{r:M\in\Kc{5}{r}\}$ (with $\iota(M)=\infty$ when $M\notin\bigcup_r\Kc{5}{r}$) satisfies
$\iota(M)\le\min\{r:\eps^{\Cc{}{}}_r(M)=0\}$, computable without any semidefinite program.
\end{proposition}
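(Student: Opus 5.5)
The plan is to certify $M+\eps\ones\in\Kc{5}{r}$ directly by a diagonal Gram matrix at the candidate value $\eps=\eps^{\Cc{}{}}_r(M)$, and then read off the inequality from the definition of $\eps_r$ as a minimum (Definition~\ref{def:threshold}, Lemma~\ref{lem:threshold}(1)).

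\textbf{Coefficient identity.} The map $A\mapsto c_r(A)$ is linear. By Lemma~\ref{lem:Jinterior}, $c_r(\ones)$ is the coefficient vector of $(\sum_i x_i^2)^{r+2}$. This vector is supported exactly on the monomials $x^{2\alpha}$ with $|\alpha|=r+2$, and there it takes the strictly positive values $\binom{r+2}{\alpha}$. The vector $c_r(M)$ is also supported only on such even monomials, since $(\sum_i x_i^2)^r\pA{M}$ is a polynomial in the $x_i^2$. So the ratio $-c_r(M)_\mu/c_r(\ones)_\mu$ is well defined on the common support, with $\mu$ ranging over the monomials $x^{2\alpha}$. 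Let $\eps^\ast$ be the positive-part maximum in the statement. Then $\eps^\ast\ge0$, and for every $\mu$
\[
  c_r(M+\eps^\ast\ones)_\mu\;=\;c_r(M)_\mu+\eps^\ast c_r(\ones)_\mu\;\ge\;0 .
\]

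\textbf{Certificate.} A form $\sum_{|\alpha|=r+2} g_\alpha x^{2\alpha}$ with all $g_\alpha\ge0$ equals $z^{\top}\mathrm{diag}(g_\alpha)z$, where $z$ is the vector of monomials of degree $r+2$. This is a sum of squares. Hence $M+\eps^\ast\ones\in\Kc{5}{r}$, and the definition of $\eps_r$ as a minimum over $\eps\ge0$ gives $\eps_r(M)\le\eps^\ast$. This is the Pólya-type certificate of Remark~\ref{rem:polya}. I will identify $\eps^\ast$ with $\eps^{\Cc{}{}}_r(M)$ of Proposition~\ref{prop:coefthreshold}. If that proposition is not yet available at this point, I will simply use the displayed formula as the definition, since only this formula is needed.

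\textbf{Entry level.} Let $r$ be any level with $\eps^{\Cc{}{}}_r(M)=0$. Then every coefficient $c_r(M)_\mu$ is nonnegative, and the diagonal-Gram argument above with $\eps=0$ gives $M\in\Kc{5}{r}$. Taking the least such $r$ yields the bound on $\iota(M)$. If no such $r$ exists, the right-hand side is $\min\emptyset=\infty$ and the bound holds trivially. The bound is computable without any semidefinite program because each $\eps^{\Cc{}{}}_r(M)$ is a finite maximum of explicit rational expressions in the entries of $M$.

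\textbf{Main obstacle.} The only delicate point is the support bookkeeping: the ratio must not involve a zero denominator. This is settled by the full-support description of $c_r(\ones)$ from Lemma~\ref{lem:Jinterior}. No other step needs more than this support bookkeeping.
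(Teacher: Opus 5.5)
Your proof is correct and is essentially the paper's argument. The paper invokes $\Cc{5}{r}\subseteq\Kc{5}{r}$, which is exactly your diagonal-Gram observation, to nest the feasible shift intervals $[\eps^{\Cc{}{}}_r,\infty)\subseteq[\eps_r,\infty)$, and cites Proposition~\ref{prop:coefthreshold} for the closed form; you inline both steps at the value $\eps^{\Cc{}{}}_r(M)$, and the entry-level bound follows the same way in both, since $\eps^{\Cc{}{}}_r(M)=0$ makes every coefficient of $(\sum_i x_i^2)^r\pA{M}$ nonnegative.
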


\begin{proof}
$\Cc{5}{r}\subseteq\Kc{5}{r}$, so $M+\eps\ones\in\Cc{5}{r}$ implies $M+\eps\ones\in\Kc{5}{r}$. The feasible
sets of shifts are nested intervals $[\eps^{\Cc{}{}}_r,\infty)\subseteq[\eps_r,\infty)$, giving
$\eps_r\le\eps^{\Cc{}{}}_r$. The closed form is Proposition~\ref{prop:coefthreshold}, and the entry-level
bound follows since $\eps^{\Cc{}{}}_r(M)=0$ forces $\eps_r(M)=0$, that is, $M\in\Kc{5}{r}$.
\end{proof}

The bound is loose, as Table~\ref{tab:coef} against Table~\ref{tab:landscape} shows: the coefficient
thresholds are larger by orders of magnitude, since the coefficient cone is much smaller than the
sum-of-squares cone. Its value is that it is explicit, and it certifies membership from a coefficient list
alone: if $\eps^{\Cc{}{}}_r(M)=0$, then $M\in\Kc{5}{r}$ with no semidefinite computation.

\section{A one-parameter family of separators}\label{sec:family}

A single point could conceivably sit in the gap by numerical accident. This section rules that reading out:
the certificates of Theorem~\ref{thm:A} deform explicitly along the segment $M_0+\eps\ones$, and one fixed
dual functional excludes the whole segment from $\Kc{5}{1}$, so the gap between the first two cones contains
a full interval of matrices and, around its interior points, open neighbourhoods.

\begin{theorem}\label{thm:B}
Let $M_0$ be the scaling of Theorem~\ref{thm:A}, let $\eps_{\mathrm{lo}}=\tfrac{1}{10000}$ and
\[
  \eps^{\ast}\;=\;\frac{9981109}{2994257024}\;\approx\;3.3334\cdot 10^{-3}.
\]
For every real $\eps$ with $\eps_{\mathrm{lo}}\le\eps<\eps^{\ast}$,
\[
  M_0+\eps\ones\;\in\;\Kc{5}{2}\setminus\Kc{5}{1}.
\]
For every $\eps$ with $\eps_{\mathrm{lo}}<\eps<\eps^{\ast}$ the matrix $M_0+\eps\ones$ lies in the interior
of $\Kc{5}{2}$. In particular $\Kc{5}{2}\setminus\Kc{5}{1}$ has nonempty interior in $\Sym^5$.
\end{theorem}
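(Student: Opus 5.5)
The proof splits into an exclusion half and a membership half, each reducing the whole segment to finitely many exact checks by using linearity in $\eps$.

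\emph{Exclusion.} Take the fixed functional $y$ of Theorem~\ref{thm:A}, whose moment matrix $\Mom(y)$ is already certified positive semidefinite. By Lemma~\ref{lem:Jinterior}, $c_1(M_0+\eps\ones)=c_1(M_0)+\eps\,c_1(\ones)$, so
\[
  \inner{y}{c_1(M_0+\eps\ones)}=\inner{y}{c_1(M_0)}+\eps\,\inner{y}{c_1(\ones)}
\]
is affine in $\eps$. Remark~\ref{rem:LVcheck} gives $\inner{y}{c_1(M_0)}=-\tfrac{9981109}{384}$, and at $\eps=\tfrac1{300}$ the value is $-\tfrac{18919}{28800}$. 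From these two values I would solve exactly for $\inner{y}{c_1(\ones)}$. Its root should be $\eps^{\ast}=\tfrac{9981109}{384\,\inner{y}{c_1(\ones)}}$, and the denominator $2994257024=384\cdot 7797544$ fits this. Hence the pairing is negative precisely for $\eps<\eps^{\ast}$, and Lemma~\ref{lem:exclusion} excludes every $M_0+\eps\ones$ with $\eps<\eps^{\ast}$ from $\Kc{5}{1}$. The only computation is one exact rational pairing $\inner{y}{c_1(\ones)}$, which is a weighted sum of $y$ on the even monomials by Lemma~\ref{lem:Jinterior}.

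\emph{Membership.} By Lemma~\ref{lem:threshold}(1), the set of $\eps$ with $M_0+\eps\ones\in\Kc{5}{2}$ is upward closed. So it suffices to certify the single endpoint $\eps_{\mathrm{lo}}=\tfrac1{10000}$, by an exact rational block Gram matrix $Q_{\mathrm{lo}}$ for $(\sum_ix_i^2)^2\pA{M_0+\eps_{\mathrm{lo}}\ones}$, checked exactly as in Theorem~\ref{thm:A}. For larger $\eps$ the certificate deforms explicitly as $Q_{\mathrm{lo}}+(\eps-\eps_{\mathrm{lo}})G_\ones$, where $G_\ones=\mathrm{diag}\bigl(\binom{4}{\alpha}\bigr)$ is the positive diagonal Gram of Lemma~\ref{lem:Jinterior} with $r=2$. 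This covers the whole interval.

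\emph{Interior.} For $\eps_{\mathrm{lo}}<\eps<\eps^{\ast}$, write $M_0+\eps\ones=(M_0+\eps_{\mathrm{lo}}\ones)+(\eps-\eps_{\mathrm{lo}})\ones$. This is a point of the convex cone $\Kc{5}{2}$ plus a positive multiple of an interior point, hence interior (Lemma~\ref{lem:Jinterior}). Since $\Kc{5}{1}$ is closed, $\Sym^5\setminus\Kc{5}{1}$ is open, so a small ball around $M_0+\eps\ones$ lies in $\mathrm{int}\,\Kc{5}{2}\setminus\Kc{5}{1}$. This gives nonempty interior.

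\emph{Main obstacle.} The main obstacle is producing an exact PSD Gram matrix at the small shift $\eps_{\mathrm{lo}}$, close to the boundary where the zeros of the Horn form force near-singular blocks. I would first try the reservation trick of Theorem~\ref{thm:Adoubleprime}: find an exact Gram at $\tfrac1{20000}$ by rounding a numerical SDP solution and projecting onto the affine coefficient constraints. Then add $\tfrac1{20000}G_\ones$ to restore strict positivity, and finish with an exact $LDL^{\top}$ on each block.
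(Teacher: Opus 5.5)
Your proposal is correct and follows the paper's proof essentially step for step. Membership comes from the anchor Gram $Q_{\mathrm{lo}}$ deformed by $(\eps-\eps_{\mathrm{lo}})\,\mathrm{diag}\bigl(\binom{4}{\alpha}\bigr)$, exclusion from the fixed functional $y$ of Theorem~\ref{thm:A} with affine pairing $\inner{y}{c_1(M_0)}+\eps\,\inner{y}{c_1(\ones)}$, and the interior claim from an openness argument. For that last step you use \emph{convex cone plus a positive multiple of an interior point} together with closedness of $\Kc{5}{1}$, whereas the paper uses $Q(\eps)\succ0$ and the open condition that the pairing stays negative; the two are equivalent.

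One arithmetic slip: $2994257024\neq 384\cdot 7797544$. Solving exactly from the two published pairings gives $\inner{y}{c_1(\ones)}=\tfrac{23392633}{3}$, so $384\,\inner{y}{c_1(\ones)}=128\cdot 23392633=2994257024$, which does confirm $\eps^{\ast}$.
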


\begin{proof}
\emph{Membership on the segment.} Appendix~\ref{app:certB} specifies a rational block Gram matrix, given in full in the accompanying archive,
$Q_{\mathrm{lo}}\psd$ certifying $M_0+\eps_{\mathrm{lo}}\ones\in\Kc{5}{2}$, verified exactly as in
Theorem~\ref{thm:A}. Let $\Delta$ be the diagonal positive definite Gram matrix of
$(\sum_ix_i^2)^{2}\pA{\ones}$ from \eqref{eq:Jidentity} with $r=2$, so
$\Delta=\mathrm{diag}\bigl(\binom{4}{\alpha}\bigr)_{|\alpha|=4}$. Since $A\mapsto c_2(A)$ is linear,
\[
  Q(\eps)\;:=\;Q_{\mathrm{lo}}+(\eps-\eps_{\mathrm{lo}})\,\Delta
\]
is a Gram matrix for $M_0+\eps\ones$ whenever $\eps\ge\eps_{\mathrm{lo}}$, and it is positive semidefinite
as a sum of a positive semidefinite and a nonnegative multiple of a positive definite matrix. Hence
$M_0+\eps\ones\in\Kc{5}{2}$ for every real $\eps\ge\eps_{\mathrm{lo}}$.

\emph{Exclusion on the segment.} Let $y$ be the functional of Theorem~\ref{thm:A}. By linearity of $c_1$,
\[
  \inner{y}{c_1(M_0+\eps\ones)}\;=\;A+\eps B,
  \qquad
  A=\inner{y}{c_1(M_0)}=-\frac{9981109}{384},
  \qquad
  B=\inner{y}{c_1(\ones)}=\frac{23392633}{3},
\]
both computed exactly from the archived data. The pairing is negative precisely when
$\eps<-A/B=\eps^{\ast}$, and Lemma~\ref{lem:exclusion} excludes $M_0+\eps\ones$ from $\Kc{5}{1}$ for every
such real $\eps$.

\emph{Interior points.} Fix $\eps\in(\eps_{\mathrm{lo}},\eps^{\ast})$. Then $Q(\eps)\succ 0$: on any vector
$v$, $v^{\top}Q(\eps)v\ge(\eps-\eps_{\mathrm{lo}})\,v^{\top}\Delta v>0$ for $v\ne0$. A polynomial with a
positive definite Gram matrix stays a sum of squares under all sufficiently small coefficient
perturbations, and the affine map $A\mapsto c_2(A)$ carries a neighbourhood of $M_0+\eps\ones$ in $\Sym^5$
into such perturbations. Hence a neighbourhood of $M_0+\eps\ones$ lies in $\Kc{5}{2}$. Shrinking it further
so that the affine pairing $A+\eps'B$ stays negative on it, which is an open condition, keeps the whole
neighbourhood outside $\Kc{5}{1}$.
\end{proof}

\begin{remark}[Endpoints]\label{rem:endpoints}
Both endpoints reflect the certificates, not the geometry. Below $\eps_{\mathrm{lo}}$ nothing breaks except
our archive: the same construction with a smaller anchor certificate extends the interval downward, and the
measured $\eps_2(M_0)=0$ from \eqref{eq:measured} suggests the segment lies in $\Kc{5}{2}$ for every
$\eps>0$. Above $\eps^{\ast}$ the fixed functional $y$ stops separating, while the true threshold
$\eps_1(M_0)\approx 4.32\cdot10^{-3}$ sits higher. A functional optimized for a larger pairing margin
extends the certified interval toward it. We record the interval that our archived, exactly verified data
support.
\end{remark}

The interior statement of Theorem~\ref{thm:B} can be made quantitative: the gap between the first two cones
contains a ball centered at the explicit point $M^{\ast}$, so it is full-dimensional in the strongest sense.

\begin{proposition}[An inscribed ball]\label{prop:ball}
There is $\rho>0$ such that every $N\in\Sym^5$ with $\|N-M^{\ast}\|_F\le\rho$ lies in
$\Kc{5}{2}\setminus\Kc{5}{1}$. Hence $\Kc{5}{2}\setminus\Kc{5}{1}$ has nonempty interior of full dimension
$\binom{6}{2}=15$.
\end{proposition}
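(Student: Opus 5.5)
The plan is to reuse the two certificates behind Theorem~\ref{thm:B} at the point $M^{\ast}=M_0+\tfrac1{300}\ones$. Here $\tfrac1{300}$ lies strictly inside $(\eps_{\mathrm{lo}},\eps^{\ast})$, since $\tfrac1{10000}<\tfrac1{300}<\tfrac{9981109}{2994257024}$. Both halves of the claim are open conditions: one is strict positive definiteness of a Gram matrix, the other is a strictly negative linear pairing. Each half therefore gives a radius, and $\rho$ is the minimum of the two radii. If an explicit $\rho$ is wanted, both radii can be made quantitative.

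\emph{Exclusion radius.} Let $y$ be the functional of Theorem~\ref{thm:A}. The map $N\mapsto\inner{y}{c_1(N)}$ is linear on $\Sym^5$; write it as $\inner{G_y}{N}$ for an explicit symmetric $G_y$, whose entries are $(G_y)_{ij}=\inner{y}{c_1(E_{ij})}$ with $E_{ij}$ the symmetrized elementary matrices. Then
\[
\inner{y}{c_1(N)}\le-\tfrac{18919}{28800}+\|G_y\|_F\,\|N-M^{\ast}\|_F .
\]
This is negative once $\|N-M^{\ast}\|_F<\rho_1:=\tfrac{18919}{28800}\|G_y\|_F^{-1}$. Lemma~\ref{lem:exclusion} then gives $N\notin\Kc{5}{1}$, with the same $y$ and $\Mom(y)\psd$ unchanged.

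\emph{Membership radius.} Take $Q(\tfrac1{300})=Q_{\mathrm{lo}}+(\tfrac1{300}-\tfrac1{10000})\Delta$ from the proof of Theorem~\ref{thm:B}. Its smallest eigenvalue satisfies $\lambda_{\min}\ge(\tfrac1{300}-\tfrac1{10000})\min_\alpha\binom4\alpha$, and the minimum multinomial coefficient is $1$, so $\lambda_{\min}\ge \tfrac{97}{30000}$. For a perturbation $E=N-M^{\ast}$ I correct the Gram matrix by an explicit linear map $E\mapsto G(E)$ such that $z_4^{\top}G(E)z_4=(\sum_ix_i^2)^2\pA{E}$. One choice: distribute each coefficient of $(\sum x_i^2)^2\pA{E}$, a form in the even monomials $x^{2\beta}$ with $|\beta|=4$, onto the diagonal Gram entry at $x^\beta$. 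Then $G(E)$ is diagonal with entries $c_2(E)_{2\beta}$, and these are linear in $E$ with $|c_2(E)_{2\beta}|\le K\|E\|_F$ for an explicit constant $K$ (crudely, $K\le\sum_{i,j}$ of the multinomial counts, at most $\binom{4}{\cdot}$-sums bounded by $3^2\cdot 5$ or so). Now $Q(\tfrac1{300})+G(E)$ represents $(\sum x_i^2)^2\pA{N}$, and it is positive semidefinite whenever $K\|E\|_F\le\tfrac{97}{30000}$. So $\rho_2:=\tfrac{97}{30000K}$ works.

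With $\rho=\min(\rho_1,\rho_2)>0$, every $N$ in the closed ball lies in $\Kc{5}{2}\setminus\Kc{5}{1}$. A ball in $\Sym^5$ has dimension $15$, which gives the final sentence.

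The main obstacle is bookkeeping rather than any conceptual step: computing $\|G_y\|_F$ exactly from the archived $y$, and bounding $K$ cleanly. For a pure existence statement, as phrased, I would skip both computations and invoke only continuity of the two linear maps. The proof would then be just the two open-condition arguments already used in Theorem~\ref{thm:B}.
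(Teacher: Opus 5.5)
Your proof is correct and has the same architecture as the paper's. Membership survives because a positive definite Gram matrix for $M^{\ast}$ absorbs a linear Gram correction of size $O(\|E\|_F)$. Exclusion survives because the pairing with the fixed functional $y$ is a strictly negative affine function of $N$. And $\rho$ is the smaller of the two radii.

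\textbf{Where the membership margin comes from.} This is the real difference between your route and the paper's.
\begin{itemize}
\item The paper works with the archived Theorem~\ref{thm:A} Gram matrix $Q$ itself. This needs an additional exact certificate $Q-\tfrac1{1250}I\succeq0$. It also leaves the norm $\kappa$ of the coefficient-to-Gram correction, and the constant $C$, as unspecified finite numbers.
\item You instead take the Theorem~\ref{thm:B} Gram matrix $Q_{\mathrm{lo}}+\tfrac{97}{30000}\Delta$ at $\eps=\tfrac1{300}$. Its margin $\lambda_{\min}\ge\tfrac{97}{30000}$ follows from the already verified $Q_{\mathrm{lo}}\succeq0$ and $\Delta\succeq I$, with no new exact computation.
\item Your diagonal correction $G(E)=\mathrm{diag}\bigl(c_2(E)_{2\beta}\bigr)$ makes the constant explicit. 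For fixed $\beta$, the weights multiplying the entries $E_{ij}$ in $c_2(E)_{2\beta}$ sum to the coefficient of $x^{2\beta}$ in $(\sum_i x_i^2)^4$. That coefficient is $\binom{4}{\beta}\le 24$. So $K=24$ works, replacing your ``$3^2\cdot5$ or so'', and $\rho_2=\tfrac{97}{720000}$.
\item On the exclusion side, your $G_y$ is exactly the matrix $X$ of \eqref{eq:Xstate}, since $\inner{X}{N}=\inner{y}{c_1(N)}$ by definition.
\end{itemize}
So your route yields a fully explicit $\rho$, which the paper's proof does not. It does this without a new certificate.

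\textbf{One small slip.} With $\rho_1=\tfrac{18919}{28800}\,\|G_y\|_F^{-1}$, your estimate gives only $\inner{y}{c_1(N)}\le 0$ on the boundary sphere $\|N-M^{\ast}\|_F=\rho_1$. That is not strict negativity. For the closed ball in the statement, take $\rho_1$ to be half of that value, as the paper does, or any smaller positive number.
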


\begin{proof}
Write $L$ for the linear map $N\mapsto c_2(N)$ and $L^{\ast}_{\mathrm{Gram}}$ for the fixed rational
coefficient-to-Gram assignment realized by the archived block structure, so that a perturbation $N=M^{\ast}+E$
has Gram matrix $Q(E)=Q+L^{\ast}_{\mathrm{Gram}}(LE)$ satisfying the level-two identity. The archived $Q$ is
positive definite with a rational lower bound on its least eigenvalue,
\[
  \lambda_{\min}(Q)\;\ge\;\tfrac{1}{1250},
\]
verified exactly: $Q-\tfrac1{1250}I$ is positive semidefinite on every parity block. The linear map
$E\mapsto L^{\ast}_{\mathrm{Gram}}(LE)$ is fixed, hence bounded, so $\|Q(E)-Q\|_2\le\kappa\|E\|_F$ for a
finite $\kappa>0$. For $\|E\|_F\le\rho_2:=\tfrac{1}{2500\kappa}$ the perturbation has $2$-norm below
$\lambda_{\min}(Q)$ and $Q(E)$ stays positive definite, so $N\in\Kc{5}{2}$. On the exclusion side,
$\inner{y}{c_1(N)}=\inner{y}{c_1(M^{\ast})}+\inner{y}{c_1(E)}$ with $|\inner{y}{c_1(E)}|\le C\|E\|_F$ for a
finite $C>0$, so the pairing stays below $\tfrac12\inner{y}{c_1(M^{\ast})}<0$ for
$\|E\|_F\le\rho_1:=|\inner{y}{c_1(M^{\ast})}|/(2C)$. Lemma~\ref{lem:exclusion} then keeps $N\notin\Kc{5}{1}$.
Take $\rho=\min(\rho_1,\rho_2)>0$. The open ball $\|N-M^{\ast}\|_F<\rho$ then lies in
$\Kc{5}{2}\setminus\Kc{5}{1}$ and is open in $\Sym^5$, giving full dimension.
\end{proof}

\begin{remark}\label{rem:ball-value}
Only the eigenvalue bound $\lambda_{\min}(Q)\ge\tfrac1{1250}$ is claimed exactly. The constants $\kappa,C$ are
finite operator norms of fixed linear maps and the resulting $\rho$ is positive but pessimistic (the archived
least eigenvalue is $\lambda_{\min}(Q)\approx8.3\cdot10^{-4}$). The point of the statement is qualitative:
the separation is stable under arbitrary symmetric perturbations of $M^{\ast}$, in every direction. The Frobenius
ball is the crudest such neighbourhood. The family of Theorem~\ref{thm:B} already exhibits a long thin slab of
separators along $\ones$ on which the certificates are given in closed form. That
$\Kc{5}{2}\setminus\Kc{5}{1}$ has nonempty interior also follows abstractly from strict inclusion together
with $\ones\in\mathrm{int}\,\Kc{5}{1}$ and the closedness of $\Kc{5}{1}$. The content here is the explicit
neighbourhood of a named point.
\end{remark}

\begin{remark}[Beyond the Horn orbit]\label{rem:beyond-horn}
The $T(\psi)$ example indicates that the construction is not special to the Horn matrix among the extreme copositive matrices.
Section~\ref{sec:Tpsi} runs the same programme on the second exceptional family of $\COP_5$ and produces a
disjoint set of separators, confirming that the method is a property of the copositive boundary and not of
$\horn$.
\end{remark}

\section{Strict steps at arbitrarily large levels}\label{sec:thmD}

The explicit results of Sections~\ref{sec:thmA} to~\ref{sec:thmAdoubleprime} concern the bottom of the hierarchy.
At the other end, two known theorems combine into a statement we have not found recorded: that strict consecutive
inclusions recur unboundedly. The ingredients are the scaling theorem of Dickinson, D\"ur, Gijben and
Hildebrand, whose Lemma~2.1 states that
\begin{equation}\label{eq:ddgh-lemma}
  \bigl\{X\in\Sym^n \,:\, DXD\in\Kc{n}{r}\ \text{for all } D\in\mathcal{D}\bigr\}\;=\;\Kc{n}{0}
  \qquad\text{for every fixed } r,
\end{equation}
so that by their Theorem~2.2 every copositive $X\notin\SPN_n$ admits, for each $r$, a scaling with
$DXD\notin\Kc{n}{r}$ \citep{DDGH2013}, and the completeness theorem \eqref{eq:sv} of Schweighofer and
Vargas, who note explicitly that no single level attains $\COP_5$
\citep[Theorem~3 and the remark following it]{SchweighoferVargas2024}.

\begin{theorem}\label{thm:D}
For every $R\in\N$ there exist $s>R$ and $D\in\mathcal{D}$ with
\[
  D\horn D\;\in\;\Kc{5}{s}\setminus\Kc{5}{s-1}.
\]
Consequently the chain $\Kc{5}{0}\subseteq\Kc{5}{1}\subseteq\Kc{5}{2}\subseteq\cdots$ contains infinitely
many strict inclusions between consecutive members.
\end{theorem}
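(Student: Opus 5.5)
Given $R$, we pick a diagonal scaling of $\horn$ that lies outside $\Kc{5}{R}$ but is still copositive, and then read off its entry level. First, $\horn$ is copositive and $\horn\notin\SPN_5=\Kc{5}{0}$. By \eqref{eq:ddgh-lemma}, equivalently Theorem~2.2 of \citet{DDGH2013}, applied with $r=R$, there is $D\in\mathcal{D}$ with $D\horn D\notin\Kc{5}{R}$. By Convention~\ref{conv:scaling}, $D\horn D$ is copositive. The completeness theorem \eqref{eq:sv} then gives some finite $r$ with $D\horn D\in\Kc{5}{r}$.

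Next define the entry level $s:=\min\{r\in\N: D\horn D\in\Kc{5}{r}\}$. This minimum exists because the set is nonempty and lies in $\N$. By nestedness, $D\horn D\notin\Kc{5}{r}$ for every $r\le R$. This is the step to state carefully: exclusion from $\Kc{5}{R}$ forces exclusion from all lower levels, since $\Kc{5}{r}\subseteq\Kc{5}{R}$ for $r\le R$. Hence $s>R\ge 0$, so $s\ge1$ and $s-1$ is a legitimate level. By minimality of $s$, $D\horn D\in\Kc{5}{s}\setminus\Kc{5}{s-1}$, which is the first claim.

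For the consequence, suppose only finitely many consecutive inclusions were strict, all at steps $\Kc{5}{s-1}\subsetneq\Kc{5}{s}$ with $s\le R_0$. Applying the first part with $R=R_0$ produces a strict step at some $s>R_0$, a contradiction. Equivalently, iterating with $R$ replaced by the previously found $s$ yields an infinite strictly increasing sequence of strict steps.

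I expect no real obstacle, since both deep ingredients are quoted. The one point to check is that \citet{DDGH2013} indeed gives, for a fixed copositive non-$\SPN$ matrix $X$ and each fixed $r$, a single scaling outside $\Kc{n}{r}$. This is exactly the contrapositive of \eqref{eq:ddgh-lemma} with $X=\horn\notin\Kc{5}{0}$: if every scaling of $\horn$ were in $\Kc{5}{R}$, then $\horn$ would lie in the left-hand set, hence in $\Kc{5}{0}$, which is false. No quantitative control on $s$ is obtained, and none is needed.
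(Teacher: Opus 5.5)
Your proof is correct and follows the paper's argument. You get a scaling $D\horn D\notin\Kc{5}{R}$ from the Dickinson--D\"ur--Gijben--Hildebrand theorem, place it in some finite level by the Schweighofer--Vargas completeness theorem, and take the minimal such level $s$, which exceeds $R$ by nestedness. You also spell out details the paper leaves implicit: that $s\ge1$, the contrapositive of their Lemma~2.1, and the ``consequently'' clause.
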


\begin{proof}
Fix $R$. Since $\horn\in\COP_5\setminus\SPN_5$, Theorem~2.2 of \citet{DDGH2013} provides
$D\in\mathcal{D}$ with $D\horn D\notin\Kc{5}{R}$. The scaling is copositive
(Convention~\ref{conv:scaling}), so \eqref{eq:sv} places it in $\Kc{5}{s}$ for some finite $s$, necessarily
exceeding $R$. Choosing $s$ minimal with this property gives $D\horn D\in\Kc{5}{s}\setminus\Kc{5}{s-1}$.
\end{proof}

\begin{remark}[Scope]\label{rem:thmD-scope}
Theorem~\ref{thm:D} is an immediate combination of \citet{DDGH2013} and
\citet{SchweighoferVargas2024}. We claim only the observation that the two yield the adjacent-step
formulation. The argument is non-constructive twice over. It names neither the scaling nor the level, and
the strict indices $s$ it produces may a priori leave gaps. Whether \emph{every} consecutive inclusion is
strict, equivalently whether the entry-level function $\iota(D):=\min\{r : D\horn D\in\Kc{5}{r}\}$ attains
every positive integer, is open. Theorems~\ref{thm:A}, \ref{thm:Aprime} and~\ref{thm:Adoubleprime} settle the
first three cases, and Section~\ref{sec:landscape} charts $\iota$ empirically across scaling families.
\end{remark}

\section{Closed-form separators in the coefficient hierarchy}\label{sec:coef}

The sum-of-squares condition in \eqref{eq:parrilo} sits above a purely combinatorial one. Replacing ``is a
sum of squares'' by ``has nonnegative coefficients'' gives the coefficient cones of de Klerk and Pasechnik
\citep{deKlerkPasechnik2002},
\begin{equation}\label{eq:coefcone}
  \Cc{n}{r}\;=\;\Bigl\{A\in\Sym^n \;:\;
  \bigl(\textstyle\sum_{i=1}^n x_i^2\bigr)^{r}\,\pA{A}(x)\ \text{has only nonnegative coefficients}\Bigr\}.
\end{equation}
A polynomial with nonnegative coefficients is a nonnegative combination of the squares $(x^{\alpha})^2$,
so $\Cc{n}{r}\subseteq\Kc{n}{r}$, and P\'olya's theorem gives $\bigcup_r\Cc{n}{r}\supseteq\mathrm{int}\,\COP_n$.
The coefficient cones are polyhedral, and membership is decided by inspecting a coefficient list, with no semidefinite program. This section shows that the threshold device of
Section~\ref{sec:threshold} applies to them verbatim and yields a \emph{closed-form threshold at every level};
for the fixed scaling of Theorem~\ref{thm:coef}, strict drops of that threshold provide seven consecutive
explicit separations, with certificates a reader checks by hand. The coefficient hierarchy thus gives an
elementary low-level analogue of the sum-of-squares separations of
Sections~\ref{sec:thmA} to~\ref{sec:thmAdoubleprime}.

The coefficient threshold is explicit. Since $\pA{\ones}=(\sum_i x_i^2)^2$ has strictly positive
coefficients, the coefficient of every monomial in $(\sum_i x_i^2)^r\pA{M+\eps\ones}$ is affine and strictly
increasing in $\eps$, which gives a closed form for
$\eps^{\Cc{}{}}_r(M):=\min\{\eps\ge0:M+\eps\ones\in\Cc{5}{r}\}$.

\begin{proposition}[Closed-form coefficient threshold]\label{prop:coefthreshold}
Let $c_r(M)_\mu$ and $c_r(\ones)_\mu>0$ denote the coefficients of the monomial $x^{2\mu}$
($|\mu|=r+2$) in $(\sum_i x_i^2)^r\pA{M}$ and $(\sum_i x_i^2)^r\pA{\ones}$. Then
\[
  \eps^{\Cc{}{}}_r(M)\;=\;\max_{\mu}\ \Bigl(\frac{-\,c_r(M)_\mu}{c_r(\ones)_\mu}\Bigr)_{+},
\]
attained at any monomial with the most negative normalized coefficient. In particular
$\eps^{\Cc{}{}}_r(M)$ is a rational number computable in closed form, and $M+\eps\ones\in\Cc{5}{r}$ for
$\eps\ge\eps^{\Cc{}{}}_r(M)$, while a single monomial with negative coefficient certifies
$M+\eps\ones\notin\Cc{5}{r}$ for $\eps<\eps^{\Cc{}{}}_r(M)$.
\end{proposition}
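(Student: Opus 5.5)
The plan is to reduce membership in $\Cc{5}{r}$ to a finite family of scalar inequalities, one per monomial, each affine in $\eps$, and then to read off the least feasible $\eps$ directly.

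First, identify which monomials can occur. The polynomial $(\sum_i x_i^2)^r\pA{A}$ is a polynomial in the squares $x_1^2,\dots,x_5^2$, homogeneous of degree $r+2$ in them. Its monomials are therefore exactly the $x^{2\mu}$ with $|\mu|=r+2$. All other monomials have coefficient zero, so they impose no constraint on nonnegativity.

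Second, record linearity and positivity. The map $A\mapsto c_r(A)$ is linear, so
\[
  c_r(M+\eps\ones)_\mu=c_r(M)_\mu+\eps\,c_r(\ones)_\mu .
\]
By Lemma~\ref{lem:Jinterior},
\[
  c_r(\ones)_\mu=\binom{r+2}{\mu}>0
\]
for every $|\mu|=r+2$.

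Third, solve each constraint and intersect. For each $\mu$, the condition $c_r(M+\eps\ones)_\mu\ge0$ is equivalent to
\[
  \eps\ge -c_r(M)_\mu/c_r(\ones)_\mu .
\]
Intersecting these half-lines over the finitely many $\mu$, together with the constraint $\eps\ge0$, gives the feasible set
\[
  \{\eps\ge0:M+\eps\ones\in\Cc{5}{r}\}=[\eps^\ast,\infty),\qquad \eps^\ast=\max_\mu\bigl(-c_r(M)_\mu/c_r(\ones)_\mu\bigr)_+ .
\]
This set is nonempty and closed, so the minimum exists and equals $\eps^\ast$. The maximum is attained at a monomial with the most negative normalized coefficient, or the value is $0$ when every coefficient is already nonnegative.

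Fourth, state the certificate claims. If $\eps<\eps^\ast$, then $\eps^\ast>0$, and it is attained at some $\mu_0$ with $\eps<-c_r(M)_{\mu_0}/c_r(\ones)_{\mu_0}$. The coefficient of $x^{2\mu_0}$ in $(\sum_i x_i^2)^r\pA{M+\eps\ones}$ is then negative, and this single coefficient certifies non-membership. For rational $M$ the coefficients are rational, hence so is $\eps^\ast$.

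No step is genuinely hard. The only points needing care are the strict positivity of $c_r(\ones)_\mu$ on every monomial that actually occurs, so that every division is legitimate and each constraint is a half-line rather than a possibly empty set, and the truncation at $0$ from the $\eps\ge0$ requirement in the definition of the threshold.
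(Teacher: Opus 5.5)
Your proposal is correct and matches the paper's proof: linearity of $A\mapsto c_r(A)$ makes each coefficient affine in $\eps$ with strictly positive slope $c_r(\ones)_\mu=\binom{r+2}{\mu}$, so coefficientwise nonnegativity is exactly the intersection of the half-lines $\eps\ge -c_r(M)_\mu/c_r(\ones)_\mu$. Your explicit handling of the truncation at $0$ and your caveat that the threshold is rational only when $M$ is rational are small precisions over the paper's wording.
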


\begin{proof}
The coefficient of $x^{2\mu}$ in $(\sum_i x_i^2)^r\pA{M+\eps\ones}$ equals $c_r(M)_\mu+\eps\,c_r(\ones)_\mu$
by linearity of $A\mapsto\pA{A}$, and $c_r(\ones)_\mu>0$ because $(\sum_i x_i^2)^{r+2}$ has strictly positive
coefficients. Nonnegativity of every coefficient is therefore equivalent to
$\eps\ge -c_r(M)_\mu/c_r(\ones)_\mu$ for every $\mu$, that is, to $\eps\ge\eps^{\Cc{}{}}_r(M)$. The
membership and exclusion statements are the two directions of this equivalence.
\end{proof}

Monotonicity in $r$ is automatic, exactly as in Lemma~\ref{lem:threshold}: multiplying a nonnegative-coefficient
polynomial by $\sum_i x_i^2$ preserves nonnegativity of coefficients, so $\Cc{5}{r}\subseteq\Cc{5}{r+1}$ and
$\eps^{\Cc{}{}}_{r+1}(M)\le\eps^{\Cc{}{}}_r(M)$. Any strict drop exhibits a window of separators. Along the
Horn orbit the thresholds drop through a long initial run of levels in a transparent closed form.

\begin{theorem}[Seven closed-form coefficient separations]\label{thm:coef}
Let $D=\mathrm{diag}(\tfrac18,8,1,1,1)$ and $M=D\horn D$. Then for $1\le r\le 8$,
\[
  \eps^{\Cc{}{}}_r(M)\;=\;\frac{16-r}{r+2},
\]
attained at the monomial $x_2^2x_4^{2r+2}$. This value is strictly decreasing over $1\le r\le 8$, so for
each $r\in\{1,\dots,7\}$ and every rational
$\eps\in\bigl(\tfrac{15-r}{r+3},\,\tfrac{16-r}{r+2}\bigr)$,
\[
  M+\eps\ones\;\in\;\Cc{5}{r+1}\setminus\Cc{5}{r}.
\]
Each such matrix separates two consecutive coefficient cones, and both directions are certified by
inspecting a coefficient list.
\end{theorem}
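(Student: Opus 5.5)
The plan is to reduce everything to the closed form of Proposition~\ref{prop:coefthreshold}, which turns the theorem into a discrete minimization over the multi-indices $\mu$ with $|\mu|=r+2$. Write $u_i=x_i^2$. Then $(\sum_i u_i)^r\pA{M}=\sum_{i,j}M_{ij}u_iu_j(\sum_k u_k)^r$, and the coefficient of $u^\mu$ is $c_r(M)_\mu=\sum_{i,j}M_{ij}\binom{r}{\mu-e_i-e_j}$. Dividing by $c_r(\ones)_\mu=\binom{r+2}{\mu}$ and using the identity $\binom{r}{\mu-e_i-e_j}/\binom{r+2}{\mu}=\mu_i(\mu_j-\delta_{ij})/((r+2)(r+1))$, I expect the normalized coefficient to collapse to
\[
  \frac{c_r(M)_\mu}{c_r(\ones)_\mu}\;=\;\frac{f(\mu)}{(r+1)(r+2)},\qquad f(\mu):=\mu^{\top}M\mu-\textstyle\sum_i M_{ii}\mu_i .
\]
Hence $\eps^{\Cc{}{}}_r(M)=\max_\mu\bigl(-f(\mu)\bigr)_+/((r+1)(r+2))$, and the theorem becomes the claim that $\min\{f(\mu):\mu\in\N^5,\ |\mu|=r+2\}=-(16-r)(r+1)$ for $1\le r\le8$, attained at $\mu=e_2+(r+1)e_4$.

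With $d=(\tfrac18,8,1,1,1)$, the matrix $M$ has diagonal $(\tfrac1{64},64,1,1,1)$. Its negative entries are $M_{24}=M_{25}=-8$, $M_{13}=M_{14}=-\tfrac18$ and $M_{35}=-1$. The positive off-diagonal entries are $M_{12}=1$, $M_{23}=8$, $M_{34}=M_{45}=1$ and $M_{15}=\tfrac18$. At $\mu^\circ=e_2+(r+1)e_4$ one gets
\[
  f(\mu^\circ)=64+(r+1)^2-16(r+1)-64-(r+1)=(r+1)(r-16),
\]
which gives the claimed value $(16-r)/(r+2)$ and identifies the monomial $x_2^2x_4^{2r+2}$.

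The main obstacle is the lower bound $f(\mu)\ge (r+1)(r-16)$ for all other $\mu$. I would first use the structure of $f$ to cut the search down.
\begin{itemize}
\item Since $M$ is copositive, $\mu^{\top}M\mu\ge0$, so $f(\mu)\ge-\sum_iM_{ii}\mu_i$. This is only substantial when $\mu_2\ge1$, and then the term $64\mu_2(\mu_2-1)$ in $f$ heavily penalizes $\mu_2\ge2$.
\item So one essentially reduces to $\mu_2\in\{0,1\}$. If $\mu_2=0$, the only negative terms are $-\tfrac14\mu_1(\mu_3+\mu_4)-2\mu_3\mu_5$, which are small: copositivity of the $\{1,3,4,5\}$ principal submatrix should give $f\ge -(r+2)$, above the target.
\item If $\mu_2=1$, then $f=-16(\mu_4+\mu_5)+16\mu_3+2\mu_1+g(\mu_1,\mu_3,\mu_4,\mu_5)$, where $g$ is the Horn-type form on the remaining indices minus its diagonal. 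Minimizing over $\mu_4+\mu_5=r+1-\mu_1-\mu_3$, the cross term $2\mu_4\mu_5$ makes a split between $\mu_4$ and $\mu_5$ no better than concentrating on one of them, and any mass on $\mu_1$ or $\mu_3$ costs more than it saves.
\end{itemize}
As a fallback, the range $r\le8$ is finite (at most $\binom{14}{4}$ multi-indices per level), so an exhaustive exact enumeration of $f$ certifies the minimum outright. I would present the structured argument and cite the enumeration as a check.

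Finally, $(16-r)/(r+2)$ is strictly decreasing in $r$. For $1\le r\le7$, both $r$ and $r+1$ lie in $[1,8]$, so the window $\bigl(\tfrac{15-r}{r+3},\tfrac{16-r}{r+2}\bigr)$ is nonempty. By Proposition~\ref{prop:coefthreshold}, $\eps\ge\eps^{\Cc{}{}}_{r+1}(M)$ gives $M+\eps\ones\in\Cc{5}{r+1}$, while $\eps<\eps^{\Cc{}{}}_r(M)$ leaves the coefficient of $x_2^2x_4^{2r+2}$ negative, so $M+\eps\ones\notin\Cc{5}{r}$. This is the coefficient-cone analogue of Lemma~\ref{lem:threshold}(3).
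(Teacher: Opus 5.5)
Your reduction is correct, and with the enumeration as the load-bearing step your proof coincides with the paper's. The paper gets $c_r(M)_\mu=r-16$ and $c_r(\ones)_\mu=r+2$ at $\mu=x_2^2x_4^{2r+2}$ from the same count over the pairs $\{2,4\}$ and $\{4,4\}$. It then certifies that this monomial is a maximizer by exact tabulation of all monomials for $r\le8$, which is your fallback. Your normalization $c_r(M)_\mu/c_r(\ones)_\mu=f(\mu)/\bigl((r+1)(r+2)\bigr)$, with $f(\mu)=\mu^{\top}M\mu-\sum_iM_{ii}\mu_i$, is right and is a genuine addition. It replaces the tabulation by minimizing an explicit quadratic over the lattice points of a simplex, which can be done by hand and explains why the range stops at $8$.

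Do not, however, present the structured sketch as worded, because the penalties are not ``heavy''; they are proportional to $8-r$. Take three one-unit moves: from $\mu_4$ to $\mu_1$ starting at $e_2+(r+1)e_4$; from $\mu_5$ to $\mu_3$ starting at $e_2+(r+1)e_5$; and from $\mu_4$ to $\mu_2$. They change $f$ by exactly $\tfrac94(8-r)$, $4(8-r)$ and $18(8-r)$. Hence at $r=8$ the minimum $f=-72$ is also attained at $e_1+e_2+8e_4$, $e_2+e_3+8e_5$ and $2e_2+8e_4$, so ``any mass on $\mu_1$ or $\mu_3$ costs more than it saves'' is false there. At $r=9$ the point $2e_2+9e_4$ gives $f=-88$, hence $\eps^{\Cc{}{}}_9=\tfrac45$, which is exactly the breakdown of Remark~\ref{rem:coef-range}.

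A correct version goes as follows. Put $m=\mu_2$, $a=\mu_1$, $k=\mu_3$, $t=\mu_4+\mu_5$, $w=\mu_4-\mu_5$. Then
\[
  f=64m(m-1)+2ma+16mk-16mt+t(t-1)+k(k-1)+\tfrac1{64}a(a-1)-\tfrac14ak+\bigl(2k-\tfrac14a\bigr)w .
\]
For $m=1$ (so $a+k+t=r+1$), the bound $|w|\le t$ gives
\[
  f-(r+1)(r-16)\;\ge\;\tfrac{81}{64}a(a-1)+4k(k-1)+4ak+(8-r)\bigl(\tfrac94a+4k\bigr)\;\ge\;0 .
\]
For $m\ge2$, apply the same bound and substitute $t=r+2-m-a-k$. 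The coefficients of $a$ and $k$ become $\tfrac{81}{4}m+\tfrac{63}{64}-\tfrac94(r+2)$ and $36m-4(r+2)$, both positive for $r\le8$. The quadratic part $\tfrac{81}{64}a^2+4k^2+4ak$ is nonnegative. At $a=k=0$ the excess over $(r+1)(r-16)$ is $(m-2)(81m+63-18r)+18(8-r)\ge0$. Together with your copositivity bound for $m=0$, this is a complete hand proof, and the enumeration becomes only a check.
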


\begin{proof}
Write $\mu=x_2^2x_4^{2r+2}$, of $x^{\circ2}$-degree $(0,1,0,r+1,0)$. In $(\sum_ix_i^2)^{r+2}$ its coefficient
is the multinomial $\binom{r+2}{0,1,0,r+1,0}=r+2$, which is $c_r(\ones)_\mu$. For $c_r(M)_\mu$, a term
$M_{ij}x_i^2x_j^2$ of $\pA{M}$ multiplied by a monomial $x^{2\gamma}$ ($|\gamma|=r$) of $(\sum_ix_i^2)^r$
produces $\mu$ only when $e_i+e_j+\gamma=(0,1,0,r+1,0)$ with $\gamma\ge0$. Since the first, third and fifth
coordinates of the target vanish, both $i$ and $j$ lie in $\{2,4\}$. The admissible pairs are $\{2,4\}$,
with $\gamma=(0,0,0,r,0)$ and multinomial coefficient $\binom{r}{r}=1$, and $\{4,4\}$, with
$\gamma=(0,1,0,r-1,0)$ and coefficient $\binom{r}{1}=r$. The pair $\{2,2\}$ would force a negative $\gamma$.
Non-adjacency of $2$ and $4$ on $C_5$ gives $\horn_{24}=-1$, hence $M_{24}=-8$ and $M_{44}=1$, so
\[
  c_r(M)_\mu\;=\;(M_{24}+M_{42})\cdot 1+M_{44}\cdot r\;=\;-16+r .
\]
By Proposition~\ref{prop:coefthreshold}, $\mu$ contributes the value $(16-r)/(r+2)$. The threshold itself is
the maximum $\eps^{\Cc{}{}}_r(M)=\max_\nu\bigl(-c_r(M)_\nu/c_r(\ones)_\nu\bigr)_{+}$ over the finite set of
degree-$(2r+4)$ monomials $\nu$. Evaluating this maximum exactly for each $r\in\{1,\dots,8\}$ returns the maximizer
$\mu$ above, so $\eps^{\Cc{}{}}_r(M)=(16-r)/(r+2)$ on that range. The evaluation is a finite rational
computation, tabulated monomial by monomial in the accompanying archive. The difference
$\tfrac{16-r}{r+2}-\tfrac{15-r}{r+3}=\tfrac{34}{(r+2)(r+3)}>0$ gives strict monotonicity, and
Lemma~\ref{lem:threshold}(3), transported to the coefficient cones, gives the window statement.
\end{proof}

\begin{remark}[The formula is sharp only on a range]\label{rem:coef-range}
Beyond $r=8$ the maximizing monomial in Proposition~\ref{prop:coefthreshold} changes and the closed form
$(16-r)/(r+2)$ ceases to hold. The exact thresholds $\tfrac{16-r}{r+2}$ for $r\le 8$ continue as
$\tfrac45,\tfrac45,\tfrac{17}{22},\dots$, so $\eps^{\Cc{}{}}_8=\eps^{\Cc{}{}}_9=\tfrac45$: there is no strict
drop between the eighth and ninth cones, and the closed form of Theorem~\ref{thm:coef} thus certifies exactly
the first seven consecutive separations. (The thresholds resume decreasing at higher $r$, though no longer from this closed form. The next value is
$\tfrac{17}{22}<\tfrac45$.) Whether some scaling separates every
consecutive pair of coefficient cones is the coefficient-hierarchy analogue of the question left open after
Theorem~\ref{thm:D}.
\end{remark}

Table~\ref{tab:coef} lists the exact coefficient thresholds for several scalings. Each strict drop between
consecutive entries is, by Lemma~\ref{lem:threshold}(3), a certified coefficient-cone separation with a
closed-form witness. More lopsided scalings push the run of strict drops further and raise the thresholds,
in the same pattern as the semidefinite landscape of Section~\ref{sec:landscape} but here in exact closed form.

\begin{table}[htp]\centering
\caption{Exact coefficient thresholds $\eps^{\Cc{}{}}_r(D\horn D)$. Every displayed drop is strict.}
\label{tab:coef}
\begin{tabular}{lcccccc}
\toprule
$D=\mathrm{diag}(a,b,1,1,1)$ & $\eps^{\Cc{}{}}_1$ & $\eps^{\Cc{}{}}_2$ & $\eps^{\Cc{}{}}_3$
 & $\eps^{\Cc{}{}}_4$ & $\eps^{\Cc{}{}}_5$ & $\eps^{\Cc{}{}}_6$\\
\midrule
$(\tfrac14,4)$   & $\tfrac73$   & $\tfrac32$   & $1$          & $\tfrac23$   & $\tfrac23$   & $\tfrac{17}{28}$\\
$(\tfrac18,8)$   & $5$          & $\tfrac72$   & $\tfrac{13}5$& $2$          & $\tfrac{11}7$& $\tfrac54$\\
$(\tfrac1{16},16)$& $\tfrac{31}3$& $\tfrac{15}2$& $\tfrac{29}5$& $\tfrac{14}3$& $\tfrac{27}7$& $\tfrac{13}4$\\
$(\tfrac1{16},64)$& $\tfrac{127}3$& $\tfrac{63}2$& $25$        & $\tfrac{62}3$& $\tfrac{123}7$& $\tfrac{61}4$\\
\bottomrule
\end{tabular}
\end{table}

\begin{example}\label{ex:coefsep}
Take $r=1$ and $\eps=4\in(\tfrac72,5)$. Then $M+4\ones\in\Cc{5}{2}\setminus\Cc{5}{1}$: the polynomial
$(\sum_ix_i^2)\pA{M+4\ones}$ has coefficient $-6$ on $x_2^2x_4^2x_5^2$, placing it outside $\Cc{5}{1}$, while
the smallest coefficient of $(\sum_ix_i^2)^2\pA{M+4\ones}$ is $2$, placing it inside $\Cc{5}{2}$. Both facts
are read off finite coefficient lists with no semidefinite program.
\end{example}

\begin{remark}[Elementary shadow of the main separations]\label{rem:coef-shadow}
Theorem~\ref{thm:coef} separates seven consecutive pairs of coefficient cones from a single scaling, whereas
Theorems~\ref{thm:A} to~\ref{thm:Adoubleprime} separate the sum-of-squares cones at the first three, with far
heavier certificates. There is no contradiction: $\Cc{5}{r}\subseteq\Kc{5}{r}$, so a coefficient-cone
separator need not separate the sum-of-squares cones, and indeed the matrices of Theorem~\ref{thm:coef} sit
inside $\Kc{5}{r+1}$ without our knowing on which side of $\Kc{5}{r}$ they fall. The coefficient hierarchy is
an elementary shadow, in which strictness is read off a coefficient list. Locating separators for the
semidefinite hierarchy, whose bounds \eqref{eq:dkp} are the ones used in practice, is what forces the boundary
analysis of Section~\ref{sec:method}.
\end{remark}

The closed form of Theorem~\ref{thm:coef} is one point of a general picture: over the whole two-parameter
family the coefficient threshold is a maximum of explicit rational functions, one per zero of the Horn form.

\begin{proposition}[Coefficient threshold over the family]\label{prop:coefgeneral}
For $D=\mathrm{diag}(d_1,\dots,d_5)\in\mathcal{D}$ and $r\ge1$,
\[
  \eps^{\Cc{}{}}_r(D\horn D)\;\ge\;\max_{(i,j)}\ \Bigl[\frac{2d_id_j-r\,d_j^2}{r+2}\Bigr]_{+},
\]
the maximum taken over the ten ordered non-adjacent pairs $(i,j)$ of $C_5$, and equality holds whenever the
maximizing monomial $x_i^2x_j^{2r+2}$ dominates the coefficient list, as it does throughout the range covered
by Theorem~\ref{thm:coef}. At $r=1$ the bound is $\max_{(i,j)}[d_j(2d_i-d_j)/3]_{+}$. Membership and exclusion
are certified by inspecting a coefficient list.
\end{proposition}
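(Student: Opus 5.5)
The plan is to reproduce, for a general scaling and a general non-adjacent pair, the single-monomial computation from the proof of Theorem~\ref{thm:coef}, and then to invoke the closed form of Proposition~\ref{prop:coefthreshold}. That proposition writes $\eps^{\Cc{}{}}_r(M)$ as a maximum over \emph{all} monomials $x^{2\mu}$ with $|\mu|=r+2$. Restricting that maximum to the ten monomials $x_i^2x_j^{2r+2}$, with $(i,j)$ an ordered non-adjacent pair of $C_5$, can only decrease it. So the inequality follows once each of these ten normalized coefficients is identified with the claimed rational expression.

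\emph{The coefficient computation.} Fix an ordered non-adjacent pair $(i,j)$ and set $\mu=e_i+(r+1)e_j$.
\begin{itemize}
\item[(a)] The coefficient of $x^{2\mu}$ in $(\sum_k x_k^2)^{r+2}$ is the multinomial $\binom{r+2}{\mu}=r+2$. This is $c_r(\ones)_\mu$.
\item[(b)] A term $M_{kl}x_k^2x_l^2$ of $\pA{M}$, with $M=D\horn D$, times $x^{2\gamma}$ ($|\gamma|=r$) yields $x^{2\mu}$ only if $e_k+e_l+\gamma=\mu$ with $\gamma\ge 0$. This forces $k,l\in\{i,j\}$. The pair $\{i,i\}$ is excluded because $\mu$ has $i$-degree $1$.
\item[(c)] The ordered pairs $(i,j)$ and $(j,i)$ each contribute $M_{ij}\binom{r}{re_j}=M_{ij}$. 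Non-adjacency gives $\horn_{ij}=-1$, so $M_{ij}=-d_id_j$, and together they contribute $-2d_id_j$.
\item[(d)] The pair $(j,j)$ requires $\gamma=e_i+(r-1)e_j$, which is admissible exactly because $r\ge1$. It contributes $M_{jj}\binom{r}{\gamma}=r\,d_j^2$, using $\horn_{jj}=1$.
\end{itemize}
Hence
\[
  c_r(M)_\mu=r\,d_j^2-2d_id_j,
  \qquad
  \frac{-c_r(M)_\mu}{c_r(\ones)_\mu}=\frac{2d_id_j-r\,d_j^2}{r+2}.
\]
Taking positive parts and the maximum over the ten ordered pairs, and comparing with the full maximum in Proposition~\ref{prop:coefthreshold}, gives the stated lower bound.

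\emph{The equality clause and the specialization.} The equality clause is immediate from the same closed form. If the overall maximizing monomial in Proposition~\ref{prop:coefthreshold} is one of the $x_i^2x_j^{2r+2}$, then the two maxima coincide. For the family $D=\mathrm{diag}(\tfrac18,8,1,1,1)$ with $1\le r\le 8$, this is exactly what the proof of Theorem~\ref{thm:coef} verified by the exhaustive finite evaluation, with maximizer $x_2^2x_4^{2r+2}$. As a consistency check, with $(i,j)=(2,4)$, $d_2=8$, $d_4=1$ the formula gives $(16-r)/(r+2)$. The $r=1$ specialization is the algebraic identity $(2d_id_j-d_j^2)/3=d_j(2d_i-d_j)/3$.

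\emph{Certification and the main obstacle.} The certification remark follows from the exclusion half of Proposition~\ref{prop:coefthreshold}: one negative coefficient excludes a matrix from the coefficient cone, and nonnegativity of all coefficients certifies membership. The only real care point, and the step I expect to be the main obstacle, is the bookkeeping in (b)--(d). One must count ordered versus unordered index pairs correctly, so that the factor $2$ on $M_{ij}$ appears. One must also confirm that no other pair $\{k,l\}$ can reach $\mu$; this reduces to the support argument on the coordinates outside $\{i,j\}$ together with the $i$-degree count, as in Theorem~\ref{thm:coef}.
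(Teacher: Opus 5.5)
Your proposal is correct and matches the paper's proof. You read off the coefficient of $x_i^2x_j^{2r+2}$ as $(M_{ij}+M_{ji})+rM_{jj}=r\,d_j^2-2d_id_j$ against $c_r(\ones)=r+2$, restrict the maximum in Proposition~\ref{prop:coefthreshold} to these ten monomials, and treat equality as the case where the global maximizer is one of them; your items (b)--(d) just spell out the count the paper borrows from the proof of Theorem~\ref{thm:coef}.
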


\begin{proof}
For an ordered non-adjacent pair $(i,j)$ the monomial $x_i^2x_j^{2r+2}$ has coefficient
$(M_{ij}+M_{ji})+rM_{jj}=-2d_id_j+rd_j^2$ in $(\sum_ix_i^2)^r\pA{D\horn D}$, by the count in the proof of
Theorem~\ref{thm:coef} with $M_{ij}=d_id_jH_{ij}=-d_id_j$ and $M_{jj}=d_j^2$, over the value $r+2$ of
$c_r(\ones)$ at the same monomial. By Proposition~\ref{prop:coefthreshold} each such monomial forces
$\eps^{\Cc{}{}}_r\ge[(2d_id_j-rd_j^2)/(r+2)]_+$, giving the stated lower bound. Equality holds when one of
these monomials realizes the global maximum in Proposition~\ref{prop:coefthreshold}.
\end{proof}

\section{A second orbit: the $T(\psi)$ matrices}\label{sec:Tpsi}

Hildebrand's classification of the extreme rays of $\COP_5$ \citep{Hildebrand2012} places, alongside the
diagonal scalings of $\horn$, a second exceptional family: the positive diagonal scalings of a
trigonometric matrix $T(\psi)$, $\psi\in\Psi:=\{\psi\in\R^5_{>0}:\sum_i\psi_i<\pi\}$
\citep{LaurentVargas2205}. We show the constructions of this paper are not tied to $\horn$ by running them on
a rational point of this orbit. Taking all angles equal, $\psi_i=\theta$ with $\cos\theta=\tfrac9{10}$, gives
$5\theta=5\arccos\tfrac9{10}\approx2.255<\pi$, so $\psi\in\Psi$, and the resulting matrix is the rational
circulant
\begin{equation}\label{eq:Tmat}
  T=\begin{pmatrix}
   1 & -\tfrac9{10} & \tfrac{31}{50} & \tfrac{31}{50} & -\tfrac9{10}\\
   -\tfrac9{10} & 1 & -\tfrac9{10} & \tfrac{31}{50} & \tfrac{31}{50}\\
   \tfrac{31}{50} & -\tfrac9{10} & 1 & -\tfrac9{10} & \tfrac{31}{50}\\
   \tfrac{31}{50} & \tfrac{31}{50} & -\tfrac9{10} & 1 & -\tfrac9{10}\\
   -\tfrac9{10} & \tfrac{31}{50} & \tfrac{31}{50} & -\tfrac9{10} & 1
  \end{pmatrix},
\end{equation}
with cyclically adjacent entries $-\cos\theta=-\tfrac9{10}$ and opposite entries
$\cos2\theta=2\cos^2\theta-1=\tfrac{31}{50}$. By \citep[Theorem~2.2]{LaurentVargas2205}, $T$ is copositive and
spans an extreme ray of $\COP_5$. It lies outside $\SPN_5=\Kc{5}{0}$, being an exceptional extreme ray of $\COP_5$ and hence not in $\SPN_5$, so $\eps_0(T)>0$; while its unit
diagonal places it in $\Kc{5}{1}$, as every diagonal-one copositive $5\times5$ matrix does
\citep{LaurentVargas2205}. Thus $T\in\Kc{5}{1}\setminus\Kc{5}{0}$. As an extreme copositive matrix it sits on
the boundary of $\Kc{5}{1}$, so exact certificates for the \emph{shifted} matrix require the interior shift
of Section~\ref{sec:threshold} exactly as the Horn scalings do.

The coefficient hierarchy separates along this orbit through a long initial run of levels, the thresholds
again computed in closed form. Applying Proposition~\ref{prop:coefthreshold} to $T$ gives the exact thresholds
\[
  \eps^{\Cc{}{}}_r(T)\;=\;\tfrac{59}{150},\ \tfrac{33}{100},\ \tfrac{27}{125},\ \tfrac{143}{750},\ \tfrac{83}{525},\ \dots
  \qquad(r=1,2,3,4,5,\dots),
\]
which decrease strictly through this initial run (the first repeated value is
$\eps^{\Cc{}{}}_{16}(T)=\eps^{\Cc{}{}}_{17}(T)=\tfrac1{18}$, so the strict chain has a plateau there, exactly
as for the Horn orbit). The maximizing monomials have support on triples of variables, for instance
$x_1^2x_2^2x_3^2$ at $r=1$, reflecting that the minimal nonnegative zeros of $T$, unlike those of $\horn$, are not
supported on pairs.

\begin{theorem}[Coefficient separators on the $T(\psi)$ orbit]\label{thm:Tsep}
For each $r$ with $\eps^{\Cc{}{}}_{r+1}(T)<\eps^{\Cc{}{}}_r(T)$ and every rational
$\eps\in(\eps^{\Cc{}{}}_{r+1}(T),\eps^{\Cc{}{}}_r(T))$, the matrix $T+\eps\ones$ lies in
$\Cc{5}{r+1}\setminus\Cc{5}{r}$. In particular, for $\eps=\tfrac{19}{50}\in(\tfrac{33}{100},\tfrac{59}{150})$,
$T+\tfrac{19}{50}\ones\in\Cc{5}{2}\setminus\Cc{5}{1}$: the least coefficient of $(\sum_ix_i^2)\pA{T+\frac{19}{50}\ones}$
is $-\tfrac2{25}$, on $x_1^2x_2^2x_3^2$, and the least coefficient of $(\sum_ix_i^2)^2\pA{T+\frac{19}{50}\ones}$
is $\tfrac35$.
\end{theorem}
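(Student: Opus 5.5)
The general statement follows from the threshold calculus for the coefficient cones, which mirrors Lemma~\ref{lem:threshold}. By Proposition~\ref{prop:coefthreshold}, the set $\{\eps\ge0: T+\eps\ones\in\Cc{5}{s}\}$ is exactly the interval $[\eps^{\Cc{}{}}_s(T),\infty)$ for every $s$. Two consequences follow. First, any $\eps>\eps^{\Cc{}{}}_{r+1}(T)$ gives $T+\eps\ones\in\Cc{5}{r+1}$. Second, any $\eps<\eps^{\Cc{}{}}_r(T)$ gives $T+\eps\ones\notin\Cc{5}{r}$, with a single negative coefficient as witness. So whenever the drop is strict, every $\eps$ in the open window separates. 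Nothing specific to $T$ is used beyond the fact that $T$ is symmetric. Rationality of $\eps$ is only needed so that the certificate is a finite rational coefficient list.

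For the explicit instance, I first verify the window endpoints $\eps^{\Cc{}{}}_1(T)=\tfrac{59}{150}$ and $\eps^{\Cc{}{}}_2(T)=\tfrac{33}{100}$. Then I check $\tfrac{33}{100}<\tfrac{19}{50}<\tfrac{59}{150}$, i.e.\ $0.33<0.38<0.3933$. The key step is to compute, for $N=T+\tfrac{19}{50}\ones$, all coefficients of $(\sum_i x_i^2)\pA{N}$ (degree $6$, the even monomials $x^{2\mu}$ with $|\mu|=3$) and of $(\sum_i x_i^2)^2\pA{N}$ (with $|\mu|=4$).

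I would organize these coefficients by support type, using the cyclic symmetry of the circulant $N$. The entries of $N$ are:
\[
N_{ii}=\tfrac{69}{50},\qquad N_{i,i\pm1}=-\tfrac{26}{50},\qquad N_{i,i\pm2}=1 .
\]
The coefficient of $x^{2\mu}$ in $(\sum_k x_k^2)^r\pA{N}$ is
\[
\sum_{i,j}N_{ij}\binom{r}{\mu-e_i-e_j}.
\]
At $r=1$ the support types are $\mu=3e_i$, $2e_i+e_j$ (with $j$ adjacent or opposite to $i$), and $e_i+e_j+e_k$ (a consecutive triple or a non-consecutive one).

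The consecutive triple $(1,2,3)$ gives the following. The adjacent pairs contribute $2(N_{12}+N_{23})=-\tfrac{104}{50}$, and the opposite pair contributes $2N_{13}=2$. The total is $-\tfrac{4}{50}=-\tfrac{2}{25}$, which matches the claimed minimum. The other types are easily seen to be positive: for instance $2e_1+e_2$ gives $N_{11}+2N_{12}=\tfrac{17}{50}$.

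At $r=2$ the same bookkeeping runs over the types of $|\mu|=4$ supported on up to four indices. The worst cases should again be consecutive patterns, such as $e_1+e_2+e_3+e_4$, $e_1+2e_2+e_3$ and $2e_1+e_2+e_3$. Checking that the minimum is $\tfrac35$ and is attained is a finite rational computation. I would cross-check it against $\eps^{\Cc{}{}}_2(T)=\tfrac{33}{100}$: the result should equal $\tfrac{33}{100}\cdot c$ plus the corresponding shift, with $c=c_2(\ones)_\mu$ at the maximizer.

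The main obstacle is this exhaustive enumeration at level two, together with confirming the two threshold values. The enumeration involves many monomial types, and one must not miss a pattern such as $e_1+e_2+e_4+e_5$ that is consecutive only cyclically. I would handle this by listing orbit representatives of $\mu$ under the dihedral symmetry of $C_5$, computing each coefficient by the multinomial formula, and deferring to the exact tabulation in the archive as confirmation.
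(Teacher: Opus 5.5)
Your proposal is correct and follows the paper's argument. The window statement is Lemma~\ref{lem:threshold}(3) transported to the coefficient cones via the interval description in Proposition~\ref{prop:coefthreshold}, and the instance is a finite coefficient check. Your entries of $N=T+\tfrac{19}{50}\ones$ and the level-one minimum $-\tfrac{2}{25}$ are right. The level-two enumeration you outline does give minimum $\tfrac35$, attained at $x_1^2x_2^4x_3^2$ with coefficient $2N_{22}+4N_{12}+4N_{23}+2N_{13}$. This is the same monomial that attains $\eps^{\Cc{}{}}_2(T)=\tfrac{33}{100}$, and your cross-check should read $(\tfrac{19}{50}-\tfrac{33}{100})\cdot 12=\tfrac35$.
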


\begin{proof}
Identical to Theorem~\ref{thm:coef} with $T$ in place of the Horn scaling: the displayed thresholds are the
values of Proposition~\ref{prop:coefthreshold}, and the window statement is
Lemma~\ref{lem:threshold}(3) for the coefficient cones. The two coefficients of the example are read off the
finite coefficient lists.
\end{proof}

The semidefinite cones separate along this orbit as well, and here too we can be explicit. The matrix $T$
lies in $\Kc{5}{1}\setminus\Kc{5}{0}$, a second extreme matrix realizing the classical first step, and a
scaling shifted along $\ones$ separates the next pair.

\begin{theorem}[A $T(\psi)$-orbit separator]\label{thm:Tsdp}
Let $D=\mathrm{diag}(\tfrac16,6,1,1,1)$ and $M_T=D\,T\,D+\tfrac1{1000}\ones$ with $T$ as in \eqref{eq:Tmat}.
Then $M_T\in\Kc{5}{2}\setminus\Kc{5}{1}$.
\end{theorem}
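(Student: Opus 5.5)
The proof will follow the same two-certificate pattern as Theorem~\ref{thm:A}, since Theorem~\ref{thm:LV} does not apply to the $T(\psi)$ orbit and nothing else decides membership for scalings of $T$.

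\emph{Locating the shift.} We compute the threshold profile of the unshifted scaling $M_{T,0}=D\,T\,D$ along $\ones$ with one semidefinite program per level, as after Lemma~\ref{lem:threshold}. We expect $\eps_1(M_{T,0})>\tfrac1{1000}\ge\eps_2(M_{T,0})$, and Lemma~\ref{lem:threshold}(3) would then place $M_T$ in the window. This step only guides the search and plays no role in the proof.

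\emph{Membership in $\Kc{5}{2}$.} We produce a rational symmetric Gram matrix $Q_T$, indexed by the $70$ degree-$4$ monomials. By Lemma~\ref{lem:parity} it is block-diagonal over the $16$ parity classes (one $15\times15$ block, ten $5\times5$ blocks, five scalars). It must satisfy $z_4^{\top}Q_Tz_4=(\sum_ix_i^2)^2\pA{M_T}$ coefficient by coefficient over $\Q$, and each block must pass an exact $LDL^{\top}$ test. We would build it as in the proof of Theorem~\ref{thm:Adoubleprime}, by reserving part of the shift:
\begin{enumerate}
\item Solve numerically for a Gram matrix of $D\,T\,D+\tfrac1{2000}\ones$.
\item Round it to a rational grid.
\item Project it exactly onto the affine space of coefficient identities; the parity structure makes this linear system small per block.
\item Add $\tfrac1{2000}\Delta$, where $\Delta$ is the positive diagonal Gram of $(\sum_ix_i^2)^{4}$ from Lemma~\ref{lem:Jinterior}.
\end{enumerate}
The added positive definite margin absorbs the rounding error. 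This is what the interior shift buys: $T$ is extreme in $\COP_5$, so $D\,T\,D$ itself sits on a boundary where the Gram matrix is singular. The singularity is forced by evaluation vectors at the zeros of $T$, which are supported on triples rather than pairs.

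\emph{Exclusion from $\Kc{5}{1}$.} We produce a rational functional $y_T$ on the degree-$6$ monomials. It is supported on the monomials even in every variable, so by Remark~\ref{rem:exclusion-scope} its $35\times35$ moment matrix is block-diagonal. We verify $\Mom(y_T)\psd$ by exact $LDL^{\top}$ and compute $\inner{y_T}{c_1(M_T)}<0$ exactly; Lemma~\ref{lem:exclusion} with $r=1$ then gives $M_T\notin\Kc{5}{1}$. To obtain $y_T$, we take the dual SDP solution at the level-one threshold and round it. It should be concentrated near the moment vectors of the minimal zeros of $D^{-1}$-rescaled $T$. Negativity need only hold with margin about $\eps_1(M_{T,0})-\tfrac1{1000}$ times $\inner{y_T}{c_1(\ones)}$.

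\emph{Main obstacle.} The hard step is rationalizing the dual while keeping $\Mom(y_T)$ exactly positive semidefinite. A dual optimum is typically rank-deficient, and naive rounding destroys semidefiniteness. We would first mix in a small multiple of a strictly positive definite moment functional, for example the moments of the uniform measure on the sphere. This makes $\Mom(y_T)$ positive definite at the cost of slightly raising the pairing, and is affordable only if the window below $\eps_1(M_{T,0})$ is wide enough. If it is too narrow, we would decrease $\tfrac16$ or adjust the shift direction to a $dd^{\top}$-type direction as in Remark~\ref{rem:whyDJD}. Since the statement fixes $D$ and $\tfrac1{1000}$, we would instead refine the rounding grid. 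Both certificates are then rechecked by the standalone integer-arithmetic verifier.
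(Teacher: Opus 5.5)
Your plan is correct and follows the paper's proof. The paper exhibits an archived rational block Gram matrix certifying $M_T\in\Kc{5}{2}$, together with the rational even-supported functional $y_T$ of Table~\ref{tab:dualyT}, for which $\Mom(y_T)\succeq0$ and $\inner{y_T}{c_1(M_T)}=-\tfrac{69223}{108000}$ are both checked exactly, and then applies Lemma~\ref{lem:exclusion} with $r=1$. Your search-side choices affect only how the certificates are found, not the argument; these are the reserved half-shift (which needs $\eps_2(D\,T\,D)\le\tfrac1{2000}$, not merely $\le\tfrac1{1000}$, though any split inside the window works) and the sphere-moment mixing.
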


\begin{proof}
As for Theorem~\ref{thm:A}: Appendix~\ref{app:certT} specifies a rational block Gram matrix, given in full in the accompanying archive, certifying
$M_T\in\Kc{5}{2}$ and a rational dual functional $y_T$ with moment matrix positive semidefinite and
$\inner{y_T}{c_1(M_T)}=-\tfrac{69223}{108000}<0$, whence $M_T\notin\Kc{5}{1}$ by
Lemma~\ref{lem:exclusion}. Both are verified exactly over $\Q$ by the standalone checker.
\end{proof}

The construction is identical to the Horn case, now on a matrix from a disjoint extreme orbit. The same threshold
device locates the window, the same interior shift crosses the boundary of $\Kc{5}{2}$, and the same forced
kernel, now indexed by the zeros of $T$, makes the raw scaling resist rational certification. This is the
evidence that Theorems~\ref{thm:A} to~\ref{thm:Aprime} witness a property of the copositive boundary and not
of the Horn matrix.

\section{The threshold landscape}\label{sec:landscape}

The device of Section~\ref{sec:threshold} assigns to every scaling a profile $r\mapsto\eps_r(D\horn D)$, and
Theorems~\ref{thm:A} and~\ref{thm:Aprime} each used a single point of that landscape. This section maps it
along two one-parameter families and to level four. The numbers here are solver outputs and carry no proof
weight. Their role is to show where separators live, how the windows that the exact certificates occupy vary
with the scaling, and which phenomena a sharper theory would have to explain. Each threshold is the optimal
value of the parity-block sum-of-squares feasibility program for $D\horn D+\eps\ones$ with $\eps$ minimized,
modelled in \textsc{cvxpy} and solved by the interior-point solver \textsc{clarabel} at its default tolerance
(feasibility and duality gap of order $10^{-8}$), with values rounded to $10^{-6}$; \textsc{scs} is used as
the independent second solver. The modelling script, the solver versions, and the raw per-entry outputs
accompany the archive of Appendix~\ref{app:verifier}. One solver-sensitivity issue is reported below.

\begin{table}[t]
\centering
\caption{Measured thresholds $\eps_r$ along two scaling paths.}\label{tab:landscape}
\begin{tabular}{lccc@{\qquad}lccc}
\toprule
\multicolumn{4}{c}{$D=\mathrm{diag}(s,1,1,1,1)$} & \multicolumn{4}{c}{$D=\mathrm{diag}(a,a^{-1},1,1,1)$}\\
\cmidrule(r){1-4}\cmidrule(l){5-8}
$s$ & $\eps_1$ & $\eps_2$ & $\eps_3$ & $a$ & $\eps_1$ & $\eps_2$ & $\eps_3$\\
\midrule
$1/2$  & $0$        & $0$        & $0$        & $1/4$  & $0.004320$ & $0$        & $0$\\
$1/3$  & $0.000609$ & $0$        & $0$        & $1/6$  & $0.005787$ & $0.000280$ & $0$\\
$1/4$  & $0.001424$ & $0$        & $0$        & $1/8$  & $0.005937$ & $0.000664$ & $0.000028$\\
$1/8$  & $0.002846$ & $0$        & $0$        & $1/12$ & $0.005276$ & $0.001134$ & $0.000217$\\
$1/12$ & $0.002902$ & $0.000152$ & $0$        & $1/16$ & $0.004524$ & $0.001322$ & $0.000442$\\
$1/16$ & $0.002675$ & $0.000314$ & $0$        & $1/24$ & $0.003427$ & $0.001288$ & $0.000556$\\
$1/24$ & $0.002159$ & $0.000458$ & $0.000031$ & $1/32$ & $0.002733$ & $0.001114$ & $0.000535$\\
$1/32$ & $0.001771$ & $0.000483$ & $0.000083$ & $1/48$ & $0.001935$ & $0.000868$ & $0.000490$\\
\bottomrule
\end{tabular}
\end{table}

Table~\ref{tab:landscape} reports the profiles. Three features stand out.

\emph{The profiles are humps, not ramps.} Along both paths each $\eps_r$ in Table~\ref{tab:landscape} rises
from zero at a membership boundary, peaks, and returns to zero as the scaling degenerates. The decay at the
degenerate end is not incidental. The next proposition forces it.

\begin{proposition}[Degenerate limit]\label{prop:limit}
Along $D(s)=\mathrm{diag}(s,1,1,1,1)$, $\eps_r\bigl(D(s)\horn D(s)\bigr)\to 0$ as $s\to0^{+}$, for every $r$.
\end{proposition}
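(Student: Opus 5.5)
The plan is to reduce everything to level $r=0$ and then use continuity of the threshold. By Lemma~\ref{lem:threshold}(2), $0\le\eps_r(M)\le\eps_0(M)$ for every $M$ and every $r$. It therefore suffices to show
\[
\eps_0\bigl(D(s)\horn D(s)\bigr)\to0 \quad\text{as } s\to0^+ .
\]
By Proposition~\ref{prop:convex}, $\eps_0$ is a finite convex function on $\Sym^5$, hence continuous. So it is enough to identify the limit matrix
\[
H_0:=\lim_{s\to0^+}D(s)\horn D(s)
\]
and to check that $\eps_0(H_0)=0$, that is, $H_0\in\Kc{5}{0}$.

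\emph{Computing the limit.} The entries of $D(s)\horn D(s)$ are $s^2\horn_{11}$ at position $(1,1)$, $s\horn_{1j}$ in the rest of the first row and column, and $\horn_{ij}$ for $i,j\ge2$. Hence $D(s)\horn D(s)\to H_0$ entrywise, hence in norm, where $H_0$ is $\horn$ with its first row and column replaced by zeros. Write $\horn_{[2..5]}$ for the principal $4\times4$ submatrix of $\horn$ on indices $\{2,3,4,5\}$. Then $H_0=0\oplus\horn_{[2..5]}$.

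\emph{Showing $H_0\in\Kc{5}{0}$.} Every principal submatrix of a copositive matrix is copositive: restrict $x\ge0$ to vectors supported on the index set. Hence $\horn_{[2..5]}\in\COP_4$. By Diananda's theorem $\COP_4=\SPN_4$, so $\horn_{[2..5]}=P+N$ with $P\psd$ and $N$ entrywise nonnegative. Padding both with a zero first row and column gives
\[
H_0=(0\oplus P)+(0\oplus N)\in\SPN_5=\Kc{5}{0}.
\]
Thus $\eps_0(H_0)=0$. Continuity then yields $\eps_0(D(s)\horn D(s))\to0$, and the sandwich $0\le\eps_r\le\eps_0$ gives the claim for every $r$.

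\emph{Main obstacle.} The only real point is continuity of $\eps_0$ at $H_0$. Nothing special happens at the boundary point $H_0$, because $\eps_0$ is finite and convex on all of $\Sym^5$; finiteness comes from Lemma~\ref{lem:threshold}(1), which uses $\ones\in\mathrm{int}\,\Kc{5}{0}$. If one prefers to avoid the general fact that finite convex functions are continuous, there is a direct quantitative route. Write $E(s)=D(s)\horn D(s)-H_0$, which has norm $O(s)$. Since $\ones$ is interior in $\Kc{5}{0}$, there is $\delta>0$ with $\ones+\delta E/\|E\|\in\Kc{5}{0}$ whenever $E\ne0$. Then
\[
H_0+E+\delta^{-1}\|E\|\,\ones=H_0+\delta^{-1}\|E\|\bigl(\ones+\delta E/\|E\|\bigr)\in\Kc{5}{0},
\]
being a sum of two members of the cone. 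This gives the explicit bound $\eps_0\bigl(D(s)\horn D(s)\bigr)\le\delta^{-1}\|E(s)\|=O(s)$.
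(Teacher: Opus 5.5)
Your proof is correct and follows the paper's argument essentially step for step. You identify the limit $0\oplus\horn[\{2,3,4,5\}]$, place it in $\SPN_5=\Kc{5}{0}$ via Diananda's theorem and zero-bordering, use continuity of the finite convex function $\eps_0$ (Proposition~\ref{prop:convex}), and finish with the sandwich $0\le\eps_r\le\eps_0$. Your optional quantitative variant, which uses $\ones\in\mathrm{int}\,\Kc{5}{0}$ directly, is also valid and gives an explicit $O(s)$ rate that the paper does not record.
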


\begin{proof}
As $s\to0$ the entries of $D(s)\horn D(s)$ in the first row and column tend to zero, so
$D(s)\horn D(s)\to\horn_0$, the matrix agreeing with $\horn$ off the first row and column and zero on them.
The lower $4\times4$ block of $\horn_0$ is the principal submatrix $\horn[\{2,3,4,5\}]$, copositive as a
principal submatrix of a copositive matrix, hence in $\SPN_4$ by Diananda's theorem. Bordering by a zero row
and column keeps it in $\SPN_5=\Kc{5}{0}$, since the sum-of-positive-semidefinite-and-nonnegative structure
is preserved under zero bordering. Thus $\horn_0\in\Kc{5}{0}$, so $\eps_0(\horn_0)=0$. The threshold $\eps_0$ is finite, convex and continuous on
$\Sym^5$ by Proposition~\ref{prop:convex}. Therefore $\eps_0(D(s)\horn D(s))\to\eps_0(\horn_0)=0$, and
$0\le\eps_r\le\eps_0$ from $\Kc{5}{0}\subseteq\Kc{5}{r}$ finishes the proof.
\end{proof}

Each path thus joins two membership regions and non-membership lives in the interior between them, which is
why the separators of Sections~\ref{sec:thmA} to~\ref{sec:thmAdoubleprime} were found at intermediate scalings
and not at extreme ones. Across the levels sampled in Table~\ref{tab:landscape} the same qualitative
hump-shaped profile recurs, and the empirical maximizer shifts toward more extreme scalings as $r$ increases.

\emph{Deeper levels peak at more extreme scalings.} The maximizer of $\eps_1$ along the second path sits
near $a=1/8$, that of $\eps_2$ near $a=1/16$, that of $\eps_3$ near $a=1/24$. Staying outside a deeper cone
requires a more lopsided scaling, which quantifies, on this orbit, the escape phenomenon behind
\eqref{eq:ddgh-lemma}: no fixed scaling stays outside every level, but for every level some scaling is
outside it.

\emph{Strict drops persist to level four.} For three scalings we extended the profile one level:
\[
\begin{array}{lcccc}
 & \eps_1 & \eps_2 & \eps_3 & \eps_4\\
D(\tfrac1{16},16):\quad & 0.004524 & 0.001322 & 0.000442 & 0.000237\\
D(\tfrac1{32},32): & 0.002733 & 0.001114 & 0.000535 & 0.000345\\
D(\tfrac1{16},64): & 0.004723 & 0.001979 & 0.000972 & 0.000788\\
\end{array}
\]
Each row decreases strictly through $r=4$. By Lemma~\ref{lem:threshold}(3), each strict drop
$\eps_4<\eps_3$ marks a window of candidates for $\Kc{5}{4}\setminus\Kc{5}{3}$. Along the all-ones direction
these fourth-level windows are narrow, about $1.8\cdot10^{-4}$ in the last row. Rounding the size-$210$
Gram and the size-$126$ moment matrix inside them is delicate. Theorem~\ref{thm:Adoubleprime} therefore
certifies the fourth step not along $\ones$ but along the rank-one direction $dd^{\top}$ at the scaling
$D''=\mathrm{diag}(\tfrac1{12},48,1,1,1)$, where the window widens by an order of magnitude (the thresholds
are $\eps_4\approx1.5\cdot10^{-4}$ and $\eps_3\approx1.8\cdot10^{-3}$) and both certificates rationalize. The
all-ones profiles above are what first exhibited the strict fourth-level drop and flagged the narrowness that
made the change of direction worthwhile.

\emph{Solver sensitivity.} At the most degenerate tail of the second path the level-three program becomes
solver-sensitive. At $a=1/64$ \textsc{clarabel} reports $\eps_3=0.00119$, violating the monotonicity
$\eps_3\le\eps_2$ that Lemma~\ref{lem:threshold}(2) forces, while \textsc{scs} reports $\eps_3=0$ within
tolerance. We therefore exclude that point from Table~\ref{tab:landscape} and flag the regime: near
degeneration the Gram blocks become ill-conditioned and threshold values from a single solver should not be
trusted. The certified results of this paper are unaffected, since every load-bearing claim rests on exact
rational data.

The entry-level function $\iota(D)=\min\{r:D\horn D\in\Kc{5}{r}\}$ summarizes the landscape by a single
integer per scaling. Its low values are pinned exactly.

\begin{proposition}[Entry-level bands]\label{prop:iota}
For $D=\mathrm{diag}(a,b,1,1,1)$: $\iota(D)=0$ iff $D\horn D\in\SPN_5$; $\iota(D)\le1$ iff the
Laurent--Vargas inequalities \eqref{eq:LVab} hold; $\iota(D)\ge2$ iff they fail; and $\iota(D)\ge3$ iff
$\eps_2(D\horn D)>0$. The entry level also satisfies $\iota(D)\le\min\{r:\eps^{\Cc{}{}}_r(D\horn D)=0\}$, a bound in closed form by
Proposition~\ref{prop:sdpbound}.
\end{proposition}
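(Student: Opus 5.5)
The proposition is a list of four equivalences and one bound, and I would check each one against a definition or an earlier result.

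\emph{First two equivalences.} Since $\iota(D)=\min\{r: D\horn D\in\Kc{5}{r}\}$ and $\Kc{5}{0}=\SPN_5$ by Section~\ref{sec:prelim-cones}, the statement $\iota(D)=0$ iff $D\horn D\in\SPN_5$ is just the definition. For $\iota(D)\le1$, nestedness $\Kc{5}{0}\subseteq\Kc{5}{1}$ gives $\iota(D)\le1$ iff $D\horn D\in\Kc{5}{1}$. Theorem~\ref{thm:LV} turns this into the five inequalities $d_{i-1}d_i+d_id_{i+1}\ge d_{i-1}d_{i+1}$. With $d=(a,b,1,1,1)$ I would write out all five cyclic cases. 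The cases $i=3,4,5$ read $b+1\ge b$, $2\ge1$ and $1+a\ge a$, so they hold automatically. The cases $i=1,2$ give $a+ab\ge b$ and $ab+b\ge a$, which is exactly \eqref{eq:LVab}. The third equivalence, $\iota(D)\ge2$ iff the inequalities fail, is the negation of the second. It uses that $\iota$ is finite, which holds because $D\horn D$ is copositive and \eqref{eq:sv} places it in some level; even with the convention $\iota=\infty$ the negation still reads correctly.

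\emph{The fourth equivalence.} Here $\iota(D)\ge3$ iff $D\horn D\notin\Kc{5}{2}$, again by nestedness. By Lemma~\ref{lem:threshold}(1) the set $\{\eps\ge0: D\horn D+\eps\ones\in\Kc{5}{2}\}$ is the closed interval $[\eps_2,\infty)$. Hence $D\horn D\in\Kc{5}{2}$ iff $0$ lies in that interval, which happens iff $\eps_2(D\horn D)=0$. Taking complements gives $\iota(D)\ge3$ iff $\eps_2(D\horn D)>0$. The only point needing care is that the minimum defining $\eps_2$ is attained, and Lemma~\ref{lem:threshold}(1) supplies this through closedness of $\Kc{5}{2}$.

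\emph{The bound.} The inequality $\iota(D)\le\min\{r:\eps^{\Cc{}{}}_r(D\horn D)=0\}$ is Proposition~\ref{prop:sdpbound} applied to $M=D\horn D$. By Proposition~\ref{prop:coefthreshold}, $\eps^{\Cc{}{}}_r$ is a closed-form maximum over coefficients, which gives the closed-form claim.

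\emph{Expected obstacle.} There is no real obstacle. The step most likely to contain an error is the bookkeeping of the five cyclic indices in reducing Theorem~\ref{thm:LV} to \eqref{eq:LVab}, so I would write out each index case explicitly.
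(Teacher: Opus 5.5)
Your proposal is correct and follows the paper's own proof. Like the paper, you read the first three claims off $\Kc{5}{0}=\SPN_5$, Theorem~\ref{thm:LV} and negation, the fourth off the definition of $\eps_2$, and the bound off Proposition~\ref{prop:sdpbound}, but you are more careful in two places: you write out the five cyclic cases that reduce Theorem~\ref{thm:LV} to \eqref{eq:LVab}, and you use the attainment in Lemma~\ref{lem:threshold}(1) to justify $\eps_2=0\Rightarrow D\horn D\in\Kc{5}{2}$, which the paper's ``i.e.'' takes for granted.
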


\begin{proof}
The first three equivalences are the definitions of $\Kc{5}{0}=\SPN_5$, of $\Kc{5}{1}$ through
Theorem~\ref{thm:LV}, and of their negation. The fourth reads $\iota\ge3$ as $D\horn D\notin\Kc{5}{2}$, i.e.
$\eps_2>0$. The closed-form bound is Proposition~\ref{prop:sdpbound}: $\eps^{\Cc{}{}}_r=0$ forces
$D\horn D\in\Cc{5}{r}\subseteq\Kc{5}{r}$.
\end{proof}

Table~\ref{tab:landscape} exhibits nonempty parameter bands for $\iota\in\{1,2\}$ exactly and, numerically,
suggests bands for $\iota=3$ and, through the level-four probes, $\iota=4$. Whether $\iota$ attains every positive integer on this orbit is exactly
the open refinement of Theorem~\ref{thm:D} recorded in Remark~\ref{rem:thmD-scope}. The first two bands are
cut out by the explicit inequalities of the proposition, and the boundary between $\iota=2$ and $\iota\ge3$
is the curve $\eps_2=0$, whose exact description is the level-two analogue of the Laurent--Vargas criterion
and is not known.

\section{Extracting exact certificates}\label{sec:method}

The theorems of this paper are finite rational statements, but producing their certificates required
navigating an obstruction that we believe explains why no explicit consecutive separator had been written
down before. This section records the obstruction, the pipeline that circumvents it, and the verification
protocol that makes the results independent of every numerical step. The pipeline reuses only standard
components. The value lies in how they must be combined near the boundary of a sum-of-squares cone.

\subsection{The boundary obstruction}\label{sec:method-boundary}

Every positive diagonal scaling of $\horn$ lies on the boundary of $\COP_5$, because $\horn$ spans an extreme
ray and scalings map extreme rays to extreme rays. If such a scaling $M$ belongs to $\Kc{5}{r}$ at all, it
therefore lies on the boundary of $\Kc{5}{r}$, and the boundary position forces rank deficiency in every
certificate. The mechanism is the zero set of the Horn form on the nonnegative orthant, which we record
exactly.

\begin{proposition}[Forced kernel]\label{prop:kernel}
Let $M=D\horn D$ with $D=\mathrm{diag}(d)\in\mathcal{D}$. For each of the five non-adjacent pairs
$\{i,j\}\in\bigl\{\{1,3\},\{1,4\},\{2,4\},\{2,5\},\{3,5\}\bigr\}$ of $C_5$, let $\bar x^{(ij)}\ge0$ be the
normalized point with $(\bar x^{(ij)})^{\circ2}=D^{-1}(e_i+e_j)$. These are minimal nonnegative zeros of
$\pA{M}$. Consequently, for every $r$ and every Gram matrix $Q$ with
$(\sum_ix_i^2)^r\pA{M}=z_{r+2}^{\top}Qz_{r+2}$, the five evaluation vectors $z_{r+2}(\bar x^{(ij)})$ lie in
$\ker Q$. They are linearly independent, so $\mathrm{rank}\,Q\le\dim z_{r+2}-5$ and $Q$ is singular.
\end{proposition}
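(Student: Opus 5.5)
The argument has three steps: show the points are zeros, derive the kernel inclusion, and prove linear independence.

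\emph{Zeros.} Put $u=(\bar x^{(ij)})^{\circ2}=D^{-1}(e_i+e_j)$, so $Du=e_i+e_j$. Then
\[
\pA{M}(\bar x^{(ij)})=u^{\top}D\horn Du=(e_i+e_j)^{\top}\horn(e_i+e_j)=\horn_{ii}+2\horn_{ij}+\horn_{jj}=1-2+1=0,
\]
because $\{i,j\}$ is non-adjacent and $\horn_{ij}=-1$. Minimality on the nonnegative orthant is inherited from the corresponding statement for $\horn$ at $e_i+e_j$ via the bijection $u\mapsto Du$ of $\R^5_+$. Only the vanishing is needed below.

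\emph{Kernel.} Fix $r$ and a Gram matrix $Q\psd$ with $(\sum_ix_i^2)^r\pA{M}=z_{r+2}^{\top}Qz_{r+2}$. By Lemma~\ref{lem:basis} every representation lives in this basis. Evaluating at $\bar x=\bar x^{(ij)}$ gives
\[
v^{\top}Qv=\|\bar x\|^{2r}\cdot 0=0,\qquad v=z_{r+2}(\bar x).
\]
Since $Q\psd$, $v^{\top}Qv=\|Q^{1/2}v\|^2$, so $Q^{1/2}v=0$ and hence $Qv=0$. Thus each of the five evaluation vectors lies in $\ker Q$, whatever $Q$ is chosen. The normalization of $\bar x$ is irrelevant here, since the kernel is invariant under scaling the vector.

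\emph{Independence.} This is the only step needing care. The support of $\bar x^{(ij)}$ is exactly $\{i,j\}$, so $z_{r+2}(\bar x^{(ij)})$ is nonzero only on monomials in $x_i,x_j$ alone. Take the monomial $x_i^{a}x_j^{b}$ with $a,b\ge1$ and $a+b=r+2$, which exists because $r+2\ge2$. For example, use $x_ix_j^{r+1}$. Its entry in $z_{r+2}(\bar x^{(ij)})$ is $\bar x_i\bar x_j^{r+1}>0$.

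For any other pair $\{k,l\}\ne\{i,j\}$, the vector $\bar x^{(kl)}$ vanishes on $x_i$ or on $x_j$, because its support $\{k,l\}$ cannot contain both $i$ and $j$. So the entry of $z_{r+2}(\bar x^{(kl)})$ at $x_ix_j^{r+1}$ is $0$.

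Restricting the five vectors to these five mixed monomial coordinates, one per pair, therefore gives a diagonal matrix with positive diagonal entries. Hence the five vectors are linearly independent. It follows that $\dim\ker Q\ge5$, so $\mathrm{rank}\,Q\le\dim z_{r+2}-5$, and $Q$ is singular.
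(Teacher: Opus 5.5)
Your proof is correct and follows the paper's argument step for step: the same evaluation $(e_i+e_j)^{\top}\horn(e_i+e_j)=0$, the same passage from $v^{\top}Qv=0$ with $Q\psd$ to $Qv=0$, and the same independence argument via one mixed monomial per pair (you use $x_ix_j^{r+1}$, the paper uses $x_i^{r+1}x_j$). The one claim you assert without proof is minimality for $\horn$ itself, which the paper settles in one line: since $\horn_{kk}=1>0$, no nonnegative zero is supported on a single coordinate, so no zero has support strictly inside $\{i,j\}$.
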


\begin{proof}
On the nonnegative orthant $\pA{M}(x)=y^{\top}\horn y$ with $y=Dx^{\circ2}\ge0$. For a non-adjacent pair
$\{i,j\}$, the principal restriction of $\horn$ to $\{i,j\}$ is
$\left(\begin{smallmatrix}1&-1\\-1&1\end{smallmatrix}\right)$, whose kernel is spanned by $e_i+e_j$. Hence
$y=e_i+e_j$, that is $x=\bar x^{(ij)}$, is a zero of $\pA{M}$. It is a \emph{minimal} zero, meaning its support
contains that of no other nonnegative zero: $\horn$ restricted to any single coordinate is the positive scalar $1$,
so no nonnegative zero is supported on one coordinate. At such a zero $(\sum_ix_i^2)^r\pA{M}$ vanishes, so
$z_{r+2}(\bar x^{(ij)})^{\top}Qz_{r+2}(\bar x^{(ij)})=0$ with $Q\psd$, forcing
$Qz_{r+2}(\bar x^{(ij)})=0$. For independence, the monomial $x_i^{r+1}x_j$ takes a nonzero value at
$\bar x^{(ij)}$ (whose support is exactly $\{i,j\}$) and vanishes at $\bar x^{(kl)}$ for every other pair
$\{k,l\}\ne\{i,j\}$, since then $i$ or $j$ lies outside the support $\{k,l\}$. The five coordinate patterns
are thus a permuted identity, so the vectors are independent.
\end{proof}

\begin{remark}\label{rem:minimalzeros}
The five $\bar x^{(ij)}$ are the minimal nonnegative zero rays of $\pA{M}$, and they are all the argument uses.
The full nonnegative zero set is larger. For instance the Horn form has the non-minimal nonnegative zeros
$(a,0,b,0,c)$ with $b=a+c$, on the support $\{1,3,5\}$ where $\horn$ restricts to a rank-one form. Additional
zeros only enlarge the forced kernel, so the rank bound above is unaffected.
\end{remark}

The forced kernel is why the standard rational-rounding step fails on the raw scalings. The numerical solves
return Gram matrices with eigenvalues at the scale of the solver tolerance, and repairing a rounded
near-singular matrix to satisfy the coefficient constraints exactly, in the manner of
\citet{PeyrlParrilo2008}, then produces small negative eigenvalues that no rounding denominator removes: by
Proposition~\ref{prop:kernel} the exact feasible set contains no positive definite point to round toward.

The resolution is not a better rounding scheme but a better target. Shifting the matrix along $\ones$ moves
it into the interior of $\Kc{5}{2}$ while Lemma~\ref{lem:threshold} controls exactly how far one may shift
before the lower cone swallows the candidate. Inside the window, positive definite Gram matrices exist, and
rounding becomes routine. The window is narrow, of width about $10^{-3}$ by \eqref{eq:measured}, which is
why the shift must be chosen against the measured thresholds and cannot be guessed at scale $10^{-1}$.

\subsection{Search}\label{sec:method-search}

Certificates were located by semidefinite programming in the parity-reduced coordinates of
Lemma~\ref{lem:parity}: at level two the $70\times70$ Gram variable splits into one $15\times15$ block, ten
$5\times5$ blocks and five scalars, and at level three the $126\times126$ variable splits into five
$15\times15$ blocks, ten $5\times5$ blocks and one scalar. Two details decide success near the boundary.

First, the objective. A feasibility solve, or a minimum-trace solve, tends to return an extreme point of the
feasible region, which is a low-rank Gram matrix, exactly the object that resists rounding. Maximizing the
smallest eigenvalue instead, in the form $\max\,t$ subject to every block $\succeq tI$, returns a
well-centred Gram whose distance to the semidefinite boundary equals the margin $t>0$. Rounding then has
room to move. On the interior targets $M_0+\eps\ones$ the achieved margins were of order $10^{-1}$ after
normalization, and rounding at denominators of a few thousand succeeded at once.

Second, the exact repair, which the block structure makes trivial. After rounding, the polynomial identity
\eqref{eq:gramA} or \eqref{eq:gramAprime} holds only up to a small rational residual that must be eliminated
exactly. The following observation reduces the correction to a division and applies to any
sum-of-squares certification in a monomial basis, well beyond the present setting.

\begin{proposition}[Elementwise exact repair]\label{prop:repair}
Let $\mathcal{A}$ be the linear map sending a symmetric Gram matrix $Q$, written in a monomial basis, to the
coefficient vector of $z^{\top}Qz$, and suppose each product $z_\alpha z_\beta$ of basis monomials equals a
single target monomial. Then $\mathcal{A}\mathcal{A}^{\top}$ is diagonal, and for any rational $Q_0$ the
minimum-norm exact solution of $\mathcal{A}\,q=b$ nearest $Q_0$ is
\[
  q\;=\;q_0+\mathcal{A}^{\top}\Lambda^{-1}(b-\mathcal{A}q_0),
  \qquad \Lambda=\mathrm{diag}(\mathcal{A}\mathcal{A}^{\top}),
\]
a componentwise operation whose cost is the number of nonzero coefficients.
\end{proposition}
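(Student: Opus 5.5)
Since every product $z_\alpha z_\beta$ is a single monomial, the statement should reduce to a least-squares projection onto an affine subspace in which the normal matrix turns out to be diagonal. The plan is to check that diagonality directly and then read off the standard minimum-norm formula.

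First, write $q$ for the vector of entries of the symmetric Gram matrix $Q$, indexed by ordered pairs $(\alpha,\beta)$ with the Frobenius inner product. The map $\mathcal{A}$ sends $q$ to the vector indexed by target monomials $m$, with
\[
  (\mathcal{A}q)_m=\sum_{(\alpha,\beta):\,z_\alpha z_\beta=m}Q_{\alpha\beta}.
\]
Each row of $\mathcal{A}$ is therefore the $0/1$ indicator of the fibre $F_m=\{(\alpha,\beta): z_\alpha z_\beta=m\}$. By hypothesis each pair $(\alpha,\beta)$ produces exactly one monomial, so the fibres $F_m$ are pairwise disjoint. The rows of $\mathcal{A}$ thus have disjoint supports, and $(\mathcal{A}\mathcal{A}^{\top})_{mm'}=|F_m\cap F_{m'}|$ vanishes for $m\neq m'$. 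The diagonal entries are $\Lambda_{mm}=|F_m|$, which is positive for every monomial $m$ that actually appears in the target. Monomials outside the image of $\mathcal{A}$ have empty fibre and must be dropped, or the target coefficient $b_m$ must be assumed zero there. Restricted to symmetric matrices, the same argument goes through with the weights adjusted, because $Q_{\alpha\beta}$ and $Q_{\beta\alpha}$ lie in the same fibre.

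Next comes the projection. The set $\{q:\mathcal{A}q=b\}$ is an affine subspace, nonempty because $\mathcal{A}$ has full row rank on the realized monomials (its rows have disjoint nonempty supports). The point of this subspace nearest $q_0$ is $q_0+\mathcal{A}^{\top}w$, where $w$ solves $\mathcal{A}\mathcal{A}^{\top}w=b-\mathcal{A}q_0$. The correction must lie in the row space, since the residual $q-q_0$ is orthogonal to $\ker\mathcal{A}$. Because $\mathcal{A}\mathcal{A}^{\top}=\Lambda$ is diagonal and invertible, this gives $w=\Lambda^{-1}(b-\mathcal{A}q_0)$ and hence the stated formula. Two properties follow. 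Exactness over $\Q$ holds because only rational additions and divisions by the integers $|F_m|$ occur. The cost is linear in the number of nonzero coefficients, since the update adds $(b_m-(\mathcal{A}q_0)_m)/|F_m|$ to every entry of $F_m$ and the fibres partition the entries.

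The only delicate point is bookkeeping rather than mathematics. One must fix the inner product convention, ordered pairs versus upper-triangular entries, and check that the symmetric parametrization keeps $\mathcal{A}\mathcal{A}^{\top}$ diagonal, which it does by the disjoint-fibre argument. The meaning of ``nearest'' also depends on that convention. I would state the ordered-pair Frobenius convention explicitly, because under it the formula holds verbatim.
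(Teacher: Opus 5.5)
Your argument is correct and matches the paper's: each Gram coordinate feeds exactly one monomial, so the rows of $\mathcal{A}$ have disjoint supports, $\mathcal{A}\mathcal{A}^{\top}$ is diagonal, and the normal-equations solution reduces to the stated elementwise update. The paper indexes by unordered pairs with weights $1$ and $2$, so its $\Lambda_{\mu\mu}$ counts $1$ per diagonal and $4$ per off-diagonal entry, and its ``nearest'' point differs from your Frobenius one when a monomial is fed by both kinds of entries; this is exactly the convention dependence you flag, and your caveat about empty fibres (which arise once $Q$ is restricted to parity blocks) covers a case the paper leaves implicit.
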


\begin{proof}
The hypothesis says each column of $\mathcal{A}$ (indexed by a Gram entry, i.e.\ an unordered pair of basis
monomials) has exactly one nonzero, in the row of the monomial $z_\alpha z_\beta$. Its value is the
coefficient-map weight of that entry, namely $1$ for a diagonal entry $\alpha=\beta$ and $2$ for an
off-diagonal entry $\alpha\ne\beta$ (since $z_\alpha z_\beta$ then arises from both ordered pairs and appears
with coefficient $2$ in $z^{\top}Qz$). Then
$(\mathcal{A}\mathcal{A}^{\top})_{\mu\nu}=\sum_{\text{columns } c}\mathcal{A}_{\mu c}\mathcal{A}_{\nu c}$
vanishes for $\mu\ne\nu$, since no column has nonzeros in two rows. So $\mathcal{A}\mathcal{A}^{\top}=\Lambda$
is diagonal with $\Lambda_{\mu\mu}$ the \emph{weighted} collision count at $\mu$. This is the sum over the Gram
entries feeding $\mu$ of the squared weight, namely $1$ per diagonal entry and $4$ per off-diagonal entry, an
integer. The stated $q$ is the
image of the normal-equations solution
$q=q_0+\mathcal{A}^{\top}(\mathcal{A}\mathcal{A}^{\top})^{-1}(b-\mathcal{A}q_0)$, which is the exact
minimum-norm correction, and $\Lambda^{-1}$ is a reciprocal of integers.
\end{proof}

The condition holds in our coordinates because a Gram entry pairs two monomials of degree $r+2$ whose product
is one monomial of degree $2r+4$. Proposition~\ref{prop:repair} is what makes exact repair at level three,
with $126$ basis monomials and the coefficient constraints of degree $10$, a matter of seconds. The naive
alternative, an exact solve of a dense rational linear system, is what had made the level-three certificate
appear out of reach.

Dual certificates were found the same way. The functional $y$ is the variable, the constraint is
$\Mom(y)\psd$ blockwise, and the objective maximizes the margin of $\Mom(y)$ under the normalization
$\inner{y}{c_r(\cdot)}=-1$ on the target, after which rounding and exact evaluation of the pairing complete
the certificate.

\subsection{Scale of the certificates}\label{sec:method-scale}

The objects grow quickly with the level, which is why the exact-repair shortcut of
Proposition~\ref{prop:repair} matters. At level $r$ the primal Gram matrix and the dual moment matrix that
excludes $\Kc{5}{r}$ are both indexed by the $\binom{r+6}{r+2}$ monomials of degree $r+2$ in five variables;
Lemma~\ref{lem:parity} splits each into parity blocks whose sizes are the number of degree-$(r+2)$ monomials
in each of the $2^5$ sign classes. A separation theorem $\Kc{5}{r}\subsetneq\Kc{5}{r+1}$ thus pairs a
membership Gram at level $r+1$ with an exclusion moment matrix at level $r$, adjacent rows of
Table~\ref{tab:scale}, which records the sizes.

\begin{table}[htp]\centering
\caption{Basis and block sizes by level. The primal Gram basis and the dual moment basis are the
$\binom{r+6}{4}$ monomials of degree $r+2$. The largest parity block has size $\binom{(r+2-|s|)/2+4}{4}$ with
$|s|$ the least admissible sign-weight ($0$ if $r$ even, $1$ if $r$ odd). The coefficient identity has
$\binom{2r+8}{4}$ monomial equations of degree $2r+4$, of which only the $\binom{r+6}{4}$ even ones are
nontrivial (the parity blocks produce no odd monomials).}\label{tab:scale}
\begin{tabular}{lccccc}
\toprule
level $r$ & primal basis & largest block & coeff.\ constraints & dual basis & used in\\
\midrule
$1$ & $35$  & $5$  & $210$  & $35$  & Thms.~\ref{thm:A},~\ref{thm:Tsdp} (excl.), \S\ref{sec:prelim-worked}\\
$2$ & $70$  & $15$ & $495$  & $70$  & Thms.~\ref{thm:A},~\ref{thm:B},~\ref{thm:Tsdp} (memb.), \ref{thm:Aprime} (excl.)\\
$3$ & $126$ & $15$ & $1001$ & $126$ & Thm.~\ref{thm:Aprime} (memb.), \ref{thm:Adoubleprime} (excl.)\\
$4$ & $210$ & $35$ & $1820$ & $210$ & Thm.~\ref{thm:Adoubleprime} (memb.)\\
\bottomrule
\end{tabular}
\end{table}

Two features of the table explain where the method stops being routine, and how the fourth step is brought
back inside it. The largest parity block grows with the level, through sizes $5,15,15,35$ for $r=1,\dots,4$,
and the rational entries of a well-centred Gram matrix carry denominators that grow with it: the archived
level-three membership Gram already has entries over a common denominator of order $10^{12}$, and the
level-four certificates larger still, each an explicit rational verified exactly. More decisively, along the all-ones
direction the certified interval of shifts narrows with the level: the windows $(\eps_{r+1},\eps_r)$ read
from Table~\ref{tab:landscape} shrink from about $4\cdot10^{-3}$ at the first step to about $2\cdot10^{-4}$ at
the third, and along $\ones$ the fourth-step window closes to the point where the margin for rounding the
size-$210$ objects is comparable to the rounding error. Theorem~\ref{thm:Adoubleprime} overcomes this in two
ways, both instances of the same idea of adapting the geometry to the Horn zeros rather than to the ambient
coordinates. First, the shift is taken along the rank-one direction $B=dd^{\top}$ of
Lemma~\ref{lem:interiordir} rather than $\ones$. Equalizing $\pA{B}$ at the five minimal zeros of $\pA{M_0''}$ widens
the fourth-step window by an order of magnitude and improves the conditioning of both sides. Second, each
certificate is rounded in a metric adapted to its interior reference, so that the margin is measured where it
is actually available. That reference is the coefficient-diagonal Gram $G_B$ for the primal membership Gram,
and a moment functional supported near the Horn zeros for the dual. With these two changes the level-four separation is
certified exactly, primal and dual alike.

\subsection{Verification}\label{sec:method-verify}

A standalone program independent of the search code checks every certificate claim. This covers the membership
Grams and exclusion functionals of Theorems~\ref{thm:A}, \ref{thm:Aprime}, \ref{thm:Adoubleprime},
\ref{thm:B} and~\ref{thm:Tsdp}, the pairings \eqref{eq:pairingA}, \eqref{eq:pairingAprime} and
\eqref{eq:pairingAdoubleprime}, and the state matrix $X$ of \eqref{eq:Xstate} in Corollary~\ref{cor:dicke}. It
reads only the archived certificate files, reconstructs $\horn$ and the scaled/shifted matrices from
Conventions~\ref{conv:horn} and~\ref{conv:scaling}, expands both sides of the polynomial identities as exact
rational coefficient lists, verifies them monomial by monomial, evaluates every pairing exactly, and decides
positive semidefiniteness of every Gram block and of the full moment matrices by the exact rational symmetric
elimination of Lemma~\ref{lem:psdtest}. The program uses integer and rational arithmetic only. No
floating-point number and no semidefinite solve occurs in it. Appendix~\ref{app:verifier} documents the input
format and the checks.

Three consistency controls guard the pipeline itself. The numerical level-one membership test reproduces the
exact criterion of Theorem~\ref{thm:LV} on all $25$ grid points of the family \eqref{eq:LVab} that we
screened. The certification pipeline, run on $\horn$ at level one, reproduces Parrilo's classical membership
$\horn\in\Kc{5}{1}$ with an exact certificate, and the dual search there correctly
fails to produce a witness. The second control also has independent value. It re-derives a known theorem
through the identical toolchain that produces the new ones.

\begin{remark}[A trap in reporting rounded certificates]\label{rem:lattice-trap}
The exact repair changes denominators. A Gram matrix rounded on the grid $\tfrac{1}{N}\mathbb{Z}$ does not,
after repair, have entries in $\tfrac{1}{N}\mathbb{Z}$. The correction divides by the diagonal entries of
$\mathcal{A}\mathcal{A}^{\top}$, the weighted collision counts of Proposition~\ref{prop:repair}. Statements of
the form ``all certificate entries have denominator $N$'' should therefore be avoided. We report a common
denominator or a rounding grid only when it has been recomputed from the final archived data, and we label
rounding grids as such. We archive the reduced exact entries themselves.
\end{remark}

\section{Consequences for the stability-number bounds}\label{sec:stability}

The Parrilo hierarchy was introduced to bound copositive programs, and the motivating example is the stability
number $\alpha(G)$ of a graph. De Klerk and Pasechnik \citep{deKlerkPasechnik2002}, building on the
Motzkin--Straus formulation \citep{Motzkin1965}, write $\alpha(G)$ as a copositive program and relax it level
by level:
\begin{equation}\label{eq:dkp}
  \alpha(G)=\min\{\lambda:\ \lambda(I+A_G)-\ones\in\COP_n\},
  \qquad
  \vartheta^{(r)}(G)=\min\{\lambda:\ \lambda(I+A_G)-\ones\in\Kc{n}{r}\},
\end{equation}
where $A_G$ is the adjacency matrix. The bounds satisfy
$\vartheta^{(0)}\ge\vartheta^{(1)}\ge\cdots\ge\alpha(G)$ and converge, and de Klerk and Pasechnik conjectured
exactness at order $\alpha(G)-1$. These sit atop the classical semidefinite bounds, the Lov\'asz theta number
and Schrijver's strengthening \citep{Lovasz1979,Schrijver1979,Grotschel1981}, and later copositive and
sum-of-squares refinements \citep{Pena2007,Gvozdenovic2007,Dukanovic2010}.

The matrix at the heart of this paper is exactly the stability certificate of the five-cycle. Since
$\alpha(C_5)=2$ and, by Convention~\ref{conv:horn}, $\horn=2(I+A_{C_5})-\ones$, evaluating \eqref{eq:dkp} at
$\lambda=2$ gives the Horn matrix. Its position in the hierarchy, $\horn\in\Kc{5}{1}\setminus\Kc{5}{0}$, is
therefore the statement that the level-one bound is already exact while the level-zero bound is not. The
level-zero bound is classical: since $\Kc{5}{0}=\SPN_5$ is the cone of matrices that are a sum of a positive
semidefinite and an entrywise nonnegative matrix, it carries the entrywise nonnegativity strengthening, and
$\vartheta^{(0)}(G)$ coincides with Schrijver's strengthened theta number $\vartheta'(G)$
\citep{LaurentVargas2109}, not with the plain Lov\'asz number in general. For the five-cycle, however,
$\vartheta'(C_5)=\vartheta(C_5)=\sqrt5$, so $\vartheta^{(0)}(C_5)=\sqrt5$. Thus
\[
  \vartheta^{(0)}(C_5)=\sqrt5\approx 2.236,
  \qquad
  \vartheta^{(1)}(C_5)=2=\alpha(C_5),
\]
and the Horn matrix is precisely what closes the gap from $\sqrt5$ to $2$: at $\lambda=2$ the matrix
$\lambda(I+A_{C_5})-\ones=\horn$ leaves $\SPN_5$ but enters $\Kc{5}{1}$. The five-cycle is the smallest graph
where $\SPN$ tightening fails and one level of $\Kc{}{}$ repairs it, and this single instance is what makes
$\Kc{5}{0}\subsetneq\Kc{5}{1}$ visible. Our separators continue the story
one, two and three levels up. They certify that the cones defining
$\vartheta^{(1)},\vartheta^{(2)},\vartheta^{(3)},\vartheta^{(4)}$ are pairwise distinct, so the geometric
refinement that a higher-order bound can exploit is genuinely present at each of the first three steps, not
merely conjectured.

We are careful about what this does and does not give. A gap $\vartheta^{(r)}(G)>\vartheta^{(r+1)}(G)$ for a
specific graph would require a separator of the constrained form $\lambda(I+A_G)-\ones$, whereas ours are Horn
scalings shifted along $\ones$. Cone strictness is necessary for a bound gap but does not exhibit one. What the
present results settle is the prerequisite geometric question, open until now beyond the classical first step:
that the cones do separate. Locating a graph whose stability bound strictly improves at a prescribed high order
is a natural next target, for which the threshold device of Section~\ref{sec:threshold}, restricted to the
affine slice $\{\lambda(I+A_G)-\ones\}$, is the natural tool.

\section{A quantum application}\label{sec:quantum}

The dual objects constructed here have a second life in quantum information theory. Deciding whether a mixed
state is separable is governed by the positive-partial-transpose criterion \citep{Peres1996,Horodecki1996} and,
more finely, by the symmetric-extension (Doherty, Parrilo and Spedalieri) hierarchy of semidefinite tests
\citep{Doherty2002,Doherty2004,Doherty2005,Terhal2003,Navascues2009}. PPT states that are nonetheless entangled
are bound entangled \citep{Werner1989,Horodecki2009,Guhne2009}. We give the state our certificate produces
explicitly. A companion paper \citep{PaperII} develops the quantum hierarchy further.

Gulati, Nechita and Park \citep{GNP2025} identify the cone dual to $\Kc{n}{r}$ with a family of bosonic
quantum states. To an entrywise nonnegative symmetric $P$ associate the two-particle Dicke-diagonal state
\begin{equation}\label{eq:dicke}
  \rho(P)=\frac{1}{\mathbf{1}^{\top}P\mathbf{1}}\Bigl(\sum_i P_{ii}\,|ii\rangle\!\langle ii|
  +\sum_{i<j}2P_{ij}\,|D_{ij}\rangle\!\langle D_{ij}|\Bigr),
  \qquad |D_{ij}\rangle=\tfrac{1}{\sqrt2}\bigl(|ij\rangle+|ji\rangle\bigr).
\end{equation}
Then $\rho(P)$ is a valid state iff $P$ is entrywise nonnegative. It is PPT iff $P$ is also positive
semidefinite (that is, $P$ is doubly nonnegative, $P\in\mathrm{DNN}_5$). It is separable iff $P$ is completely positive. It admits an
$m$-party fully bosonic extension, positive under every partial transposition and with $\rho(P)$ as every
two-body marginal, iff $P\in(\Kc{5}{m-2})^{*}$.

The dual certificate of Theorem~\ref{thm:A} lands exactly in the gap of this correspondence. Define the
symmetric matrix $X$ by $\inner{X}{N}=\inner{y}{c_1(N)}$ for all $N$, equivalently
$X_{ij}=\sum_k y(x_i^2x_j^2x_k^2)$. From the functional $y$ of Theorem~\ref{thm:A} it is, explicitly,
\begin{equation}\label{eq:Xstate}
  X=\frac{1}{24}
  \begin{pmatrix}
    50452763 & 1384716 & 12296527 & 10741810 & 5759469\\
    1384716 & 1160516 & 426585 & 3442835 & 1975318\\
    12296527 & 426585 & 11148542 & 76968 & 10004023\\
    10741810 & 3442835 & 76968 & 14253636 & 1960470\\
    5759469 & 1975318 & 10004023 & 1960470 & 13988165
  \end{pmatrix}.
\end{equation}
Then $X$ is entrywise nonnegative and positive semidefinite (both checked exactly), so
$X\in\mathrm{DNN}_5$ and $\rho(X)$ is a PPT state. Moreover $\inner{X}{M^{\ast}}=-\tfrac{18919}{28800}<0$ with
$M^{\ast}$ copositive shows, through the duality $\mathrm{CP}_5^{*}=\COP_5$, that $X\notin\mathrm{CP}_5$, so $\rho(X)$ is entangled. A PPT entangled state is bound
entangled. Its Schmidt number is exactly two: each $|D_{ij}\rangle$ has Schmidt rank two and each
$|ii\rangle$ rank one, and the nonnegativity of $X$ makes \eqref{eq:dicke} a convex mixture of vectors of
Schmidt rank at most two, while entanglement rules out a fully separable, Schmidt-number-one decomposition.
Finally $X\in(\Kc{5}{1})^{*}\setminus(\Kc{5}{2})^{*}$, since $X$ pairs nonnegatively with $\Kc{5}{1}$ and
negatively with $M^{\ast}\in\Kc{5}{2}$. By the correspondence $\rho(X)$ admits a three-party fully bosonic
PPT extension but no four-party one.

\begin{corollary}\label{cor:dicke}
The Dicke-diagonal state $\rho(X)$ built from the matrix $X$ of \eqref{eq:Xstate} is a rational PPT bound
entangled state on $\mathbb{C}^5\otimes\mathbb{C}^5$, of Schmidt number two, admitting a three-party but not a
four-party fully bosonic PPT symmetric extension.
\end{corollary}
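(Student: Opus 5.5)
The plan is to assemble the corollary from facts already established in the paragraph preceding it, each reduced to a finite exact check or to a result stated earlier. The first step is to build $X$ from the functional $y$ of Theorem~\ref{thm:A}. I define $X$ by $X_{ij}=\sum_k y(x_i^2x_j^2x_k^2)$ and confirm that it is exactly the matrix of \eqref{eq:Xstate}. The key point is the identity $\inner{X}{N}=\inner{y}{c_1(N)}$ for every $N\in\Sym^5$. This holds because the coefficient of $x^{2\gamma}$ in $(\sum_k x_k^2)\pA{N}$ is $\sum_{i,j,k:\,e_i+e_j+e_k=\gamma}N_{ij}$, and pairing with $y$ regroups the sum as $\sum_{i,j}N_{ij}X_{ij}$. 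With that identity in hand, entrywise nonnegativity of $X$ is read off directly, and positive semidefiniteness follows from the exact $LDL^{\top}$ test used by the checker. Hence $X\in\mathrm{DNN}_5$, so $\rho(X)$ is a valid PPT state by the Gulati--Nechita--Park correspondence.

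The second step handles entanglement and the extension levels, all through the same identity. First, $\inner{X}{M^{\ast}}=\inner{y}{c_1(M^{\ast})}=-\tfrac{18919}{28800}<0$ by \eqref{eq:pairingA}. Since $M^{\ast}\in\Kc{5}{2}\subseteq\COP_5$ by Theorem~\ref{thm:A}, and $\mathrm{CP}_5^{*}=\COP_5$, this gives $X\notin\mathrm{CP}_5$. So $\rho(X)$ is entangled, and because it is PPT it is bound entangled.

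For the three-party extension, take any $N\in\Kc{5}{1}$. It has a Gram matrix $Q\psd$, so $\inner{X}{N}=\inner{\Mom(y)}{Q}\ge0$, as in the proof of Lemma~\ref{lem:exclusion}. Therefore $X\in(\Kc{5}{1})^{*}$, and by the correspondence with $m-2=1$ the state admits a three-party bosonic PPT extension. For the four-party case, the same negative pairing against $M^{\ast}\in\Kc{5}{2}$ shows $X\notin(\Kc{5}{2})^{*}$, so no four-party extension exists.

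The third step is the Schmidt number. For the upper bound, \eqref{eq:dicke} writes $\rho(X)$ as a nonnegative combination of projectors onto $|ii\rangle$ (Schmidt rank one) and $|D_{ij}\rangle$ (Schmidt rank two), so the Schmidt number is at most two. For the lower bound, entanglement excludes Schmidt number one, which gives exactly two.

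I expect the main obstacle to be the correspondence itself, in particular the dictionary between $m$-party PPT bosonic extensions and $(\Kc{5}{m-2})^{*}$, including how the index shift $m-2$ is normalized. I would invoke it from \citep{GNP2025} as stated rather than reprove it, and spend my care checking that our normalization of $\rho(P)$ and of $c_1$ matches theirs. The remaining ingredients are exact rational verifications already performed by the standalone checker.
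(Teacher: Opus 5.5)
Your proposal is correct and follows the paper's argument step for step. $X\in\mathrm{DNN}_5$ (checked exactly) gives PPT; the negative pairing $\inner{X}{M^{\ast}}=-\tfrac{18919}{28800}$ against the copositive $M^{\ast}$ gives entanglement and hence bound entanglement; $X\in(\Kc{5}{1})^{*}\setminus(\Kc{5}{2})^{*}$ with the Gulati--Nechita--Park correspondence gives the three-but-not-four-party extension; and the Dicke decomposition together with entanglement gives Schmidt number exactly two. Your explicit step $\inner{X}{N}=\inner{\Mom(y)}{Q}\ge0$ for $N\in\Kc{5}{1}$ simply spells out what the paper leaves implicit when it says that $X$ pairs nonnegatively with $\Kc{5}{1}$.
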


The same reshaping applied to the dual certificate $y'$ of Theorem~\ref{thm:Aprime} produces, by the same
argument, an explicit state one level up.

The standalone verifier of Appendix~\ref{app:verifier} checks all claims about $X$: entrywise nonnegativity,
positive semidefiniteness, and the pairing $\inner{X}{M^{\ast}}=-\tfrac{18919}{28800}$. The corollary is thus
self-contained modulo the cited public correspondence. On the two-boson sector the complete-graph
extension hierarchy is expected to coincide with the Doherty, Parrilo and Spedalieri hierarchy, which would make
$\rho(X)$ separate consecutive orders of the ordinary PPT symmetric-extension test. A companion paper
\citep{PaperII} pursues this direction.

\section{Conclusions}\label{sec:conclusions}

The Parrilo hierarchy over the copositive cone was known to converge for $n=5$ and to be exact at no finite
level, with a single classical strict step and no explicit witness beyond it. We have exhibited explicit
rational separators for the next three steps,
$\Kc{5}{1}\subsetneq\Kc{5}{2}\subsetneq\Kc{5}{3}\subsetneq\Kc{5}{4}$, each certified by a positive
semidefinite Gram matrix and a dual moment functional and re-verified in integer arithmetic. A one-parameter
family and an inscribed ball show the first gap is full-dimensional. A short combination of two known theorems shows
that strict adjacent steps recur at arbitrarily large levels. The separators were all found by the same
one-dimensional device, the threshold $\eps_r(M)$ along an interior direction, whose strict drops between
levels mark the windows where separators live. The first two new separators use the all-ones direction and the
third a rank-one member $dd^{\top}$ of the same family (Lemma~\ref{lem:interiordir}), chosen because it
equalizes the shift polynomial at the Horn zeros and so keeps both certificates well conditioned. In the
elementary coefficient hierarchy the same device gives closed-form separators at the first seven levels, with
certificates one reads off a coefficient list.

The present results make three questions concrete. First, whether the entry-level map
$\iota(D)=\min\{r:D\horn D\in\Kc{5}{r}\}$ attains every positive integer, which would upgrade
Theorem~\ref{thm:D} from ``infinitely often'' to ``at every step''. The threshold landscape of
Section~\ref{sec:landscape} exhibits the first several values and the tools to test the next. Second, whether
some graph has a stability bound $\vartheta^{(r)}$ that strictly improves at a prescribed high order, for which
the threshold device restricted to the affine slice $\{\lambda(I+A_G)-\ones\}$ is the natural instrument.
Third, whether the explicit chain can be continued to $\Kc{5}{5}$ and beyond by the same means: the
zero-adapted direction and the reference-metric rounding of Section~\ref{sec:thmAdoubleprime} removed the
fourth-level obstruction, but the parity blocks and denominators grow with each level, and whether a single
uniform construction certifies every step remains open. Such a construction would make the recurrence of
Theorem~\ref{thm:D} constructive at every level rather than infinitely often.

\section*{Statements and declarations}\label{sec:decl}

\paragraph{Availability of data and materials.}
All computational claims of the paper are backed by machine-readable archives with a standalone verifier
(Appendix~\ref{app:verifier}). These cover the certificates underlying Theorems~\ref{thm:A}, \ref{thm:Aprime},
\ref{thm:Adoubleprime}, \ref{thm:B} and \ref{thm:Tsdp} (scalings, shifts, exact rational Gram matrices and
exact rational dual moment functionals, and the exact pairings), the state matrix $X$ of Corollary~\ref{cor:dicke},
the monomial-by-monomial coefficient tables behind the closed forms of Theorems~\ref{thm:coef}
and~\ref{thm:Tsep}, and the modelling scripts, solver versions and raw outputs behind the numerical thresholds
of Section~\ref{sec:landscape}. A \texttt{README} lists, for each file, its \texttt{SHA-256} digest, its
schema, and the single command that re-checks it. The verifier depends only on the language's standard library
(exact integer and rational arithmetic) and prints \texttt{PASS} for every theorem. The archive is deposited
in a public repository under the permanent identifier \citep{CertArchive2026}
(\texttt{doi:10.5281/zenodo.22134333}). An identical frozen copy also accompanies the submission as reviewer
material.

\paragraph{Competing interests.}
The authors declare no competing interests.

\begin{appendices}
\section{Certificates for Theorem~\ref{thm:A}}\label{app:certA}
The matrix of Theorem~\ref{thm:A} is $M^{\ast}=M_0+\tfrac{1}{300}\ones$ with $M_0=D\horn D$,
$D=\mathrm{diag}(\tfrac14,4,1,1,1)$. Explicitly,

\[M^{\ast}=\frac{1}{1200}\,\begin{pmatrix}79 & 1204 & -296 & -296 & 304\\ 1204 & 19204 & 4804 & -4796 & -4796\\ -296 & 4804 & 1204 & 1204 & -1196\\ -296 & -4796 & 1204 & 1204 & 1204\\ 304 & -4796 & -1196 & 1204 & 1204\end{pmatrix}\]

The exclusion functional $y$ of \eqref{eq:pairingA} is supported on the monomials of
degree six that are even in every variable. Table~\ref{tab:dualy} lists all nonzero values. Values on all
other degree-six monomials are zero.

\begin{table}[htp]\centering\caption{The functional $y$: values on even degree-6 monomials. All entries share the denominator $q=24$. The table lists the integers $q\,y(m)$.}\label{tab:dualy}
\footnotesize\setlength\tabcolsep{3pt}
\begin{tabular}{lrlrlr}\toprule monomial & $q\,y$ & monomial & $q\,y$ & monomial & $q\,y$\\ \midrule
$x_{5}^{6}$ & $6361094$ & $x_{2}^{2}x_{4}^{4}$ & $1883574$ & $x_{1}^{2}x_{3}^{2}x_{4}^{2}$ & $56499$\\
$x_{4}^{2}x_{5}^{4}$ & $666275$ & $x_{2}^{2}x_{3}^{2}x_{5}^{2}$ & $245003$ & $x_{1}^{2}x_{3}^{4}$ & $3261735$\\
$x_{4}^{4}x_{5}^{2}$ & $901176$ & $x_{2}^{2}x_{3}^{2}x_{4}^{2}$ & $1411$ & $x_{1}^{2}x_{2}^{2}x_{5}^{2}$ & $147649$\\
$x_{4}^{6}$ & $7610948$ & $x_{2}^{2}x_{3}^{4}$ & $146730$ & $x_{1}^{2}x_{2}^{2}x_{4}^{2}$ & $656490$\\
$x_{3}^{2}x_{5}^{4}$ & $3965777$ & $x_{2}^{4}x_{5}^{2}$ & $274472$ & $x_{1}^{2}x_{2}^{2}x_{3}^{2}$ & $1339$\\
$x_{3}^{2}x_{4}^{2}x_{5}^{2}$ & $4011$ & $x_{2}^{4}x_{4}^{2}$ & $519692$ & $x_{1}^{2}x_{2}^{4}$ & $145063$\\
$x_{3}^{2}x_{4}^{4}$ & $8677$ & $x_{2}^{4}x_{3}^{2}$ & $32102$ & $x_{1}^{4}x_{5}^{2}$ & $1466840$\\
$x_{3}^{4}x_{5}^{2}$ & $3720085$ & $x_{2}^{6}$ & $189187$ & $x_{1}^{4}x_{4}^{2}$ & $6172220$\\
$x_{3}^{4}x_{4}^{2}$ & $6370$ & $x_{1}^{2}x_{5}^{4}$ & $2068493$ & $x_{1}^{4}x_{3}^{2}$ & $6907807$\\
$x_{3}^{6}$ & $4013622$ & $x_{1}^{2}x_{4}^{2}x_{5}^{2}$ & $7340$ & $x_{1}^{4}x_{2}^{2}$ & $434175$\\
$x_{2}^{2}x_{5}^{4}$ & $926526$ & $x_{1}^{2}x_{4}^{4}$ & $3849261$ & $x_{1}^{6}$ & $35471721$\\
$x_{2}^{2}x_{4}^{2}x_{5}^{2}$ & $381668$ & $x_{1}^{2}x_{3}^{2}x_{5}^{2}$ & $2069147$ &  & \\\bottomrule\end{tabular}\end{table}

\subsection{The membership Gram matrix}
In the parity-block form of Lemma~\ref{lem:parity} the Gram matrix of \eqref{eq:gramA} is block-diagonal with one of size $15$, ten of size $5$ and five of size $1$ on the $16$ parity classes of the degree-$4$ monomials. Its exact rational entries share the common denominator $33966000$. The full matrix is recorded in the archive (\texttt{k5\_certificate\_bundle.json}) and its positive semidefiniteness, together with the identity \eqref{eq:gramA}, is checked by the standalone verifier of Appendix~\ref{app:verifier}.

\section{Certificates for Theorem~\ref{thm:B}}\label{app:certB}
The family of Theorem~\ref{thm:B} is anchored by a single Gram matrix at $\eps_{\mathrm{lo}}=\tfrac1{10000}$,
of the same form as \eqref{eq:gramA}. 
In the parity-block form of Lemma~\ref{lem:parity} the anchor Gram matrix is block-diagonal with one of size $15$, ten of size $5$ and five of size $1$ on the $16$ parity classes of the degree-$4$ monomials. Its exact rational entries share the common denominator $17832150000$. The full matrix is recorded in the archive (\texttt{k5\_certificate\_bundle.json}) and its positive semidefiniteness, together with the membership identity, is checked by the standalone verifier of Appendix~\ref{app:verifier}.

\section{Certificates for Theorem~\ref{thm:Aprime}}\label{app:certAprime}
The matrix is $M'=M_0'+\tfrac1{800}\ones$ with $M_0'=D'\horn D'$, $D'=\mathrm{diag}(\tfrac1{16},64,1,1,1)$:

\[M'=\frac{1}{6400}\,\begin{pmatrix}33 & 25608 & -392 & -392 & 408\\ 25608 & 26214408 & 409608 & -409592 & -409592\\ -392 & 409608 & 6408 & 6408 & -6392\\ -392 & -409592 & 6408 & 6408 & 6408\\ 408 & -409592 & -6392 & 6408 & 6408\end{pmatrix}\]

The exclusion functional $y'$ of \eqref{eq:pairingAprime} on the even degree-8 monomials
(zero elsewhere):

\begin{table}[htp]\centering\caption{The functional $y'$: values on even degree-8 monomials. All entries share the denominator $q=12$. The table lists the integers $q\,y'(m)$.}\label{tab:dualyp}
\footnotesize\setlength\tabcolsep{3pt}
\begin{tabular}{lrlrlr}\toprule monomial & $q\,y'$ & monomial & $q\,y'$ & monomial & $q\,y'$\\ \midrule
$x_{5}^{8}$ & $99066851271$ & $x_{2}^{2}x_{3}^{6}$ & $7619594$ & $x_{1}^{2}x_{2}^{2}x_{3}^{2}x_{5}^{2}$ & $1190715$\\
$x_{4}^{2}x_{5}^{6}$ & $2774502039$ & $x_{2}^{4}x_{5}^{4}$ & $19357242$ & $x_{1}^{2}x_{2}^{2}x_{3}^{2}x_{4}^{2}$ & $24001$\\
$x_{4}^{4}x_{5}^{4}$ & $1263828614$ & $x_{2}^{4}x_{4}^{2}x_{5}^{2}$ & $1957893$ & $x_{1}^{2}x_{2}^{2}x_{3}^{4}$ & $360580$\\
$x_{4}^{6}x_{5}^{2}$ & $2729316458$ & $x_{2}^{4}x_{4}^{4}$ & $35204997$ & $x_{1}^{2}x_{2}^{4}x_{5}^{2}$ & $566794$\\
$x_{4}^{8}$ & $141601340725$ & $x_{2}^{4}x_{3}^{2}x_{5}^{2}$ & $498032$ & $x_{1}^{2}x_{2}^{4}x_{4}^{2}$ & $6400047$\\
$x_{3}^{2}x_{5}^{6}$ & $26213726887$ & $x_{2}^{4}x_{3}^{2}x_{4}^{2}$ & $2889$ & $x_{1}^{2}x_{2}^{4}x_{3}^{2}$ & $24362$\\
$x_{3}^{2}x_{4}^{2}x_{5}^{4}$ & $3219747$ & $x_{2}^{4}x_{3}^{4}$ & $105523$ & $x_{1}^{2}x_{2}^{6}$ & $115422$\\
$x_{3}^{2}x_{4}^{4}x_{5}^{2}$ & $1242258$ & $x_{2}^{6}x_{5}^{2}$ & $330382$ & $x_{1}^{4}x_{5}^{4}$ & $16838687649$\\
$x_{3}^{2}x_{4}^{6}$ & $3270632$ & $x_{2}^{6}x_{4}^{2}$ & $574447$ & $x_{1}^{4}x_{4}^{2}x_{5}^{2}$ & $7781814$\\
$x_{3}^{4}x_{5}^{4}$ & $23992309345$ & $x_{2}^{6}x_{3}^{2}$ & $8682$ & $x_{1}^{4}x_{4}^{4}$ & $28805121826$\\
$x_{3}^{4}x_{4}^{2}x_{5}^{2}$ & $944618$ & $x_{2}^{8}$ & $25619$ & $x_{1}^{4}x_{3}^{2}x_{5}^{2}$ & $14979236644$\\
$x_{3}^{4}x_{4}^{4}$ & $1040251$ & $x_{1}^{2}x_{5}^{6}$ & $9047577636$ & $x_{1}^{4}x_{3}^{2}x_{4}^{2}$ & $4990894$\\
$x_{3}^{6}x_{5}^{2}$ & $23836057320$ & $x_{1}^{2}x_{4}^{2}x_{5}^{4}$ & $33319446$ & $x_{1}^{4}x_{3}^{4}$ & $18724028714$\\
$x_{3}^{6}x_{4}^{2}$ & $1407433$ & $x_{1}^{2}x_{4}^{4}x_{5}^{2}$ & $82058906$ & $x_{1}^{4}x_{2}^{2}x_{5}^{2}$ & $62082043$\\
$x_{3}^{8}$ & $24447118707$ & $x_{1}^{2}x_{4}^{6}$ & $29017216312$ & $x_{1}^{4}x_{2}^{2}x_{4}^{2}$ & $370881391$\\
$x_{2}^{2}x_{5}^{6}$ & $1207583154$ & $x_{1}^{2}x_{3}^{2}x_{5}^{4}$ & $8142740388$ & $x_{1}^{4}x_{2}^{2}x_{3}^{2}$ & $182281$\\
$x_{2}^{2}x_{4}^{2}x_{5}^{4}$ & $63156225$ & $x_{1}^{2}x_{3}^{2}x_{4}^{2}x_{5}^{2}$ & $799580$ & $x_{1}^{4}x_{2}^{4}$ & $5616228$\\
$x_{2}^{2}x_{4}^{4}x_{5}^{2}$ & $62211041$ & $x_{1}^{2}x_{3}^{2}x_{4}^{4}$ & $1327748$ & $x_{1}^{6}x_{5}^{2}$ & $31880848867$\\
$x_{2}^{2}x_{4}^{6}$ & $2222284556$ & $x_{1}^{2}x_{3}^{4}x_{5}^{2}$ & $9042426812$ & $x_{1}^{6}x_{4}^{2}$ & $93993690866$\\
$x_{2}^{2}x_{3}^{2}x_{5}^{4}$ & $47093412$ & $x_{1}^{2}x_{3}^{4}x_{4}^{2}$ & $835203$ & $x_{1}^{6}x_{3}^{2}$ & $104728178167$\\
$x_{2}^{2}x_{3}^{2}x_{4}^{2}x_{5}^{2}$ & $59688$ & $x_{1}^{2}x_{3}^{6}$ & $10135339940$ & $x_{1}^{6}x_{2}^{2}$ & $536610054$\\
$x_{2}^{2}x_{3}^{2}x_{4}^{4}$ & $67259$ & $x_{1}^{2}x_{2}^{2}x_{5}^{4}$ & $31641444$ & $x_{1}^{8}$ & $1547784497286$\\
$x_{2}^{2}x_{3}^{4}x_{5}^{2}$ & $15584415$ & $x_{1}^{2}x_{2}^{2}x_{4}^{2}x_{5}^{2}$ & $1779776$ &  & \\
$x_{2}^{2}x_{3}^{4}x_{4}^{2}$ & $24550$ & $x_{1}^{2}x_{2}^{2}x_{4}^{4}$ & $429079804$ &  & \\\bottomrule\end{tabular}\end{table}

\subsection{The membership Gram matrix}
In the parity-block form of Lemma~\ref{lem:parity} the Gram matrix of \eqref{eq:gramAprime} is block-diagonal with five of size $15$, ten of size $5$ and one of size $1$ on the $16$ parity classes of the degree-$5$ monomials. Its exact rational entries share the common denominator $1526509152000$. The full matrix is recorded in the archive (\texttt{k5\_level3\_bundle.json}) and its positive semidefiniteness, together with the identity \eqref{eq:gramAprime}, is checked by the standalone verifier of Appendix~\ref{app:verifier}.

\section{Certificate for Theorem~\ref{thm:Tsdp}}\label{app:certT}
The matrix is $M_T=D\,T\,D+\tfrac1{1000}\ones$ with $D=\mathrm{diag}(\tfrac16,6,1,1,1)$ and $T$ the rational
$T(\psi)$ matrix \eqref{eq:Tmat}. Explicitly, $DTD$ is

\[D\,T\,D=\frac{1}{900}\,\begin{pmatrix}25 & -810 & 93 & 93 & -135\\ -810 & 32400 & -4860 & 3348 & 3348\\ 93 & -4860 & 900 & -810 & 558\\ 93 & 3348 & -810 & 900 & -810\\ -135 & 3348 & 558 & -810 & 900\end{pmatrix}\]

The exclusion functional $y_T$ on the even degree-6 monomials (zero elsewhere), with pairing
$\inner{y_T}{c_1(M_T)}=-\tfrac{69223}{108000}$:

\begin{table}[htp]\centering\caption{The functional $y_T$ on even degree-6 monomials. All entries share the denominator $q=12$. The table lists the integers $q\,y_T(m)$.}\label{tab:dualyT}
\footnotesize\setlength\tabcolsep{3pt}
\begin{tabular}{lrlrlr}\toprule monomial & $q\,y_T$ & monomial & $q\,y_T$ & monomial & $q\,y_T$\\ \midrule
$x_{5}^{6}$ & $4242410$ & $x_{2}^{2}x_{4}^{4}$ & $350113$ & $x_{1}^{2}x_{3}^{2}x_{4}^{2}$ & $2452684$\\
$x_{4}^{2}x_{5}^{4}$ & $4834178$ & $x_{2}^{2}x_{3}^{2}x_{5}^{2}$ & $1618$ & $x_{1}^{2}x_{3}^{4}$ & $5587422$\\
$x_{4}^{4}x_{5}^{2}$ & $6444491$ & $x_{2}^{2}x_{3}^{2}x_{4}^{2}$ & $629111$ & $x_{1}^{2}x_{2}^{2}x_{5}^{2}$ & $19818$\\
$x_{4}^{6}$ & $10744262$ & $x_{2}^{2}x_{3}^{4}$ & $1324642$ & $x_{1}^{2}x_{2}^{2}x_{4}^{2}$ & $3195$\\
$x_{3}^{2}x_{5}^{4}$ & $1905038$ & $x_{2}^{4}x_{5}^{2}$ & $534$ & $x_{1}^{2}x_{2}^{2}x_{3}^{2}$ & $800081$\\
$x_{3}^{2}x_{4}^{2}x_{5}^{2}$ & $3025580$ & $x_{2}^{4}x_{4}^{2}$ & $64819$ & $x_{1}^{2}x_{2}^{4}$ & $203971$\\
$x_{3}^{2}x_{4}^{4}$ & $7599833$ & $x_{2}^{4}x_{3}^{2}$ & $163762$ & $x_{1}^{4}x_{5}^{2}$ & $23438137$\\
$x_{3}^{4}x_{5}^{2}$ & $1649575$ & $x_{2}^{6}$ & $24167$ & $x_{1}^{4}x_{4}^{2}$ & $4543098$\\
$x_{3}^{4}x_{4}^{2}$ & $8148568$ & $x_{1}^{2}x_{5}^{4}$ & $6652075$ & $x_{1}^{4}x_{3}^{2}$ & $5484896$\\
$x_{3}^{6}$ & $12723968$ & $x_{1}^{2}x_{4}^{2}x_{5}^{2}$ & $4270490$ & $x_{1}^{4}x_{2}^{2}$ & $4037520$\\
$x_{2}^{2}x_{5}^{4}$ & $2094$ & $x_{1}^{2}x_{4}^{4}$ & $5286456$ & $x_{1}^{6}$ & $202459478$\\
$x_{2}^{2}x_{4}^{2}x_{5}^{2}$ & $1643$ & $x_{1}^{2}x_{3}^{2}x_{5}^{2}$ & $12912$ &  & \\\bottomrule\end{tabular}\end{table}

\subsection{The membership Gram matrix for $M_T$}
In the parity-block form of Lemma~\ref{lem:parity} the membership Gram matrix for $M_T$ is block-diagonal with one of size $15$, ten of size $5$ and five of size $1$ on the $16$ parity classes of the degree-$4$ monomials. Its exact rational entries share the common denominator $254745000$. The full matrix is recorded in the archive (\texttt{k5\_Torbit\_bundle.json}) and its positive semidefiniteness, together with the membership identity, is checked by the standalone verifier of Appendix~\ref{app:verifier}.

\section{Certificates for Theorem~\ref{thm:Adoubleprime}}\label{app:certAdoubleprime}
The matrix is $M''=M_0''+\tfrac1{1024}B$ with $M_0''=D''\horn D''$, $D''=\mathrm{diag}(\tfrac1{12},48,1,1,1)$,
and $B=dd^{\top}$ the rank-one interior direction of
Lemma~\ref{lem:interiordir}, $d=(\tfrac1{12},48,1,1,1)$. Explicitly, $M''$ is

\[M''=\frac{1}{147456}\,\begin{pmatrix}1025 & 590400 & -12276 & -12276 & 12300\\ 590400 & 340070400 & 7084800 & -7070976 & -7070976\\ -12276 & 7084800 & 147600 & 147600 & -147312\\ -12276 & -7070976 & 147600 & 147600 & 147600\\ 12300 & -7070976 & -147312 & 147600 & 147600\end{pmatrix}\]

Unlike the earlier certificates, the two Gram/moment objects here are too large to typeset in
full: membership uses a block Gram $Q''$ on the $210$ monomials of degree $6$ (blocks of sizes $35$, ten of
$15$, and five of $5$), and exclusion uses a dual functional $y''$ on the $126$ even monomials of degree
$10$ whose full $126\times126$ moment matrix is positive semidefinite. Both are exact rationals. The
membership Gram is assembled on a rounding grid of denominator $27720$ together with the reserved
$G_B$-shift below, and the dual on a rounding grid of denominator $504504000$, so their reduced entries
carry substantially larger denominators. Both are stored in the accompanying archive
(\texttt{k5\_level4\_membership.json} and \texttt{k5\_level4\_dual.json}) and checked by the standalone
verifier of Appendix~\ref{app:verifier}, which reconstructs $M''$ from $D''$, $B$ and the Horn matrix,
confirms the coefficient identity \eqref{eq:gramAdoubleprime} and positive semidefiniteness of every Gram
block, and verifies that the full moment matrix is positive semidefinite with the exact pairing
\[
  \inner{y''}{c_3(M'')}\;=\;-\frac{944597628682264591108409}{714164561510400000000000000}\;\approx\;-0.001323\;<\;0,
\]
which by Lemma~\ref{lem:exclusion} places $M''$ outside $\Kc{5}{3}$.

The membership Gram is assembled by the reserved-shift construction of Section~\ref{sec:thmAdoubleprime}: an
exact Gram for $(\sum_i x_i^2)^4\pA{M_0''+\frac1{2048}B}$ is formed by rounding a numerical Gram taken in the
metric of $G_B=\mathrm{diag}(b_\alpha)$ and projecting it exactly onto the coefficient constraints, after
which $\tfrac1{2048}G_B$ is added. Since $G_B$ represents $(\sum_i x_i^2)^4\pA{B}$ exactly, the result
represents $(\sum_i x_i^2)^4\pA{M''}$ exactly and inherits positive semidefiniteness from the reserved
positive-definite summand.

\section{The standalone verifier}\label{app:verifier}
The verifier is a single file using only integer and rational arithmetic from the language's standard
library. Its input is one archive per theorem containing: the scaling $D$, the shift $\eps$, the monomial
lists, the Gram blocks, and the dual functional. It performs, in order: (1) reconstruction of $\horn$ from
Convention~\ref{conv:horn} and of $M_0$, $M^{\ast}$ from Convention~\ref{conv:scaling}, cross-checked
against the archived copies; (2) expansion of $(\sum_ix_i^2)^{r}\pA{M}$ as an exact coefficient list by
convolution of exponent dictionaries; (3) accumulation of $z^{\top}Qz$ from the blocks and comparison with
(2), monomial by monomial; (4) exact evaluation of the pairing $\inner{y}{c_{r-1}(M)}$; (5) positive
semidefiniteness of every Gram block and of the full moment matrix $\Mom(y)$ by the exact test of
Lemma~\ref{lem:psdtest}. The checks reject perturbed data. Altering any Gram or moment entry breaks the
coefficient identity (3) or the semidefiniteness test (5). Runtime is seconds. The verifier and archives accompany the paper.

\begin{lemma}[Exact semidefiniteness test]\label{lem:psdtest}
Let $A\in\Sym^m(\Q)$. Run symmetric Gaussian elimination with the following pivoting. At step $k$ (for
$k=1,\dots,m$), among the rows $i\ge k$ seek one with $A_{ii}>0$. If one exists, apply the symmetric
permutation swapping indices $k$ and $i$ (interchanging both the two rows and the two columns, which preserves
symmetry), then apply the symmetric Schur-complement update to the trailing block,
\[
  A_{ij}\;\leftarrow\;A_{ij}-\frac{A_{ik}A_{jk}}{A_{kk}}\qquad(i,j>k),
\]
which is the congruence that simultaneously clears row $k$ and column $k$ below the pivot. If no row $i\ge k$
has $A_{ii}>0$, halt and return \textsc{true} iff the entire trailing
principal submatrix on indices $k,\dots,m$ is zero, and \textsc{false} otherwise (in particular if some
$A_{ii}<0$). If all $m$ steps complete, return \textsc{true}. The procedure is exact over $\Q$ and returns
\textsc{true} if and only if $A$ is positive semidefinite.
\end{lemma}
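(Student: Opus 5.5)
The proof has two directions: soundness (a \textsc{true} answer forces $A\psd$) and completeness ($A\psd$ forces \textsc{true}). Both rest on one invariant: every operation performed is a congruence $A\mapsto P^{\top}AP$ with $P$ invertible. A symmetric permutation is $\Pi^{\top}A\Pi$. The Schur step at pivot $k$ is $L_k^{-1}AL_k^{-\top}$, where $L_k$ is unit lower triangular with column $k$ equal to $A_{ik}/A_{kk}$ below the diagonal. After this step the matrix is block-diagonal, $\mathrm{diag}(A_{kk})\oplus S$, with $S$ the updated trailing block. By Sylvester's law of inertia, positive semidefiniteness is invariant under congruence. So at every stage, $A\psd$ holds if and only if the current matrix is PSD, and that in turn holds if and only if all processed pivots are $\ge0$ and the current trailing block is PSD. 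All pivots actually used are strictly positive by construction. Everything stays in $\Q$ because the only division is by a nonzero rational pivot.

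\emph{Soundness.} Suppose the algorithm returns \textsc{true} after completing all $m$ steps. Then the final matrix is diagonal with positive entries, so it is PSD, and hence $A\psd$. Suppose instead it halts at step $k$ and returns \textsc{true} because the trailing block is zero. The current matrix is then $\mathrm{diag}(\text{positive pivots})\oplus 0$, which is PSD, so again $A\psd$.

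\emph{Completeness.} Suppose $A\psd$. By the invariant, every trailing block $S$ reached is PSD. If all steps complete, the algorithm returns \textsc{true}. Otherwise it halts at some step $k$ where no diagonal entry of $S$ is positive. Since $S\psd$ its diagonal is nonnegative, so the diagonal of $S$ is zero. For a PSD matrix, $S_{ii}=0$ forces $S_{ij}=0$ for every $j$: the $2\times2$ principal minor gives $S_{ii}S_{jj}-S_{ij}^2\ge0$. Hence $S=0$ and the algorithm returns \textsc{true}.

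It remains to check \textsc{false} answers. Under the invariant, a \textsc{false} occurs exactly when the halting trailing block is nonzero with nonpositive diagonal. Such an $S$ has either a negative diagonal entry or a zero diagonal entry alongside a nonzero off-diagonal entry in its row, and in both cases it fails a $1\times1$ or $2\times2$ principal minor test. So $S$ is not PSD, and therefore neither is $A$. I expect this halting case to be the main subtle point, since it is where the argument must use the $2\times2$ minor fact rather than the congruence invariant alone. Everything else is bookkeeping verifying that the stated update formula is precisely the congruence that clears row and column $k$.
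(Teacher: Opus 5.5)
Your proof is correct and follows essentially the same route as the paper. Both arguments show that the permutations and Schur updates are congruences that preserve positive semidefiniteness, and both handle the halting case by showing that a nonzero trailing block with nonpositive diagonal is not positive semidefinite. You do this with $1\times1$ and $2\times2$ principal minors, while the paper uses the explicit test vectors $e_i$ and $e_i-\mathrm{sgn}(B_{ij})e_j$; the difference is only cosmetic.
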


\begin{proof}
Each elimination step is a congruence $A\mapsto E^{\top}AE$ by a rational elementary matrix $E$, and each
pivot search is a symmetric permutation congruence. Congruences preserve inertia, so they preserve positive
semidefiniteness. If a positive pivot $A_{kk}>0$ is found, the Schur complement of that entry is formed
exactly and the leading direction contributes a strictly positive eigenvalue, consistent with
semidefiniteness. Suppose at some step no row $i\ge k$ has positive diagonal. If the trailing submatrix
$B$ on $k,\dots,m$ is the zero matrix, the elimination is complete and all realized pivots were positive, so
$A$ is congruent to a diagonal matrix with nonnegative entries, hence positive semidefinite. Otherwise $B$
has all diagonal entries $\le0$ and is not identically zero: if some $B_{ii}<0$ then the corresponding unit
vector is a direction of negative curvature; and if all $B_{ii}=0$ but some $B_{ij}\ne0$, the vector
$e_i-\operatorname{sgn}(B_{ij})e_j$ gives $\,^{\!}v^{\top}Bv=-2|B_{ij}|<0$. In both cases $B$, and therefore
$A$, is not positive semidefinite. All arithmetic is over $\Q$, so the decision is exact.
\end{proof}
\end{appendices}

\bibliography{refs}


\begin{thebibliography}{63}
\ifx \bisbn   \undefined \def \bisbn  #1{ISBN #1}\fi
\ifx \binits  \undefined \def \binits#1{#1}\fi
\ifx \bauthor  \undefined \def \bauthor#1{#1}\fi
\ifx \batitle  \undefined \def \batitle#1{#1}\fi
\ifx \bjtitle  \undefined \def \bjtitle#1{#1}\fi
\ifx \bvolume  \undefined \def \bvolume#1{\textbf{#1}}\fi
\ifx \byear  \undefined \def \byear#1{#1}\fi
\ifx \bissue  \undefined \def \bissue#1{#1}\fi
\ifx \bfpage  \undefined \def \bfpage#1{#1}\fi
\ifx \blpage  \undefined \def \blpage #1{#1}\fi
\ifx \burl  \undefined \def \burl#1{\textsf{#1}}\fi
\ifx \doiurl  \undefined \def \doiurl#1{\url{https://doi.org/#1}}\fi
\ifx \betal  \undefined \def \betal{\textit{et al.}}\fi
\ifx \binstitute  \undefined \def \binstitute#1{#1}\fi
\ifx \binstitutionaled  \undefined \def \binstitutionaled#1{#1}\fi
\ifx \bctitle  \undefined \def \bctitle#1{#1}\fi
\ifx \beditor  \undefined \def \beditor#1{#1}\fi
\ifx \bpublisher  \undefined \def \bpublisher#1{#1}\fi
\ifx \bbtitle  \undefined \def \bbtitle#1{#1}\fi
\ifx \bedition  \undefined \def \bedition#1{#1}\fi
\ifx \bseriesno  \undefined \def \bseriesno#1{#1}\fi
\ifx \blocation  \undefined \def \blocation#1{#1}\fi
\ifx \bsertitle  \undefined \def \bsertitle#1{#1}\fi
\ifx \bsnm \undefined \def \bsnm#1{#1}\fi
\ifx \bsuffix \undefined \def \bsuffix#1{#1}\fi
\ifx \bparticle \undefined \def \bparticle#1{#1}\fi
\ifx \barticle \undefined \def \barticle#1{#1}\fi
\bibcommenthead
\ifx \bconfdate \undefined \def \bconfdate #1{#1}\fi
\ifx \botherref \undefined \def \botherref #1{#1}\fi
\ifx \url \undefined \def \url#1{\textsf{#1}}\fi
\ifx \bchapter \undefined \def \bchapter#1{#1}\fi
\ifx \bbook \undefined \def \bbook#1{#1}\fi
\ifx \bcomment \undefined \def \bcomment#1{#1}\fi
\ifx \oauthor \undefined \def \oauthor#1{#1}\fi
\ifx \citeauthoryear \undefined \def \citeauthoryear#1{#1}\fi
\ifx \endbibitem  \undefined \def \endbibitem {}\fi
\ifx \bconflocation  \undefined \def \bconflocation#1{#1}\fi
\ifx \arxivurl  \undefined \def \arxivurl#1{\textsf{#1}}\fi
\csname PreBibitemsHook\endcsname

\bibitem[\protect\citeauthoryear{{de Klerk} and
  Pasechnik}{2002}]{deKlerkPasechnik2002}
\begin{barticle}
\bauthor{\bsnm{{de Klerk}}, \binits{E.}},
\bauthor{\bsnm{Pasechnik}, \binits{D.V.}}:
\batitle{Approximation of the stability number of a graph via copositive
  programming}.
\bjtitle{SIAM Journal on Optimization}
\bvolume{12}(\bissue{4}),
\bfpage{875}--\blpage{892}
(\byear{2002})
\doiurl{10.1137/S1052623401383248}
\end{barticle}
\endbibitem

\bibitem[\protect\citeauthoryear{Bomze and {de Klerk}}{2002}]{BomzeDeKlerk2002}
\begin{barticle}
\bauthor{\bsnm{Bomze}, \binits{I.M.}},
\bauthor{\bsnm{{de Klerk}}, \binits{E.}}:
\batitle{Solving standard quadratic optimization problems via linear,
  semidefinite and copositive programming}.
\bjtitle{Journal of Global Optimization}
\bvolume{24}(\bissue{2}),
\bfpage{163}--\blpage{185}
(\byear{2002})
\doiurl{10.1023/A:1020209017701}
\end{barticle}
\endbibitem

\bibitem[\protect\citeauthoryear{Burer}{2009}]{Burer2009}
\begin{barticle}
\bauthor{\bsnm{Burer}, \binits{S.}}:
\batitle{On the copositive representation of binary and continuous nonconvex
  quadratic programs}.
\bjtitle{Mathematical Programming}
\bvolume{120}(\bissue{2}),
\bfpage{479}--\blpage{495}
(\byear{2009})
\doiurl{10.1007/s10107-008-0223-z}
\end{barticle}
\endbibitem

\bibitem[\protect\citeauthoryear{D{\"u}r}{2010}]{Dur2010}
\begin{bchapter}
\bauthor{\bsnm{D{\"u}r}, \binits{M.}}:
\bctitle{Copositive programming: a survey}.
In: \beditor{\bsnm{Diehl}, \binits{M.}},
\beditor{\bsnm{Glineur}, \binits{F.}},
\beditor{\bsnm{Jarlebring}, \binits{E.}},
\beditor{\bsnm{Michiels}, \binits{W.}} (eds.)
\bbtitle{Recent Advances in Optimization and Its Applications in Engineering},
pp. \bfpage{3}--\blpage{20}.
\bpublisher{Springer},
\blocation{Berlin, Heidelberg}
(\byear{2010}).
\doiurl{10.1007/978-3-642-12598-0_1}
\end{bchapter}
\endbibitem

\bibitem[\protect\citeauthoryear{Murty and Kabadi}{1987}]{MurtyKabadi1987}
\begin{barticle}
\bauthor{\bsnm{Murty}, \binits{K.G.}},
\bauthor{\bsnm{Kabadi}, \binits{S.N.}}:
\batitle{Some {NP}-complete problems in quadratic and nonlinear programming}.
\bjtitle{Mathematical Programming}
\bvolume{39}(\bissue{2}),
\bfpage{117}--\blpage{129}
(\byear{1987})
\doiurl{10.1007/BF02592948}
\end{barticle}
\endbibitem

\bibitem[\protect\citeauthoryear{Parrilo}{2000}]{Parrilo2000}
\begin{botherref}
\oauthor{\bsnm{Parrilo}, \binits{P.A.}}:
Structured semidefinite programs and semialgebraic geometry methods in
  robustness and optimization.
PhD thesis,
California Institute of Technology
(2000).
\doiurl{10.7907/2K6Y-CH43}
\end{botherref}
\endbibitem

\bibitem[\protect\citeauthoryear{Diananda}{1962}]{Diananda1962}
\begin{barticle}
\bauthor{\bsnm{Diananda}, \binits{P.H.}}:
\batitle{On non-negative forms in real variables some or all of which are
  non-negative}.
\bjtitle{Proceedings of the Cambridge Philosophical Society}
\bvolume{58}(\bissue{1}),
\bfpage{17}--\blpage{25}
(\byear{1962})
\doiurl{10.1017/S0305004100036185}
\end{barticle}
\endbibitem

\bibitem[\protect\citeauthoryear{Dickinson et~al.}{2013}]{DDGH2013}
\begin{barticle}
\bauthor{\bsnm{Dickinson}, \binits{P.J.C.}},
\bauthor{\bsnm{D{\"u}r}, \binits{M.}},
\bauthor{\bsnm{Gijben}, \binits{L.}},
\bauthor{\bsnm{Hildebrand}, \binits{R.}}:
\batitle{Scaling relationship between the copositive cone and {P}arrilo's first
  level approximation}.
\bjtitle{Optimization Letters}
\bvolume{7}(\bissue{8}),
\bfpage{1669}--\blpage{1679}
(\byear{2013})
\doiurl{10.1007/s11590-012-0523-3}
\end{barticle}
\endbibitem

\bibitem[\protect\citeauthoryear{Schweighofer and
  Vargas}{2024}]{SchweighoferVargas2024}
\begin{barticle}
\bauthor{\bsnm{Schweighofer}, \binits{M.}},
\bauthor{\bsnm{Vargas}, \binits{L.F.}}:
\batitle{Sum-of-squares certificates for copositivity via test states}.
\bjtitle{SIAM Journal on Applied Algebra and Geometry}
\bvolume{8}(\bissue{4}),
\bfpage{797}--\blpage{820}
(\byear{2024})
\doiurl{10.1137/23M1611798}
\end{barticle}
\endbibitem

\bibitem[\protect\citeauthoryear{Gulati et~al.}{2025}]{GNP2025}
\begin{botherref}
\oauthor{\bsnm{Gulati}, \binits{A.}},
\oauthor{\bsnm{Nechita}, \binits{I.}},
\oauthor{\bsnm{Park}, \binits{S.-J.}}:
Positive maps and extendibility hierarchies from copositive matrices
(2025).
\doiurl{10.48550/arXiv.2509.15201} .
\url{https://arxiv.org/abs/2509.15201}
\end{botherref}
\endbibitem

\bibitem[\protect\citeauthoryear{Britz and Laurent}{2025}]{BritzLaurent2025}
\begin{botherref}
\oauthor{\bsnm{Britz}, \binits{J.}},
\oauthor{\bsnm{Laurent}, \binits{M.}}:
Semidefinite hierarchies for diagonal unitary invariant bipartite quantum
  states
(2025).
\doiurl{10.48550/arXiv.2512.06551} .
\url{https://arxiv.org/abs/2512.06551}
\end{botherref}
\endbibitem

\bibitem[\protect\citeauthoryear{Bomze et~al.}{2000}]{Bomze2000}
\begin{barticle}
\bauthor{\bsnm{Bomze}, \binits{I.M.}},
\bauthor{\bsnm{D{\"u}r}, \binits{M.}},
\bauthor{\bsnm{{de Klerk}}, \binits{E.}},
\bauthor{\bsnm{Roos}, \binits{C.}},
\bauthor{\bsnm{Quist}, \binits{A.J.}},
\bauthor{\bsnm{Terlaky}, \binits{T.}}:
\batitle{On copositive programming and standard quadratic optimization
  problems}.
\bjtitle{Journal of Global Optimization}
\bvolume{18}(\bissue{4}),
\bfpage{301}--\blpage{320}
(\byear{2000})
\doiurl{10.1023/A:1026583532263}
\end{barticle}
\endbibitem

\bibitem[\protect\citeauthoryear{Motzkin and Straus}{1965}]{Motzkin1965}
\begin{barticle}
\bauthor{\bsnm{Motzkin}, \binits{T.S.}},
\bauthor{\bsnm{Straus}, \binits{E.G.}}:
\batitle{Maxima for graphs and a new proof of a theorem of {T}ur{\'a}n}.
\bjtitle{Canadian Journal of Mathematics}
\bvolume{17},
\bfpage{533}--\blpage{540}
(\byear{1965})
\doiurl{10.4153/CJM-1965-053-6}
\end{barticle}
\endbibitem

\bibitem[\protect\citeauthoryear{Bomze}{2012}]{Bomze2012}
\begin{barticle}
\bauthor{\bsnm{Bomze}, \binits{I.M.}}:
\batitle{Copositive optimization -- recent developments and applications}.
\bjtitle{European Journal of Operational Research}
\bvolume{216}(\bissue{3}),
\bfpage{509}--\blpage{520}
(\byear{2012})
\doiurl{10.1016/j.ejor.2011.04.026}
\end{barticle}
\endbibitem

\bibitem[\protect\citeauthoryear{Burer}{2012}]{Burer2012}
\begin{bchapter}
\bauthor{\bsnm{Burer}, \binits{S.}}:
\bctitle{Copositive programming}.
In: \beditor{\bsnm{Anjos}, \binits{M.F.}},
\beditor{\bsnm{Lasserre}, \binits{J.B.}} (eds.)
\bbtitle{Handbook on Semidefinite, Conic and Polynomial Optimization}.
\bsertitle{International Series in Operations Research \& Management Science},
vol. \bseriesno{166},
pp. \bfpage{201}--\blpage{218}.
\bpublisher{Springer},
\blocation{New York, NY}
(\byear{2012}).
\doiurl{10.1007/978-1-4614-0769-0_8}
\end{bchapter}
\endbibitem

\bibitem[\protect\citeauthoryear{Bomze et~al.}{2012}]{Bomze2012a}
\begin{barticle}
\bauthor{\bsnm{Bomze}, \binits{I.M.}},
\bauthor{\bsnm{Schachinger}, \binits{W.}},
\bauthor{\bsnm{Uchida}, \binits{G.}}:
\batitle{Think co(mpletely)positive! {M}atrix properties, examples and a
  clustered bibliography on copositive optimization}.
\bjtitle{Journal of Global Optimization}
\bvolume{52}(\bissue{3}),
\bfpage{423}--\blpage{445}
(\byear{2012})
\doiurl{10.1007/s10898-011-9749-3}
\end{barticle}
\endbibitem

\bibitem[\protect\citeauthoryear{Hall and Newman}{1963}]{Hall1963}
\begin{barticle}
\bauthor{\bsnm{Hall}, \binits{M.} \bsuffix{Jr.}},
\bauthor{\bsnm{Newman}, \binits{M.}}:
\batitle{Copositive and completely positive quadratic forms}.
\bjtitle{Mathematical Proceedings of the Cambridge Philosophical Society}
\bvolume{59}(\bissue{2}),
\bfpage{329}--\blpage{339}
(\byear{1963})
\doiurl{10.1017/S0305004100036951}
\end{barticle}
\endbibitem

\bibitem[\protect\citeauthoryear{Berman and Shaked-Monderer}{2003}]{Berman2003}
\begin{bbook}
\bauthor{\bsnm{Berman}, \binits{A.}},
\bauthor{\bsnm{Shaked-Monderer}, \binits{N.}}:
\bbtitle{Completely Positive Matrices}.
\bpublisher{World Scientific},
\blocation{River Edge, NJ}
(\byear{2003}).
\doiurl{10.1142/5273}
\end{bbook}
\endbibitem

\bibitem[\protect\citeauthoryear{Dickinson and Gijben}{2014}]{Dickinson2014}
\begin{barticle}
\bauthor{\bsnm{Dickinson}, \binits{P.J.C.}},
\bauthor{\bsnm{Gijben}, \binits{L.}}:
\batitle{On the computational complexity of membership problems for the
  completely positive cone and its dual}.
\bjtitle{Computational Optimization and Applications}
\bvolume{57}(\bissue{2}),
\bfpage{403}--\blpage{415}
(\byear{2014})
\doiurl{10.1007/s10589-013-9594-z}
\end{barticle}
\endbibitem

\bibitem[\protect\citeauthoryear{V{\"a}liaho}{1986}]{Valiaho1986}
\begin{barticle}
\bauthor{\bsnm{V{\"a}liaho}, \binits{H.}}:
\batitle{Criteria for copositive matrices}.
\bjtitle{Linear Algebra and its Applications}
\bvolume{81},
\bfpage{19}--\blpage{34}
(\byear{1986})
\doiurl{10.1016/0024-3795(86)90246-6}
\end{barticle}
\endbibitem

\bibitem[\protect\citeauthoryear{Baumert}{1966}]{Baumert1966}
\begin{barticle}
\bauthor{\bsnm{Baumert}, \binits{L.D.}}:
\batitle{Extreme copositive quadratic forms}.
\bjtitle{Pacific Journal of Mathematics}
\bvolume{19}(\bissue{2}),
\bfpage{197}--\blpage{204}
(\byear{1966})
\doiurl{10.2140/pjm.1966.19.197}
\end{barticle}
\endbibitem

\bibitem[\protect\citeauthoryear{Hildebrand}{2012}]{Hildebrand2012}
\begin{barticle}
\bauthor{\bsnm{Hildebrand}, \binits{R.}}:
\batitle{The extreme rays of the $5\times5$ copositive cone}.
\bjtitle{Linear Algebra and its Applications}
\bvolume{437}(\bissue{7}),
\bfpage{1538}--\blpage{1547}
(\byear{2012})
\doiurl{10.1016/j.laa.2012.04.017}
\end{barticle}
\endbibitem

\bibitem[\protect\citeauthoryear{Hildebrand}{2014}]{Hildebrand2014}
\begin{barticle}
\bauthor{\bsnm{Hildebrand}, \binits{R.}}:
\batitle{Minimal zeros of copositive matrices}.
\bjtitle{Linear Algebra and its Applications}
\bvolume{459},
\bfpage{154}--\blpage{174}
(\byear{2014})
\doiurl{10.1016/j.laa.2014.07.004}
\end{barticle}
\endbibitem

\bibitem[\protect\citeauthoryear{Haynsworth and
  Hoffman}{1969}]{HaynsworthHoffman1969}
\begin{barticle}
\bauthor{\bsnm{Haynsworth}, \binits{E.}},
\bauthor{\bsnm{Hoffman}, \binits{A.J.}}:
\batitle{Two remarks on copositive matrices}.
\bjtitle{Linear Algebra and its Applications}
\bvolume{2}(\bissue{4}),
\bfpage{387}--\blpage{392}
(\byear{1969})
\doiurl{10.1016/0024-3795(69)90011-1}
\end{barticle}
\endbibitem

\bibitem[\protect\citeauthoryear{Laurent and Vargas}{2023}]{LaurentVargas2109}
\begin{barticle}
\bauthor{\bsnm{Laurent}, \binits{M.}},
\bauthor{\bsnm{Vargas}, \binits{L.F.}}:
\batitle{Exactness of {P}arrilo's conic approximations for copositive matrices
  and associated low order bounds for the stability number of a graph}.
\bjtitle{Mathematics of Operations Research}
\bvolume{48}(\bissue{2}),
\bfpage{1017}--\blpage{1043}
(\byear{2023})
\doiurl{10.1287/moor.2022.1290}
{\href{https://arxiv.org/abs/2109.12876}{{arXiv:2109.12876}}}.
\bcomment{arXiv:2109.12876}
\end{barticle}
\endbibitem

\bibitem[\protect\citeauthoryear{Parrilo}{2003}]{Parrilo2003}
\begin{barticle}
\bauthor{\bsnm{Parrilo}, \binits{P.A.}}:
\batitle{Semidefinite programming relaxations for semialgebraic problems}.
\bjtitle{Mathematical Programming}
\bvolume{96}(\bissue{2}),
\bfpage{293}--\blpage{320}
(\byear{2003})
\doiurl{10.1007/s10107-003-0387-5}
\end{barticle}
\endbibitem

\bibitem[\protect\citeauthoryear{Lasserre}{2001}]{Lasserre2001}
\begin{barticle}
\bauthor{\bsnm{Lasserre}, \binits{J.B.}}:
\batitle{Global optimization with polynomials and the problem of moments}.
\bjtitle{SIAM Journal on Optimization}
\bvolume{11}(\bissue{3}),
\bfpage{796}--\blpage{817}
(\byear{2001})
\doiurl{10.1137/S1052623400366802}
\end{barticle}
\endbibitem

\bibitem[\protect\citeauthoryear{Nie}{2014}]{Nie2014}
\begin{barticle}
\bauthor{\bsnm{Nie}, \binits{J.}}:
\batitle{Optimality conditions and finite convergence of {L}asserre's
  hierarchy}.
\bjtitle{Mathematical Programming}
\bvolume{146}(\bissue{1--2}),
\bfpage{97}--\blpage{121}
(\byear{2014})
\doiurl{10.1007/s10107-013-0680-x}
\end{barticle}
\endbibitem

\bibitem[\protect\citeauthoryear{Laurent}{2009}]{Laurent2009}
\begin{bchapter}
\bauthor{\bsnm{Laurent}, \binits{M.}}:
\bctitle{Sums of squares, moment matrices and optimization over polynomials}.
In: \beditor{\bsnm{Putinar}, \binits{M.}},
\beditor{\bsnm{Sullivant}, \binits{S.}} (eds.)
\bbtitle{Emerging Applications of Algebraic Geometry}.
\bsertitle{The IMA Volumes in Mathematics and its Applications},
vol. \bseriesno{149},
pp. \bfpage{157}--\blpage{270}.
\bpublisher{Springer},
\blocation{New York}
(\byear{2009}).
\doiurl{10.1007/978-0-387-09686-5_7}
\end{bchapter}
\endbibitem

\bibitem[\protect\citeauthoryear{Blekherman
  et~al.}{2013}]{BlekhermanParriloThomas2013}
\begin{bbook}
\beditor{\bsnm{Blekherman}, \binits{G.}},
\beditor{\bsnm{Parrilo}, \binits{P.A.}},
\beditor{\bsnm{Thomas}, \binits{R.R.}} (eds.):
\bbtitle{Semidefinite Optimization and Convex Algebraic Geometry}.
\bsertitle{MOS-SIAM Series on Optimization},
vol. \bseriesno{13}.
\bpublisher{SIAM},
\blocation{Philadelphia, PA}
(\byear{2013}).
\doiurl{10.1137/1.9781611972290}
\end{bbook}
\endbibitem

\bibitem[\protect\citeauthoryear{Lasserre}{2010}]{Lasserre2010}
\begin{bbook}
\bauthor{\bsnm{Lasserre}, \binits{J.B.}}:
\bbtitle{Moments, Positive Polynomials and Their Applications}.
\bpublisher{Imperial College Press},
\blocation{London}
(\byear{2010}).
\doiurl{10.1142/p665}
\end{bbook}
\endbibitem

\bibitem[\protect\citeauthoryear{{de Klerk} and
  Laurent}{2019}]{deKlerkLaurent2019}
\begin{bchapter}
\bauthor{\bsnm{{de Klerk}}, \binits{E.}},
\bauthor{\bsnm{Laurent}, \binits{M.}}:
\bctitle{A survey of semidefinite programming approaches to the generalized
  problem of moments and their error analysis}.
In: \beditor{\bsnm{Araujo}, \binits{C.}},
\beditor{\bsnm{Benkart}, \binits{G.}},
\beditor{\bsnm{Praeger}, \binits{C.E.}},
\beditor{\bsnm{Tanbay}, \binits{B.}} (eds.)
\bbtitle{World Women in Mathematics 2018}.
\bsertitle{Association for Women in Mathematics Series},
vol. \bseriesno{20},
pp. \bfpage{17}--\blpage{56}.
\bpublisher{Springer},
\blocation{Cham}
(\byear{2019}).
\doiurl{10.1007/978-3-030-21170-7_1}
\end{bchapter}
\endbibitem

\bibitem[\protect\citeauthoryear{Schm{\"u}dgen}{1991}]{Schmudgen1991}
\begin{barticle}
\bauthor{\bsnm{Schm{\"u}dgen}, \binits{K.}}:
\batitle{The {K}-moment problem for compact semi-algebraic sets}.
\bjtitle{Mathematische Annalen}
\bvolume{289}(\bissue{2}),
\bfpage{203}--\blpage{206}
(\byear{1991})
\doiurl{10.1007/BF01446568}
\end{barticle}
\endbibitem

\bibitem[\protect\citeauthoryear{Putinar}{1993}]{Putinar1993}
\begin{barticle}
\bauthor{\bsnm{Putinar}, \binits{M.}}:
\batitle{Positive polynomials on compact semi-algebraic sets}.
\bjtitle{Indiana University Mathematics Journal}
\bvolume{42}(\bissue{3}),
\bfpage{969}--\blpage{984}
(\byear{1993})
\doiurl{10.1512/iumj.1993.42.42045}
\end{barticle}
\endbibitem

\bibitem[\protect\citeauthoryear{Powers and Reznick}{2001}]{PowersReznick2001}
\begin{barticle}
\bauthor{\bsnm{Powers}, \binits{V.}},
\bauthor{\bsnm{Reznick}, \binits{B.}}:
\batitle{A new bound for {P}\'olya's theorem with applications to polynomials
  positive on polyhedra}.
\bjtitle{Journal of Pure and Applied Algebra}
\bvolume{164}(\bissue{1--2}),
\bfpage{221}--\blpage{229}
(\byear{2001})
\doiurl{10.1016/S0022-4049(00)00155-9}
\end{barticle}
\endbibitem

\bibitem[\protect\citeauthoryear{Reznick}{2000}]{Reznick2000}
\begin{bchapter}
\bauthor{\bsnm{Reznick}, \binits{B.}}:
\bctitle{Some concrete aspects of {H}ilbert's 17th problem}.
In: \beditor{\bsnm{Delzell}, \binits{C.N.}},
\beditor{\bsnm{Madden}, \binits{J.J.}} (eds.)
\bbtitle{Real Algebraic Geometry and Ordered Structures}.
\bsertitle{Contemporary Mathematics},
vol. \bseriesno{253},
pp. \bfpage{251}--\blpage{272}.
\bpublisher{American Mathematical Society},
\blocation{Providence, RI}
(\byear{2000}).
\doiurl{10.1090/conm/253/03936}
\end{bchapter}
\endbibitem

\bibitem[\protect\citeauthoryear{Nie and
  Schweighofer}{2007}]{NieSchweighofer2007}
\begin{barticle}
\bauthor{\bsnm{Nie}, \binits{J.}},
\bauthor{\bsnm{Schweighofer}, \binits{M.}}:
\batitle{On the complexity of {P}utinar's {P}ositivstellensatz}.
\bjtitle{Journal of Complexity}
\bvolume{23}(\bissue{1}),
\bfpage{135}--\blpage{150}
(\byear{2007})
\doiurl{10.1016/j.jco.2006.07.002}
\end{barticle}
\endbibitem

\bibitem[\protect\citeauthoryear{Bundfuss and D{\"u}r}{2009}]{Bundfuss2009}
\begin{barticle}
\bauthor{\bsnm{Bundfuss}, \binits{S.}},
\bauthor{\bsnm{D{\"u}r}, \binits{M.}}:
\batitle{An adaptive linear approximation algorithm for copositive programs}.
\bjtitle{SIAM Journal on Optimization}
\bvolume{20}(\bissue{1}),
\bfpage{30}--\blpage{53}
(\byear{2009})
\doiurl{10.1137/070711815}
\end{barticle}
\endbibitem

\bibitem[\protect\citeauthoryear{Y{\i}ld{\i}r{\i}m}{2012}]{Yildirim2012}
\begin{barticle}
\bauthor{\bsnm{Y{\i}ld{\i}r{\i}m}, \binits{E.A.}}:
\batitle{On the accuracy of uniform polyhedral approximations of the copositive
  cone}.
\bjtitle{Optimization Methods and Software}
\bvolume{27}(\bissue{1}),
\bfpage{155}--\blpage{173}
(\byear{2012})
\doiurl{10.1080/10556788.2010.540014}
\end{barticle}
\endbibitem

\bibitem[\protect\citeauthoryear{Gvozdenovi{\'c} and
  Laurent}{2007}]{Gvozdenovic2007}
\begin{barticle}
\bauthor{\bsnm{Gvozdenovi{\'c}}, \binits{N.}},
\bauthor{\bsnm{Laurent}, \binits{M.}}:
\batitle{Semidefinite bounds for the stability number of a graph via sums of
  squares of polynomials}.
\bjtitle{Mathematical Programming}
\bvolume{110}(\bissue{1}),
\bfpage{145}--\blpage{173}
(\byear{2007})
\doiurl{10.1007/s10107-006-0062-8}
\end{barticle}
\endbibitem

\bibitem[\protect\citeauthoryear{Dickinson}{2010}]{Dickinson2010}
\begin{barticle}
\bauthor{\bsnm{Dickinson}, \binits{P.J.C.}}:
\batitle{An improved characterisation of the interior of the completely
  positive cone}.
\bjtitle{Electronic Journal of Linear Algebra}
\bvolume{20},
\bfpage{723}--\blpage{729}
(\byear{2010})
\doiurl{10.13001/1081-3810.1404}
\end{barticle}
\endbibitem

\bibitem[\protect\citeauthoryear{Laurent and Vargas}{2022}]{LaurentVargas2205}
\begin{barticle}
\bauthor{\bsnm{Laurent}, \binits{M.}},
\bauthor{\bsnm{Vargas}, \binits{L.F.}}:
\batitle{On the exactness of sum-of-squares approximations for the cone of
  $5\times5$ copositive matrices}.
\bjtitle{Linear Algebra and its Applications}
\bvolume{651},
\bfpage{26}--\blpage{50}
(\byear{2022})
\doiurl{10.1016/j.laa.2022.06.015}
{\href{https://arxiv.org/abs/2205.05381}{{arXiv:2205.05381}}}.
\bcomment{arXiv:2205.05381}
\end{barticle}
\endbibitem

\bibitem[\protect\citeauthoryear{Peyrl and Parrilo}{2008}]{PeyrlParrilo2008}
\begin{barticle}
\bauthor{\bsnm{Peyrl}, \binits{H.}},
\bauthor{\bsnm{Parrilo}, \binits{P.A.}}:
\batitle{Computing sum of squares decompositions with rational coefficients}.
\bjtitle{Theoretical Computer Science}
\bvolume{409}(\bissue{2}),
\bfpage{269}--\blpage{281}
(\byear{2008})
\doiurl{10.1016/j.tcs.2008.09.025}
\end{barticle}
\endbibitem

\bibitem[\protect\citeauthoryear{Kaltofen et~al.}{2012}]{Kaltofen2012}
\begin{barticle}
\bauthor{\bsnm{Kaltofen}, \binits{E.L.}},
\bauthor{\bsnm{Li}, \binits{B.}},
\bauthor{\bsnm{Yang}, \binits{Z.}},
\bauthor{\bsnm{Zhi}, \binits{L.}}:
\batitle{Exact certification in global polynomial optimization via
  sums-of-squares of rational functions with rational coefficients}.
\bjtitle{Journal of Symbolic Computation}
\bvolume{47}(\bissue{1}),
\bfpage{1}--\blpage{15}
(\byear{2012})
\doiurl{10.1016/j.jsc.2011.08.002}
\end{barticle}
\endbibitem

\bibitem[\protect\citeauthoryear{Magron and {Safey El Din}}{2021}]{Magron2021}
\begin{barticle}
\bauthor{\bsnm{Magron}, \binits{V.}},
\bauthor{\bsnm{{Safey El Din}}, \binits{M.}}:
\batitle{On exact {R}eznick, {H}ilbert--{A}rtin and {P}utinar's
  representations}.
\bjtitle{Journal of Symbolic Computation}
\bvolume{107},
\bfpage{221}--\blpage{250}
(\byear{2021})
\doiurl{10.1016/j.jsc.2021.03.005}
\end{barticle}
\endbibitem

\bibitem[\protect\citeauthoryear{Tura et~al.}{2018}]{Tura2018}
\begin{barticle}
\bauthor{\bsnm{Tura}, \binits{J.}},
\bauthor{\bsnm{Aloy}, \binits{A.}},
\bauthor{\bsnm{Quesada}, \binits{R.}},
\bauthor{\bsnm{Lewenstein}, \binits{M.}},
\bauthor{\bsnm{Sanpera}, \binits{A.}}:
\batitle{Separability of diagonal symmetric states: a quadratic conic
  optimization problem}.
\bjtitle{Quantum}
\bvolume{2},
\bfpage{45}
(\byear{2018})
\doiurl{10.22331/q-2018-01-12-45}
{\href{https://arxiv.org/abs/1706.09423}{{arXiv:1706.09423}}}.
\bcomment{arXiv:1706.09423}
\end{barticle}
\endbibitem

\bibitem[\protect\citeauthoryear{Doherty et~al.}{2002}]{Doherty2002}
\begin{barticle}
\bauthor{\bsnm{Doherty}, \binits{A.C.}},
\bauthor{\bsnm{Parrilo}, \binits{P.A.}},
\bauthor{\bsnm{Spedalieri}, \binits{F.M.}}:
\batitle{Distinguishing separable and entangled states}.
\bjtitle{Physical Review Letters}
\bvolume{88}(\bissue{18}),
\bfpage{187904}
(\byear{2002})
\doiurl{10.1103/PhysRevLett.88.187904}
\end{barticle}
\endbibitem

\bibitem[\protect\citeauthoryear{Doherty et~al.}{2004}]{Doherty2004}
\begin{barticle}
\bauthor{\bsnm{Doherty}, \binits{A.C.}},
\bauthor{\bsnm{Parrilo}, \binits{P.A.}},
\bauthor{\bsnm{Spedalieri}, \binits{F.M.}}:
\batitle{Complete family of separability criteria}.
\bjtitle{Physical Review A}
\bvolume{69}(\bissue{2}),
\bfpage{022308}
(\byear{2004})
\doiurl{10.1103/PhysRevA.69.022308}
\end{barticle}
\endbibitem

\bibitem[\protect\citeauthoryear{Shen and Zhong}{2026}]{PaperII}
\begin{botherref}
\oauthor{\bsnm{Shen}, \binits{J.}},
\oauthor{\bsnm{Zhong}, \binits{H.}}:
Bosonic symmetric extensions: resolving the star versus complete-graph
  question, with an exact two-level separation of the {PPT}-{DPS} hierarchy.
Companion paper, in preparation
(2026)
\end{botherref}
\endbibitem

\bibitem[\protect\citeauthoryear{Lov{\'a}sz}{1979}]{Lovasz1979}
\begin{barticle}
\bauthor{\bsnm{Lov{\'a}sz}, \binits{L.}}:
\batitle{On the {S}hannon capacity of a graph}.
\bjtitle{IEEE Transactions on Information Theory}
\bvolume{25}(\bissue{1}),
\bfpage{1}--\blpage{7}
(\byear{1979})
\doiurl{10.1109/TIT.1979.1055985}
\end{barticle}
\endbibitem

\bibitem[\protect\citeauthoryear{Schrijver}{1979}]{Schrijver1979}
\begin{barticle}
\bauthor{\bsnm{Schrijver}, \binits{A.}}:
\batitle{A comparison of the {D}elsarte and {L}ov{\'a}sz bounds}.
\bjtitle{IEEE Transactions on Information Theory}
\bvolume{25}(\bissue{4}),
\bfpage{425}--\blpage{429}
(\byear{1979})
\doiurl{10.1109/TIT.1979.1056072}
\end{barticle}
\endbibitem

\bibitem[\protect\citeauthoryear{Gr{\"o}tschel et~al.}{1981}]{Grotschel1981}
\begin{barticle}
\bauthor{\bsnm{Gr{\"o}tschel}, \binits{M.}},
\bauthor{\bsnm{Lov{\'a}sz}, \binits{L.}},
\bauthor{\bsnm{Schrijver}, \binits{A.}}:
\batitle{The ellipsoid method and its consequences in combinatorial
  optimization}.
\bjtitle{Combinatorica}
\bvolume{1}(\bissue{2}),
\bfpage{169}--\blpage{197}
(\byear{1981})
\doiurl{10.1007/BF02579273}
\end{barticle}
\endbibitem

\bibitem[\protect\citeauthoryear{Pe{\~n}a et~al.}{2007}]{Pena2007}
\begin{barticle}
\bauthor{\bsnm{Pe{\~n}a}, \binits{J.}},
\bauthor{\bsnm{Vera}, \binits{J.}},
\bauthor{\bsnm{Zuluaga}, \binits{L.F.}}:
\batitle{Computing the stability number of a graph via linear and semidefinite
  programming}.
\bjtitle{SIAM Journal on Optimization}
\bvolume{18}(\bissue{1}),
\bfpage{87}--\blpage{105}
(\byear{2007})
\doiurl{10.1137/05064401X}
\end{barticle}
\endbibitem

\bibitem[\protect\citeauthoryear{Dukanovic and Rendl}{2010}]{Dukanovic2010}
\begin{barticle}
\bauthor{\bsnm{Dukanovic}, \binits{I.}},
\bauthor{\bsnm{Rendl}, \binits{F.}}:
\batitle{Copositive programming motivated bounds on the stability and the
  chromatic numbers}.
\bjtitle{Mathematical Programming}
\bvolume{121}(\bissue{2}),
\bfpage{249}--\blpage{268}
(\byear{2010})
\doiurl{10.1007/s10107-008-0233-x}
\end{barticle}
\endbibitem

\bibitem[\protect\citeauthoryear{Peres}{1996}]{Peres1996}
\begin{barticle}
\bauthor{\bsnm{Peres}, \binits{A.}}:
\batitle{Separability criterion for density matrices}.
\bjtitle{Physical Review Letters}
\bvolume{77}(\bissue{8}),
\bfpage{1413}--\blpage{1415}
(\byear{1996})
\doiurl{10.1103/PhysRevLett.77.1413}
\end{barticle}
\endbibitem

\bibitem[\protect\citeauthoryear{Horodecki et~al.}{1996}]{Horodecki1996}
\begin{barticle}
\bauthor{\bsnm{Horodecki}, \binits{M.}},
\bauthor{\bsnm{Horodecki}, \binits{P.}},
\bauthor{\bsnm{Horodecki}, \binits{R.}}:
\batitle{Separability of mixed states: necessary and sufficient conditions}.
\bjtitle{Physics Letters A}
\bvolume{223}(\bissue{1--2}),
\bfpage{1}--\blpage{8}
(\byear{1996})
\doiurl{10.1016/S0375-9601(96)00706-2}
\end{barticle}
\endbibitem

\bibitem[\protect\citeauthoryear{Doherty et~al.}{2005}]{Doherty2005}
\begin{barticle}
\bauthor{\bsnm{Doherty}, \binits{A.C.}},
\bauthor{\bsnm{Parrilo}, \binits{P.A.}},
\bauthor{\bsnm{Spedalieri}, \binits{F.M.}}:
\batitle{Detecting multipartite entanglement}.
\bjtitle{Physical Review A}
\bvolume{71}(\bissue{3}),
\bfpage{032333}
(\byear{2005})
\doiurl{10.1103/PhysRevA.71.032333}
\end{barticle}
\endbibitem

\bibitem[\protect\citeauthoryear{Terhal et~al.}{2003}]{Terhal2003}
\begin{barticle}
\bauthor{\bsnm{Terhal}, \binits{B.M.}},
\bauthor{\bsnm{Doherty}, \binits{A.C.}},
\bauthor{\bsnm{Schwab}, \binits{D.}}:
\batitle{Symmetric extensions of quantum states and local hidden variable
  theories}.
\bjtitle{Physical Review Letters}
\bvolume{90}(\bissue{15}),
\bfpage{157903}
(\byear{2003})
\doiurl{10.1103/PhysRevLett.90.157903}
\end{barticle}
\endbibitem

\bibitem[\protect\citeauthoryear{Navascu{\'e}s et~al.}{2009}]{Navascues2009}
\begin{barticle}
\bauthor{\bsnm{Navascu{\'e}s}, \binits{M.}},
\bauthor{\bsnm{Owari}, \binits{M.}},
\bauthor{\bsnm{Plenio}, \binits{M.B.}}:
\batitle{Power of symmetric extensions for entanglement detection}.
\bjtitle{Physical Review A}
\bvolume{80}(\bissue{5}),
\bfpage{052306}
(\byear{2009})
\doiurl{10.1103/PhysRevA.80.052306}
\end{barticle}
\endbibitem

\bibitem[\protect\citeauthoryear{Werner}{1989}]{Werner1989}
\begin{barticle}
\bauthor{\bsnm{Werner}, \binits{R.F.}}:
\batitle{Quantum states with {E}instein--{P}odolsky--{R}osen correlations
  admitting a hidden-variable model}.
\bjtitle{Physical Review A}
\bvolume{40}(\bissue{8}),
\bfpage{4277}--\blpage{4281}
(\byear{1989})
\doiurl{10.1103/PhysRevA.40.4277}
\end{barticle}
\endbibitem

\bibitem[\protect\citeauthoryear{Horodecki et~al.}{2009}]{Horodecki2009}
\begin{barticle}
\bauthor{\bsnm{Horodecki}, \binits{R.}},
\bauthor{\bsnm{Horodecki}, \binits{P.}},
\bauthor{\bsnm{Horodecki}, \binits{M.}},
\bauthor{\bsnm{Horodecki}, \binits{K.}}:
\batitle{Quantum entanglement}.
\bjtitle{Reviews of Modern Physics}
\bvolume{81}(\bissue{2}),
\bfpage{865}--\blpage{942}
(\byear{2009})
\doiurl{10.1103/RevModPhys.81.865}
\end{barticle}
\endbibitem

\bibitem[\protect\citeauthoryear{G{\"u}hne and T{\'o}th}{2009}]{Guhne2009}
\begin{barticle}
\bauthor{\bsnm{G{\"u}hne}, \binits{O.}},
\bauthor{\bsnm{T{\'o}th}, \binits{G.}}:
\batitle{Entanglement detection}.
\bjtitle{Physics Reports}
\bvolume{474}(\bissue{1--6}),
\bfpage{1}--\blpage{75}
(\byear{2009})
\doiurl{10.1016/j.physrep.2009.02.004}
\end{barticle}
\endbibitem

\bibitem[\protect\citeauthoryear{Shen and Zhong}{2026}]{CertArchive2026}
\begin{botherref}
\oauthor{\bsnm{Shen}, \binits{J.}},
\oauthor{\bsnm{Zhong}, \binits{H.}}:
Certificate archive and standalone verifier for ``Explicit Separators for
  Consecutive Levels of {P}arrilo's Sum-of-Squares Hierarchy over the
  Copositive Cone''.
Zenodo.
Source repository:
  \url{https://github.com/Mercury0828/parrilo-copositive-separators}
(2026).
\doiurl{10.5281/zenodo.22134333}
\end{botherref}
\endbibitem

\end{thebibliography}
\end{document}